\definecolor{Darkblue}{rgb}{0,0,0.4}
\definecolor{Brown}{cmyk}{0,0.81,1.,0.60}
\definecolor{Purple}{cmyk}{0.45,0.86,0,0}
\newcommand{\mydriver}{hypertex}
 \renewcommand{\mydriver}{pdftex}
\newcommand{\lref}[2][]{\hyperref[#2]{#1~\ref*{#2}}}
\newtheorem{theorem}{Theorem}[section]
\newtheorem{lemma}[theorem]{Lemma}
\newtheorem{claim}[theorem]{Claim}
\theoremstyle{definition}
\newcommand{\qedsymb}{\hfill{\rule{2mm}{2mm}}}
\renewenvironment{proof}{\begin{trivlist} \item[\hspace{\labelsep}{\bf
\noindent Proof.\/}] }{\qedsymb\end{trivlist}}%
        {\hspace*{\fill}$\Box$\par\vspace{4mm}}
\newenvironment{MyEqn}[1]{\setlength\arraycolsep{2pt}\begin{eqnarray*}
#1}{\end{eqnarray*}}%
\newcommand{\bs}[1]{{#1}}
\newcommand{\pr}[1]{{\rm Pr} \left[ #1 \right]}
\newcommand{\ex}[1]{{\rm E} \left[ #1 \right]}
\newcommand{\expar}[1]{{\rm E} [ #1 ]}
\newcommand{\opt}[1]{\mathsf{Opt(#1)}}
\newcommand{\Opt}{\mathsf{Opt}}
\newcommand{\out}{\mathrm{outliers}}
\newcommand{\profit}{\mathsf{profit}}
\newcommand{\flow}{\mathsf{flow}}
\newcommand{\free}{\mathsf{free}}
\newcommand{\charge}{\mathsf{charge}}
\newcommand{\MySkip}[1]{}
\newcommand{\e}{\epsilon}
\newcommand{\ts}{\textstyle}
\newcommand{\initOneLiners}{%
    \setlength{\itemsep}{0pt}
    \setlength{\parsep }{0pt}
    \setlength{\topsep }{0pt}
}
\newenvironment{OneLiners}[1][\ensuremath{\bullet}]
    {\begin{list}
        {#1}
        {\initOneLiners}}
    {\end{list}}
\begin{document}

\title{Scheduling with Outliers} \author{%
  Anupam Gupta\thanks{Computer Science Department, Carnegie Mellon
    University. Supported in part by NSF awards CCF-0448095 and
    CCF-0729022, and an Alfred P.~Sloan Fellowship.}%
  \and%
  Ravishankar Krishnaswamy$^*$%
  \and%
  Amit Kumar\thanks{Department of Computer Science \& Engineering,
    Indian Institute of Technology, Hauz Khas, New Delhi, India -
    110016.  Work partly done at MPI, Saarbr\"{u}cken, Germany.}%
  \and%
  Danny Segev
  \thanks{Sloan School of Management, Massachusetts Institute of Technology. Supported in part by NSF awards CCF-0448095 and
    CCF-0729022, and an Alfred P.~Sloan Fellowship.}
}
\begin{titlepage}
\def\thepage{}
\thispagestyle{empty}

\date{}
\maketitle

\begin{abstract}
  \medskip\noindent In classical scheduling problems, we are given jobs
  and machines, and have to schedule all the jobs to minimize some
  objective function. What if each job has a specified profit, and we
  are no longer required to process all jobs---we can schedule any
  subset of jobs whose total profit is at least a (hard) target profit
  requirement, while still approximately minimizing the objective
  function?

  \medskip\noindent We refer to this class of problems as
  \emph{scheduling with outliers}.  This model was initiated by Charikar
  and Khuller (SODA'06) on the minimum max-response time in broadcast
  scheduling. In this paper, we consider three other well-studied
  scheduling objectives: the generalized assignment problem, average
  weighted completion time, and average flow time, and provide LP-based
  approximation algorithms for them.
  Our main results are:
  \begin{itemize}
  \item For the \emph{minimum average flow time} problem on identical
    machines, we give a logarithmic approximation algorithm for the case
    of unit profits based on rounding an LP relaxation; we also show a
    matching integrality gap.  While the LP relaxation has been used
    before, the rounding algorithm is a delicate one.

  \item For the \emph{average weighted completion time} problem on
    unrelated machines, we give a constant-factor approximation. The
    algorithm is based on randomized rounding of the time-indexed LP
    relaxation strengthened by the knapsack-cover inequalities.

  \item For the \emph{generalized assignment problem} with outliers, we
    give a simple reduction to GAP without outliers to obtain an
    algorithm whose makespan is within $3$ times the optimum makespan,
    and whose cost is at most $(1+\epsilon)$ times the optimal cost.
  \end{itemize}
\end{abstract}



\medskip \medskip
\end{titlepage}


\section{Introduction}

In classical scheduling problems, we are given jobs and machines, and
have to schedule all the jobs to minimize some objective function.
\emph{What if we are given a (hard) profit constraint, and merely want
  to schedule a ``profitable'' subset of jobs?} In this paper, we
consider three widely studied scheduling objectives--- makespan,
weighted average completion time, and average flow-time---and give
approximation algorithms for these objectives in this model of
scheduling with outliers.

Formally, the \emph{scheduling with outliers} model is as follows: given
an instance of some classical scheduling problem, imagine each job $j$
also comes with a certain \emph{profit} $\pi_j$. Given a target profit
$\Pi$, the goal is now to pick a subset of jobs $S$ whose total profit $\sum_{j
  \in S} \pi_j$ is at least $\Pi$, and to schedule them to minimize the
underlying objective function. (Equivalently, we could define the
``budget'' $B = \sum_j \pi_j - \Pi$, and discard a subset of ``outlier''
jobs whose total profit is at most $B$.)
Note that this model introduces two different sources of computational
difficulty: on one hand, the task of choosing a set of jobs to achieve
the profit threshold captures the knapsack problem; on the other hand,
the underlying scheduling problem may itself be an intractable problem.

The goal of picking some subset of jobs to process as efficiently as
possible, so that we attain a minimum level of profit or ``happiness'',
is a natural one. In fact, various problems of scheduling with job
rejections have been studied previously: a common approach, studied by
Bartal et al.~\cite{BLMSS96}, has been to study ``prize-collecting''
scheduling problems (see, e.g.,~\cite{EKKSUW,BBCD03,ENW02,HSW03}), where
we attempt to minimize the scheduling objective \emph{plus the total
  profit of unscheduled jobs}. One drawback of this prize-collecting
approach is that we lose fine-grained control on the individual
quantities---the scheduling cost, and the lost profit---since we
na\"{\i}vely sum up these two essentially incomparable quantities.  In
fact, this makes our model (with a hard target constraint) interesting
also from a technical standpoint: while we can reduce the
prize-collecting problem to the target profit problem by guessing the
lost profit in the optimal prize-collecting solution, reductions in the
opposite direction are known only for a handful of problems with very
restrictive structure (see \lref[Section]{sec:related-work} for a
discussion).

To the best of our knowledge, the model we investigate was introduced by
Charikar and Khuller~\cite{CharikarK06}, who considered the problem of
minimizing the maximum response time in the context of broadcast
scheduling;
one of our results is to resolve an open problem from their paper.
Scheduling problems with outliers were also implicitly raised in the
context of model-based optimization with budgeted probes: Guha and
Munagala~\cite{GuhaM07} gave an LP-based algorithm for completion-time
scheduling with outliers which violated budgets by a constant
factor---we resolve an open problem in their paper by avoiding any
violation of the budgets.

\subsection{Our results}
\label{sec:our-results}


\medskip\noindent\textbf{GAP and makespan.} As a warm-up, we study the
Generalized Assignment Problem, a generalization of the makespan
minimization problem on unrelated machines, in \lref[Section]{sec:gap}.
For this problem, we give a simple reduction to the non-outlier version
of this problem to get a solution approximating the makespan and cost by
factors of $3$ and $(1+\e)$ respectively.  Recall that the best
non-outlier  guarantee is a
$2$-approximation~\cite{ShmoysT93} without violating the cost---however,
it is easy to show that in the presence of outliers the $(1+\e)$
loss in cost are unavoidable
unless ${\rm P} = {\rm NP}$.

\medskip\noindent\textbf{Average completion time.} We then consider the
problem of minimizing the sum of weighted completion times on unrelated
machines with release dates in \lref[Section]{sec:wsoct}.
\begin{theorem}
  \label{thm:intro-wjcj}
  For $R|r_j, \out|\sum_j w_j C_j$, there is a
  randomized $O(1)$-approximation algorithm.
\end{theorem}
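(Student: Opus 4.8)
The plan is to start from the natural time-indexed linear programming relaxation for $R|r_j|\sum_j w_j C_j$ augmented with outliers, and then strengthen it with knapsack-cover inequalities on the profit constraint so that the LP is robust to fixing variables to zero. Concretely, for each machine $i$, job $j$, and time $t \ge r_j$, we have an assignment variable $x_{ijt}$ denoting that $j$ starts (or completes) at time $t$ on machine $i$; we also have a selection variable $y_j = \sum_{i,t} x_{ijt}$ indicating that $j$ is scheduled at all. The objective is $\sum_{i,j,t} w_j \cdot t \cdot x_{ijt}$ (or its completion-time variant), subject to the machine-capacity constraints $\sum_j \sum_{t' : \text{$j$ runs at $t'$}} x_{ijt'} \le 1$ for every $(i,t)$, the assignment constraints $y_j \le 1$, and crucially the profit constraint $\sum_j \pi_j y_j \ge \Pi$. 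The knapsack-cover (KC) inequalities say: for every set $A$ of jobs that does \emph{not} by itself meet the target, the residual target $\Pi - \pi(A)$ must be covered by the jobs outside $A$, with each job's contribution truncated at the residual, i.e. $\sum_{j \notin A} \min\{\pi_j,\ \Pi - \pi(A)\} \cdot y_j \ge \Pi - \pi(A)$. This family of exponentially many inequalities can be handled in the usual way (one does not separate them; instead one solves the plain LP, and the KC constraints are satisfied automatically for the relevant set $A$ produced by the algorithm).

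The rounding proceeds in two conceptual phases. First, \textbf{job selection}: solve the LP, let $A$ be the set of jobs with $y_j$ large (say $y_j \ge 1/2$), and include all of them in the schedule deterministically — the fractional mass they were denied by truncation is at most a constant, so the KC inequality for this $A$ guarantees that the remaining fractional $y$-mass on jobs outside $A$, after scaling up by the constant $1/(1-1/2)=2$ or so, still covers the residual profit $\Pi - \pi(A)$. Then on the jobs outside $A$ we run a \emph{dependent/independent randomized rounding} that picks each such job $j$ with probability proportional to $y_j$ (scaled up by $O(1)$), and conditioned on picking $j$, assigns it to machine $i$ and slot $t$ with probability $x_{ijt}/y_j$; a concentration argument (Chernoff, using that each truncated contribution is at most the residual target) shows that with constant probability the total profit of the picked jobs is at least $\Pi - \pi(A)$, hence together with $A$ we meet $\Pi$. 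Second, \textbf{scheduling the chosen jobs}: once a subset $S \supseteq A$ with $\pi(S) \ge \Pi$ is fixed together with a fractional assignment of each $j \in S$ to $(i,t)$ pairs, we are essentially in the non-outlier world, and we can invoke (a suitably massaged version of) the known $O(1)$-approximation for $R|r_j|\sum_j w_j C_j$ via time-indexed LP rounding — e.g. Skutella-style or Schulz–Skutella randomized $\alpha$-point / interval rounding — to turn the fractional schedule on $S$ into an integral one losing only a constant factor in the weighted completion time, with release dates respected up to a constant via the standard time-scaling trick.

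The expected objective of the randomized rounding is within $O(1)$ of the LP value by linearity (each job's contribution $w_j C_j$ blows up by the scaling factor on $x$ and by the constant from the scheduling step); a Markov argument plus the constant success probability for the profit event, combined by conditioning or by repeating the rounding $O(\log(1/\delta))$ times and taking the best feasible outcome, yields a randomized algorithm that with constant probability produces a feasible schedule (profit $\ge \Pi$) of cost $O(1)\cdot\Opt$. One subtlety worth flagging: the profit event and the cost must be handled in a correlated way — the cleanest route is to argue that the probability of (feasible profit) AND (cost $\le c\cdot\Opt$) is a positive constant, by showing $\Pr[\text{profit short}] \le 1/4$ via concentration and $\Pr[\text{cost} > c\,\Opt] \le 1/4$ via Markov, and union-bounding.

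The main obstacle, and where the real work lies, is the interaction between the two phases: the scheduling step needs the fractional assignment of the \emph{selected} jobs to still be (approximately) feasible for the machine-capacity constraints, but randomized selection perturbs those load constraints. The fix is to keep the selection rounding and the assignment rounding \emph{coupled} — one does not first choose $S$ and then re-solve; rather one performs a single rounding of the $x_{ijt}$ variables (scaled by $O(1)$) that simultaneously decides selection and placement, so that the expected machine load in every window $[r_j, t]$ is controlled, and then one applies the scheduling-rounding theorem to the realized fractional (actually now $\{0,1\}$-ish) solution, eating the capacity overflow into the constant via the standard "each machine's total fractional load in a prefix bounds the integral completion times" accounting. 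Getting the KC inequalities, the Chernoff bound on profit, and the load accounting to all use the \emph{same} scaling constant — without the constants compounding into something super-constant — is the delicate bookkeeping that the full proof must carry out.
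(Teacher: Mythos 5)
Your proposal is essentially the paper's proof: the same time-indexed LP strengthened with knapsack-cover inequalities, the same single coupled rounding in the style of Schulz--Skutella (take $\mathcal{A}^*=\{j:\widehat y_j\ge 1/2\}$ deterministically, mark each remaining job at an $(i,t)$ pair with probability $2\widehat x_{ijt}/p_{ij}$, then list-schedule by marked time on each machine), the same Chernoff lower-tail bound on the truncated profits, and the same Markov-plus-constant-success-probability combination. Your closing worry about coupling selection with placement is exactly resolved the way you suggest, and is what the paper does.

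One concrete point needs fixing: your claim that the KC inequalities need not be separated because ``the KC constraints are satisfied automatically for the relevant set $\mathcal{A}$ produced by the algorithm'' is false. For a single job of profit $M$ and target $\Pi=1$, the plain LP sets $y_j=1/M$, the algorithm's set is $\mathcal{A}^*=\emptyset$, and the KC inequality for $\emptyset$ reads $\min\{M,1\}\cdot y_j\ge 1$, which is violated --- and without it the expectation bound $\mathrm{E}[Z]\ge 2(\Pi-\Pi(\mathcal{A}^*))$ underlying your Chernoff argument has no support. The standard remedy, which the paper uses, is to run the ellipsoid method with a round-or-separate oracle: given a candidate solution, check the KC inequality only for the current threshold set $\{j:\widehat y_j\ge 1/2\}$ (together with the guessed cost bound $\sum_j w_j\widehat C_j\le 2\,\Opt$); if it is violated, feed it back as a separating hyperplane, otherwise round. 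With that substitution your argument goes through and matches the paper's.
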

Our algorithm is based on approximately solving the time-indexed LP
relaxation of Schulz and Skutella~\cite{SS99} strengthened with
\emph{knapsack-cover} inequalities followed by randomized rounding.  We
improve on this result to obtain an FPTAS for \emph{unweighted} sum of
completion times on a constant number of machines. (The best non-outlier
upper bound for $R|r_j|\sum_j w_j C_j$ is a $2$-approximation due to
Skutella~\cite{Skutella01}; this problem is also known to be
APX-hard~\cite{LST90}.)

\medskip\noindent\textbf{Average flow time.} This is the technical heart
of the paper. The problem is to minimize the average
(preemptive) flow time on identical machines $P |r_{j}, pmtn, \out | \sum
F_j$. Our main result is:
\begin{theorem}
  \label{thm:intro-flow}
  For $P |r_{j}, pmtn, \out |\sum F_j$, when all jobs have
  unit profits, there is an $O(\log P)$-approximation algorithm, where
  $P$ is the ratio between the largest and smallest processing times.
\end{theorem}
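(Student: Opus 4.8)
The plan is to round a time-indexed LP relaxation, incurring a single $O(\log P)$ factor in the rounding. I would introduce, for each job $j$ and integer time $t \ge r_j$, a variable $x_{jt} \ge 0$ for the amount of $j$ processed during $[t,t+1)$, together with a selection variable $z_j \in [0,1]$; impose $\sum_{t \ge r_j} x_{jt} = p_j z_j$ (a selected job is run in full), $\sum_j x_{jt} \le m$ for every $t$ (machine capacity), and $\sum_j z_j \ge k$ (the target count, using that profits are unit); and take as objective the standard fractional flow-time lower bound $\sum_j \big( \sum_t x_{jt}\,\tfrac{t - r_j}{p_j} + \tfrac{p_j}{2}\, z_j \big)$. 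A short exchange argument shows that the optimal integral schedule of any set $S$ with $|S| \ge k$ induces a feasible $(x,z)$ of objective at most $\sum_{j \in S} F_j$, so the LP value is at most $\Opt$. (As the LP is pseudo-polynomial in size, I would first coarsen time into the usual geometric intervals $[2^s,2^{s+1})$, losing a factor $1+\e$; and, since this strengthened objective already has an $\Omega(\log P)$ integrality gap, an $O(\log P)$-factor is the best one can target along this route.)

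I would then fix an optimal $(x,z)$ and partition the jobs into the $L+1 = O(\log P)$ size classes $\mathcal C_i = \{ j : p_j \in [2^i, 2^{i+1})\,p_{\min} \}$, and round in two phases. \emph{Selection.} Commit to an integral set $S$, $|S| \ge k$: within each class the LP ``buys'' $Z_i = \sum_{j \in \mathcal C_i} z_j$ jobs, and I would pick $S$ by combining a threshold on $z_j$ with a greedy preference for jobs of small per-unit LP cost $\big(\sum_t x_{jt}\tfrac{t-r_j}{p_j} + \tfrac{p_j}{2}z_j\big)/z_j$, making up the shortfall of at most $L+1$ jobs from the fractional parts. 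The goal of this choice is to certify, class by class and time window by time window, that the committed volume exceeds the volume the fractional solution places there by at most a constant factor. \emph{Scheduling.} Schedule $S$ with a greedy rule---shortest-remaining-processing-time, or a size-class-aware variant---on the $m$ machines, and bound its flow time by the classical size-class machinery for average flow time: at any instant only $O(\log P)$ rounded sizes are active; within one class the jobs behave like equal-length jobs, for which the greedy rule is $O(1)$-competitive against the best schedule of $S$; and the best schedule of $S$ is in turn $O(1)$-comparable to the LP, using the volume control from the first phase together with the trivial bound $\sum_j F_j \ge \sum_{j \in S} p_j$. The $O(1)$ loss of the first phase times the $O(\log P)$ loss of the second gives the theorem.

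The step I expect to be the main obstacle---and the reason the rounding is delicate---is the interface between the two phases. The LP runs each job only to extent $z_j$, so a job that is cheap fractionally (its work scheduled early, but only a $z_j$-fraction of it) becomes dangerous once we must run it in full: the residual $p_j(1-z_j)$ units have to be absorbed somewhere, and if several such jobs of the same size class crowd into one time window the naive bound degrades by more than $\log P$. The selection phase must therefore choose $S$ so that its committed volume is, class by class and window by window, within a constant of what $(x)$ already schedules there, \emph{while} exactly meeting the hard count $|S| \ge k$; these two requirements pull against each other (forcing the count up drags in extra, possibly ``heavy'', jobs), and reconciling them so that the greedy schedule's queues stay short is the heart of the argument. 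The remaining ingredients---validity of the relaxation, the time coarsening, and the within-class analysis of the greedy rule---are routine adaptations of known flow-time techniques.
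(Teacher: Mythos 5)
Your overall architecture---the same cardinality-constrained time-indexed LP with the fractional flow-time objective, the $\Omega(\log P)$ gap observation, and a partition into $O(\log P)$ size classes---matches the paper's setup. But the proposal has a genuine gap at exactly the step you flag as ``the heart of the argument,'' and the paper's resolution of that step is structurally different from what you sketch. You propose a \emph{select-then-reschedule} scheme: commit to $S$ via a threshold on $z_j$ plus a greedy preference for small per-unit LP cost, then run SRPT on $S$ and compare to the LP via class-by-class, window-by-window volume control. Two problems. First, the selection rule as stated does not meet the count: a threshold on $z_j$ can lose far more than one job per class (e.g.\ all $z_j$ just below the threshold), so the shortfall is not $L+1$; to lose only $O(1)$ per class you must select about $\lfloor \sum_{j\in\mathcal C_i} z_j\rfloor$ jobs from class $i$, i.e.\ promote many fractional jobs to full jobs. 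Second, once you do that, the per-unit-cost greedy gives no control on \emph{where} the promoted volume lands: the cheap-per-unit jobs in a class can all be released in a short window, and their full processing requirements $p_j$ (rather than $p_j z_j$) cannot be packed near where the LP placed their fractions, so the window-by-window volume bound you need is simply not established. The paper explicitly discusses why such local/greedy promotion strategies have a bad algorithmic gap (a job can repeatedly grow to $2/3$ and shrink back to $1/3$, inflating its flow time unboundedly).

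The paper's actual rounding avoids committing to $S$ up front. It first normalizes the LP solution to be non-alternating and packed, then, class by class, transforms the \emph{fractional schedule itself}: it looks at the first fractionally scheduled job $j_1$ and the maximal window $j_2,\dots,j_q$ of same-class jobs with larger processing times ending at the first smaller job $j_{q+1}$, and either grows $j_1$ to completion over that window and its free time, or (if there is not enough room) deletes the prefix and advances $j_{q+1}$ into its place; release-date violations are repaired by a uniform right-shift of $2\cdot 2^k$ per class. The increase in fractional flow time is bounded by a charging scheme in which each time point pays at most once per class, so the total increase is $O(1)\cdot\mathcal P$ per class and $O(\log P)\cdot\mathcal P \le O(\log P)\cdot\Opt$ overall; a final stage adds at most two jobs per class to restore the exact count. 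This window-based choice of which job to complete---and the accompanying charge to the fractional processing volume---is the missing idea in your proposal; without it (or a worked-out substitute for your volume-control claim), the argument does not go through.
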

This comes close to matching the best known result of $O(\log \min\{P,
n/m\})$ for the non-outlier version due to Leonardi and Raz~\cite{LeoR}.
However, this problem seems to be much harder with outliers, as we get
the same approximation even on a single machine, in contrast to the
non-outlier single-machine case (which can be solved optimally). We
show our approach is tight, as the LP relaxation we use has an $\Omega( \log P)$
integrality gap.


The algorithm rounds a linear-programming relaxation originally
suggested in~\cite{GargK06}; however, we need new ideas for the rounding
algorithm over those used by~\cite{GargK06}. At a high-level, here is
the idea behind our rounding algorithm: the LP might have scheduled each
job to a certain fractional amount, and hence we try to swap ``mass''
between jobs of near-equal processing times in order to integrally
schedule a profitable subset of jobs. However, this swapping operation
is a delicate one, and merely swapping mass locally between nearby jobs
has a bad algorithmic gap.  Furthermore, we need to handle jobs that are
only approximately equal in size, which leads to additional
difficulties. (For a more detailed high-level sketch of these issues,
please read \lref[Section]{sec:plan}.)





\subsection{Related work}
\label{sec:related-work}

\noindent\textbf{Scheduling with rejections.}
As mentioned above, previous papers on this topic considered the
``prize-collecting'' version which minimizes the scheduling objective
plus the total profit of unscheduled jobs; their techniques do not seem
to extend to scheduling with outliers, in which we have a strict budget
on the total penalty of rejected jobs. Bartal et al.~\cite{BLMSS96}
considered offline and online makespan minimization and gave
best-possible algorithms for both cases. Makespan minimization with
preemptions was investigated by~\cite{HSW03,Seiden01}. Epstein et
al.~\cite{ENW02} examined scheduling unit-length jobs. Engels et
al.~\cite{EKKSUW} studied the prize-collecting version of weighted
completion-time minimization
(on single or parallel machines), and gave PTASs or constant-factor
approximations for these problems; they also proposed a general
framework for designing algorithms for such problems.


\medskip\noindent\textbf{Outlier versions of other problems.} Also
called {\em partial-covering} problems, these have been widely studied:
e.g., the $k$-MST problem~\cite{Garg05}, the $k$-center and facility
location problem~\cite{CKMN01} and the $k$-median problem with
outliers~\cite{Chen08}, partial vertex cover (e.g.,~\cite{Mestre05} and
references therein) and $k$-multicut~\cite{GNS06,LevinSegev06}. Chudak
et al.~\cite{CRW04} distilled ideas of Jain and Vazirani~\cite{JV01} on
converting ``Lagrange-multiplier preserving'' algorithms for
prize-collecting Steiner tree into one for $k$-MST; K{\"o}nemann et
al.~\cite{KPS06} gave a general framework to convert prize-collecting
algorithms into algorithms for outlier versions (see
also~\cite{Mestre08}).  We cannot use these results, since it is not
clear how to make the algorithms for prize-collecting scheduling
problems to also be Langrange-multiplier preserving, or whether the
above-mentioned framework is applicable in scheduling-related scenarios.




\section{GAP and Makespan} \label{sec:gap}

As a warm-up, we consider the generalized assignment problem, which is an
extension of minimizing makespan on unrelated machines with outliers.
Formally, the instance ${\cal I}$ has $m$ machines and $n$ jobs. Each
job $j$ has a processing time of $p_{ij}$ on machine $i$, an assignment
cost of $c_{ij}$, and a profit of $\pi_j$. Given a profit requirement
$\Pi$, cost bound $C$ and makespan bound $T$, the goal is to obtain a
feasible schedule satisfying these requirements (or to declare
infeasibility). Of course, since the problem is NP-hard, we look at finding solutions where we
violate the cost and makespan bounds, but not the (hard) profit
requirement. We now show how to reduce this problem to the non-outlier
version studied earlier, while incurring small additional losses in the
approximation guarantees.

\begin{theorem}
  \label{thm:gap-main}
  Given an instance $\cal I$ of GAP-with-outliers with optimal cost $C$,
  and makespan $T$, there is a polynomial time algorithm to output an
  assignment with cost $(1+\epsilon)C$ and makespan $3T$.
\end{theorem}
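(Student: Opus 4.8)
The plan is to reduce the problem to ordinary (non-outlier) GAP, where the Shmoys--Tardos rounding~\cite{ShmoysT93} turns a fractional assignment of makespan $T$ and cost $C$ into an integral one of makespan $2T$ and cost at most $C$, and to pay for the (provably necessary) $(1+\e)$ cost blowup with a cheap enumeration. First I would guess the set $E$ of jobs that the optimum assigns at per-job cost $\ge \e C$, together with the machine of each; since the optimal cost is $C$ we have $|E|\le 1/\e$, so there are only $n^{O(1/\e)}$ guesses, and it suffices to reason about the correct one. Fixing these assignments uses load $\le T$ on every machine and cost $\le C$ in total, and leaves a residual instance on the other jobs with profit target $\Pi' = \Pi - \pi(E)$, reduced per-machine makespan budgets ($T$ minus the guessed load $g_i$ on machine $i$), and in which we may delete every assignment $(i,j)$ with $c_{ij}\ge\e C$ (the optimum uses none on residual jobs). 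For this residual instance I would solve the assignment LP --- variables $x_{ij}\ge 0$, $\sum_i x_{ij}\le 1$, per-machine load at most the residual budget, the covering constraint $\sum_j \pi_j \sum_i x_{ij}\ge\Pi'$, $x_{ij}=0$ whenever $p_{ij}>T$ or $c_{ij}\ge\e C$, minimizing $\sum_{ij}c_{ij}x_{ij}$ --- which for the correct guess is feasible with value at most $C-c(E)$.

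To round the residual LP I would add a dummy machine $0$ with $p_{0j}=c_{0j}=0$ and the equality $\sum_{i\ge 0}x_{ij}=1$, so that every job becomes fully assigned and we face a plain GAP instance, and then run the Shmoys--Tardos filtering and bipartite matching, with the single modification that the ``slots'' on machine $0$ are built by listing jobs in \emph{decreasing} order of profit $\pi_j$. As usual, each real machine then receives integral residual load at most (one job, of size $\le T$) plus its residual budget $T-g_i$; adding back $g_i$, its total load is at most $2T$. The cost of the matching is at most the LP value, hence at most $C-c(E)$. Crucially, the sortedness of machine $0$'s slots forces the total profit of the jobs routed to machine $0$ (the discarded jobs) to exceed the LP's ``discarding slack'' $\sum_j\pi_j(1-\sum_{i\ge 1}x_{ij})$, which is at most the total residual profit minus $\Pi'$, by no more than the profit of the \emph{single} job landing in machine $0$'s first slot. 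Re-inserting that one job onto any real machine where it had positive LP mass (necessarily with $p_{ij}\le T$ and $c_{ij}<\e C$) adds $\le T$ to one machine's load and $<\e C$ to the cost, and restores the missing profit. Thus the residual schedule has profit $\ge\Pi'$, makespan $\le 3T$, and cost $\le (C-c(E))+\e C$; re-adding the guessed assignments gives profit $\ge\Pi$, makespan $\le 3T$, and cost $\le(1+\e)C$.

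The enumeration bound and the Shmoys--Tardos guarantees are routine and would simply be cited. The step needing genuine care is the profit accounting for the rounding: verifying that ordering machine $0$'s slots by profit really confines the discarded-profit overshoot to a single job, and handling the corner cases where that first-slot job lies entirely on machine $0$ (so no re-insertion is possible or needed, and one checks the slack bound directly) or is split between the first two slots. I would also watch the ``$+T$'' bookkeeping, ensuring that exactly one real machine ever absorbs the re-inserted job, so that the makespan stays $3T$ rather than degrading to $4T$.
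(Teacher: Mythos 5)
Your proposal is correct and follows essentially the same route as the paper: treat profit as load on an extra machine so that Shmoys--Tardos overshoots the discard budget by at most one job's profit, repair that shortfall by re-inserting a single job, and guess the $O(1/\epsilon)$ expensive assignments so that the re-inserted job costs at most $\epsilon C$. The only difference is cosmetic: the paper gets the one-job profit-overshoot bound by black-boxing the Shmoys--Tardos guarantee on a virtual ``profit machine'' with $p_{(m+1)j}=\pi_j$ and makespan budget $\sum_j\pi_j-\Pi$, whereas you re-derive the same fact by sorting the dummy machine's slots by profit inside the rounding.
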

\begin{proof}
  Given the instance $\cal I$, construct the following instance ${\cal
    I}'$ of the standard GAP (where there are no profits or outliers).
  There are $m+1$ machines: machines $1,2,\ldots, m$ are the same as
  those in ${\cal I}$, while machine $m+1$ is a ``virtual profit
  machine''. We have $n$ jobs, where job $j$ has a processing time of
  $p_{ij}$ and an assignment cost of $c_{ij}$ when scheduled on machine
  $i$ (for $1 \leq i \leq m$). If job $j$ is scheduled on the virtual
  machine $m+1$, it incurs a processing time of $\pi_j$ and cost zero:
  i.e., $p_{(m+1) j} = \pi_j$ and $c_{(m+1) j} = 0$. For this instance
  $\cal I'$, we set a cost bound of $C$, makespan bound of $T$ for all
  machines $1 \leq i \leq m$, and a makespan bound of $T_{vpm} :=
  (\sum_{j=1}^{n} \pi_j) - \Pi$ for the virtual profit machine. Note
  that any feasible solution for ${\cal I}$ is also feasible for ${\cal
    I}'$, with the outliers being scheduled on the virtual profit
  machine, since the total profit of the outliers is at most $T_{vpm} =
  (\sum_{j=1}^{n} \pi_j) - \Pi$.

  We can now use the algorithm of Shmoys and Tardos~\cite{ShmoysT93}
  which guarantees an assignment $\mathbf{S}$ for the GAP instance
  ${\cal I}'$ with the following properties: \emph{(a)} The cost of
  assignment $\mathbf{S}$ is at most $C$, \emph{(b)} the makespan
  induced by $\mathbf{S}$ on machine $i$ (for $1 \leq i \leq m$) is at
  most $T$ + $\min\{\max_{j} p_{ij}, T\}$, and \emph{(c)} The makespan
  of $\mathbf{S}$ on the virtual machine $m+1$ is at most
  $((\sum_{i=1}^{n} \pi_j) - \Pi) + \max_{j} \pi_{j}$.

  Note that this assignment ${\bf S}$ is \emph{almost} feasible for the
  outlier problem ${\cal I}$---the makespan on any real machine is at
  most $T + \max_{j} p_{ij}$, the assignment cost is at most
  $C$---however, the profit of the scheduled jobs is only guaranteed to
  be at least $\Pi - \max_{j} \pi_j$. But it is easy to fix this
  shortcoming: we choose a job $j'$ assigned by $\mathbf{S}$ to the
  virtual machine which has the largest profit, and schedule $j'$ on the
  machine where it has the least processing time. Now the modified
  assignment has cost at most $C + \max_{ij'} c_{ij'}$, makespan at most
  $T + 2 \min\{\max_{j} p_{ij}, T\}$, and the total profit of the
  scheduled jobs is at least $\Pi$.  (We assume that any job $j$ where $\min_i p_{ij} > T$ has already been discarded.)
  This is almost what we want, apart from the cost guarantee. So suppose we ``guess'' the
  ${1}/{\epsilon}$ most expensive assignments in OPT (in time
  $O(mn^{1/\epsilon})$), and hence we can focus only on the jobs having
  $c_{ij} \leq \epsilon C$ for all possible remaining assignments.  Now
  the cost of the assignment is $C + \max_{ij'} c_{ij'} \leq C(1 +
  \epsilon)$, and the makespan is at most $3T$. This completes the proof.
\end{proof}
In fact, the $(1 + \epsilon)$ loss in cost is inevitable since we can reduce the knapsack problem to the single machine makespan minimization with outliers problem: values of items become profits of jobs, and their weights become the assignment cost; the weight budget is the cost budget, and the required value is the required profit.
As for the makespan guarantee, the $3/2$-hardness of Lenstra et al.~\cite{LST90} carries over.

\section{Weighted Sum of Completion Times}
\label{sec:wsoct}

We now turn our attention to average completion time---in particular, to
$R|r_j, \out|\sum_j w_j C_j$. The main result of this section is a
constant factor approximation for this problem. Not surprisingly, the
integrality gap of standard LP relaxations is large\footnote{Implicit in
  the work of Guha and Munagala~\cite{GuhaM07} is an algorithm which
  violates the profit requirement by a constant factor; they also
  comment on the integrality gap, and pose the problem of avoiding this
  violation.}, and hence we strengthen the time-indexed formulation with
the so-called knapsack-cover inequalities~\cite{CFLP00,Wolsey75}. We
show that a randomized rounding scheme similar to that of Schulz and
Skutella~\cite{SS99} gives us the claimed guarantees on the objective
function, and while preserving the profit requirements with constant
probability.


\subsection{A Constant  Approximation for Weighted Sum of Completion Times}
We have a collection of $m$ machines and $n$ jobs, where each
job $j$ is associated with a profit $\pi_j$, a weight $w_j$, and a
release date $r_j$. When job $j$ is scheduled on machine $i$, it incurs
a processing time of $p_{ij}$. Given a parameter $\Pi > 0$, the
objective is to identify a set of jobs $S$ and a feasible schedule such
that $\sum_{j \in S} \pi_j \geq \Pi$ and such that $\sum_{j \in S} w_j
C_j$ is minimized. Here, $C_j$ denotes the completion time of job $j$.

\subsubsection{A Time Indexed LP Relaxation}

For the non-outlier version, in which all jobs have to be scheduled, Schulz and Skutella~\cite{SS99} gave a constant factor approximation by making use of a time-indexed LP. We first describe a natural extension of their linear program to the outlier case, while also \emph{strengthening} it.
\[ \begin{array}{lll}
\mbox{minimize} & {\sum_{j = 1}^n w_j C_j} \\
\mbox{subject to} & (1) \quad { C_j = \sum_{i=1}^{m} \sum_{t=0}^{T} \left( \frac{x_{ijt}}{p_{ij}} \left(t + \frac{1}{2}\right) + \frac{x_{ijt}}{2} \right)} \qquad & \forall \, j \\
& (2) \quad { y_j = \sum_{i=1}^{m} \sum_{t=0}^{T} \frac{x_{ijt}}{p_{ij}}} & \forall \, j \\
& (3) \quad { \sum_{j = 1}^n x_{ijt} \leq 1} & \forall \, i, t \\
& (4) \quad { \sum_{j \notin {\cal A}} \pi^{\cal A}_{j} y_{j}} \geq \Pi - \Pi( {\cal A} ) & \forall \, {\cal A} : \Pi({\cal A}) < \Pi \\
& (5) \quad x_{ijt} = 0 & \forall \, i, j, t : t < r_j \\
& (6) \quad x_{ijt} \geq 0, \, 0 \leq y_j \leq 1 & \forall \, i, j, t
\end{array} \]%
In this formulation, the variable $x_{ijt}$ stands for the fractional amount of
time machine $i$ spends on processing job $j$ in the time interval
$[t,t+1)$; note that the LP schedule may be preemptive. The variable
$C_{j}$, defined by constraint (1), is a measure for the completion time
of job $j$. In any integral solution, where job $j$ is scheduled from
$t$ to $t+p_{ij}$ on a single machine $i$, it is not difficult to verify
that $C_{j}$ evaluates to $t+p_{ij}$. The variable $y_{j}$, defined by
constraint (2), is the fraction of job $j$ being scheduled. Constraint
(3) ensures that machine $i$ spends at most one unit of processing time
in $[t,t+1)$. Constraints (5) and (6) are additional feasibility
checks.

We first observe that replacing the set of constraints (4) by a single
inequality of the form $\sum_{j = 1}^n \pi_j y_j \geq \Pi$ would result in an
unbounded integrality gap -- consider a single job of profit $M$, and $\Pi =
1$; the LP can schedule a $1/M$ fraction of the job, incurring a cost
which is only $1/M$ times the optimum. We therefore add in the family of constraints (4), known as the \emph{knapsack-cover (KC) inequalities}. Let ${\cal A}$ be any set
of jobs, and let $\Pi({\cal A}) = \sum_{j \in {\cal A}} \pi_j$ be the
sum of profits over all jobs in ${\cal A}$. Then, $[\Pi - \Pi({\cal
  A})]^+$ is the profit that needs to be collected by jobs not in ${\cal
  A}$ when all jobs in ${\cal A}$ are scheduled. Further, if ${\cal A}$
does not fully satisfy the profit requirement, any job $j \notin {\cal
  A}$ has a marginal contribution of at most $\pi^{\cal A}_{j} = \min \{
\pi_{j}, \Pi - \Pi({\cal A}) \}$. Therefore, for every set ${\cal A}$
such that $\Pi({\cal A}) < \Pi$, we add a constraint of the form
$\sum_{j \notin {\cal A}} \pi^{\cal A}_{j} y_{j} \geq \Pi - \Pi( {\cal
  A} )$. Note that there are exponentially many such constraints, and
hence we cannot naively solve this LP.

\medskip \noindent {\bf ``Solving'' the LP.} We will not look to find an
optimal solution to the above LP; for our purposes, it suffices to
compute a solution
vector $(\widehat{x}, \widehat{y}, \widehat{C})$ satisfying the following:
\begin{OneLiners}
\item[(a)] Constraints (1)-(3) and (5)-(6) are satisfied.

\item[(b)] Constraint (4) is satisfied for the single set $\{ j : \widehat{y}_j \geq 1/2 \}$.

\item[(c)] $\sum_{j = 1}^n w_j \widehat{C}_j \leq 2 \cdot \Opt$, where $\Opt$ denotes the cost of an optimal integral solution.
\end{OneLiners}
\vspace{5pt}
We compute this solution vector by first guessing $\Opt$ up to a
multiplicative factor of $2$ (call the guess $\widetilde{ \Opt }$), and
add to the LP the explicit constraint $\sum_{j = 1}^n w_j C_j \leq
\widetilde{ \Opt }$. Then, we solve the LP using the ellipsoid
algorithm. For the separation oracle, in each iteration,
we check if the current solution satisfies
properties (a)-(c) above. If none of these properties is violated, we
are done; otherwise, we have a violated constraint. We now present our rounding algorithm (in Algorithm 1), based on the non-outlier algorithm of \cite{SS99}.

\begin{algorithm}[h!]
  \label{alg:wjcj}
  \caption{Weighted Sum of Completion Times}
  \begin{algorithmic}[1]
\STATE \textbf{given} a solution vector $(\widehat{x}, \widehat{y}, \widehat{C})$ satisfying
properties (a)-(c), \textbf{let} ${\cal A}^*$ be the set $\{j \, | \, \widehat{y}_{j} \geq 1/2 \}$.
\STATE \textbf{for} each job $j$, do the following steps
\begin{OneLiners}

\item[{\footnotesize 2a:}] \textbf{if} $j \in {\cal A}^*$, for each $(i,t)$ pair,
  set $l_{ijt} = \widehat{x}_{ ijt } / ( p_{ ij } \widehat{y}_j )$. Note that for such jobs  $j \in {\cal A}^*$, we have $\sum_{i = 1}^m \sum_{t = 0}^T l_{ijt} = 1$ from constraint (2) of the LP.

\item[{\footnotesize 2b:}] \textbf{if} $j \notin {\cal A}^*$,
  set $l_{ijt} = 2 \widehat{x}_{ ijt } / p_{ ij }$. In this case,
  note that $\sum_{i = 1}^m \sum_{t = 0}^T l_{ijt} = 2
  \widehat{y}_j$. 

\item[{\footnotesize 2c:}] \label{algwjcj:step2c} \textbf{partition} the interval $[0,1]$ in the following way: assign each $(i,t)$ pair a sub-interval $I_{it}$ of $[0,1]$  of length $l_{ijt}$ such that these sub-intervals are pairwise disjoint. Then choose a uniformly random number $r \in [0,1]$ and set $\tau_j$ to be the $(i,t)$ pair s.t $r \in I_{it}$. If there is no such $(i,t)$ pair, leave $j$ unmarked.
\end{OneLiners}
\STATE \textbf{for} each machine $i$, consider the jobs such that $\tau_j = (i, *)$;
order them in increasing order of their marked times; schedule them
as early as possible (subject to the release dates) in this order.
\end{algorithmic}
\end{algorithm}

\subsection{Analysis}
We now show that the expected weighted sum of completion times is $O(1) \Opt$, and also that with constant probability, the total profit of the jobs scheduled is at least $\Pi$.
\begin{lemma}
\label{lem:exp-cost}
The expected weighted sum of completion times is at most $16 \cdot \Opt$.
\end{lemma}
\begin{proof}
Let $C^{R}_{j}$ be a random variable, standing for the completion time of job $j$; if this job has not been scheduled, we set $C^R_j = 0$. Since $\sum_{j = 1}^n w_j \widehat{C}_j \leq 2 \cdot \Opt$, it is sufficient to prove that $\expar{C^R_{j}} \leq 8 \widehat{C}_j$ for every $j$. To this end, note that
{\small
\begin{MyEqn}
\ex{C^R_{j}} &=& \sum_{i=1}^{m} \sum_{t = 0}^{T} \Pr [\tau_j = (i,t) ] \cdot \ex{C^R_{j} | \tau_j = (i,t)}
  \leq \sum_{i=1}^{m} \sum_{t = 0}^{T} \frac{2\widehat{x}_{ijt}}{p_{ij}}
  \cdot \ex{C^R_{j} | \tau_j = (i,t)} \ ,
\end{MyEqn}%
}
where the last inequality holds since $\Pr [\tau_j = (i,t) ] = l_{ijt} \leq 2 \widehat{x}_{ijt} / p_{ij}$, regardless of whether $j \in {\cal A}^*$ or not. Now let us upper bound $\expar{C^R_{j} | \tau_j = (i,t)}$.
The total time for which job $j$ must wait before being processed on machine $i$ can be split in the worst case into: (a) the idle time on this machine before $j$ is processed, and (b) the total processing time of other jobs marked $(i,t')$ with $t' \leq t$. If job $j$ has been marked $(i,t)$, the idle time on machine $i$ before $j$ is processed is at most $t$. In addition, the total expected processing time mentioned in item (b) is at most
{\small
\begin{MyEqn}
& & \sum_{k \neq j} p_{ik} \sum_{t' = 0}^t \pr{ \left. \tau_k = (i,t') \right| \tau_j = (i,t) } = \sum_{k \neq j} p_{ik} \sum_{t' = 0}^t \pr{ \tau_k = (i,t') }\\ & & \qquad \qquad  \leq \sum_{k \neq j} p_{ik} \sum_{t' = 0}^t \frac{2 \widehat{x}_{ikt'}}{p_{ik}}
 = 2 \sum_{t' = 0}^t \sum_{k \neq j} \widehat{x}_{ikt'} \leq 2 (t+1) \ ,
\end{MyEqn}%
}
where the last inequality follows from constraint~(3). Combining these observations and constraint (1), we have
{\small
\begin{MyEqn}
\ex{C^R_{j}} & \leq & 2 \sum_{i=1}^{m} \sum_{t = 0}^{T} \frac{\widehat{x}_{ijt}}{p_{ij}} \left( t + 2(t+1) + p_{ij} \right)
 \leq  8 \sum_{i=1}^{m} \sum_{t=0}^{T} \left( \frac{\widehat{x}_{ijt}}{p_{ij}} \left(t + \frac{1}{2}\right) + \frac{\widehat{x}_{ijt}}{2} \right)
 =  8 \widehat{C}_j \ .
\end{MyEqn}%
}
\vspace{-0.1in}
\end{proof}

\begin{lemma}
  \label{lem:probability-wjcj}
  The randomized rounding
  algorithm produces a schedule that meets the profit constraint with probability at least $1/5$.
\end{lemma}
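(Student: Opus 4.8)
The goal is to show that the marked set $S = \{j : \tau_j = (i,*) \text{ for some } i\}$ satisfies $\sum_{j \in S} \pi_j \geq \Pi$ with probability at least $1/5$. The natural route is a second-moment / Chebyshev argument on the random variable $Z = \sum_{j} \pi_j^{{\cal A}^*} \mathbf{1}[j \text{ is marked}]$, where I truncate profits at level $\Pi - \Pi({\cal A}^*)$ exactly as in the knapsack-cover inequality, so that $Z \geq \Pi - \Pi({\cal A}^*)$ already suffices to conclude (the jobs in ${\cal A}^*$ contribute the remaining $\Pi({\cal A}^*)$ deterministically, modulo checking they are all marked). First I would split $S$ into $S \cap {\cal A}^*$ and $S \setminus {\cal A}^*$: for $j \in {\cal A}^*$, Step 2a gives $\sum_{i,t} l_{ijt} = 1$, so $j$ is marked with probability exactly $1$ and contributes $\pi_j$ to the total with certainty. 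Thus the scheduled profit is at least $\Pi({\cal A}^*) + \sum_{j \notin {\cal A}^*} \pi_j^{{\cal A}^*}\mathbf{1}[j \text{ marked}]$, and it remains to show the second sum reaches $\Pi - \Pi({\cal A}^*) =: D$ with probability $\geq 1/5$ (the case $\Pi({\cal A}^*) \geq \Pi$ being trivial).

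**Expectation and variance.** For $j \notin {\cal A}^*$, Step 2b gives $\Pr[j \text{ marked}] = \sum_{i,t} l_{ijt} = 2\widehat{y}_j$, so by the KC-inequality (property (b), which holds for the set ${\cal A}^*$) one gets $\ex{Z'} = \sum_{j \notin {\cal A}^*} \pi_j^{{\cal A}^*} \cdot 2\widehat{y}_j \geq 2D$, where $Z' = \sum_{j \notin {\cal A}^*} \pi_j^{{\cal A}^*}\mathbf{1}[j\text{ marked}]$. The marking events across different jobs are independent (each uses its own fresh uniform $r \in [0,1]$), so $\mathrm{Var}(Z') = \sum_{j \notin {\cal A}^*} (\pi_j^{{\cal A}^*})^2 \cdot 2\widehat{y}_j(1-2\widehat{y}_j) \leq \sum_{j} (\pi_j^{{\cal A}^*})^2 \cdot 2\widehat{y}_j$. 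Since each truncated profit satisfies $\pi_j^{{\cal A}^*} \leq D$, this is at most $D \cdot \sum_j \pi_j^{{\cal A}^*} \cdot 2\widehat{y}_j = D \cdot \ex{Z'}$. Now Chebyshev (one-sided, i.e. the Cantelli inequality) gives
\[
\Pr[Z' < D] \;\leq\; \Pr\big[\, Z' \leq \ex{Z'} - D \,\big] \;\leq\; \frac{\mathrm{Var}(Z')}{\mathrm{Var}(Z') + D^2} \;\leq\; \frac{D \cdot \ex{Z'}}{D \cdot \ex{Z'} + D^2} \;=\; \frac{\ex{Z'}}{\ex{Z'} + D}.
\]
This bound is increasing in $\ex{Z'}$ and does not immediately give a constant below $1$, so a little more care is needed — but note it is $1 - D/(\ex{Z'}+D)$, and we need a different handle when $\ex{Z'}$ is huge.

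**The main obstacle and how to handle it.** The difficulty is exactly the regime where a single job $j$ has enormous profit $\pi_j \gg D$: then $\pi_j^{{\cal A}^*} = D$, and $\widehat{y}_j$ could be as small as, say, $1/(2\cdot\text{large})$ — no, wait: if $\pi_j^{{\cal A}^*} = D$ then a single marked such job already gives $Z' \geq D$, so $\Pr[Z' \geq D] \geq \Pr[j\text{ marked}] = 2\widehat{y}_j$, but $\widehat{y}_j$ can be tiny. The standard fix is to handle "big" jobs (those with $\pi_j > D/2$, say, after truncation all equal to $D$) separately: among the big jobs not in ${\cal A}^*$, $\sum 2\widehat{y}_j \geq $ (some constant), because the KC constraint with the small jobs contributing at most their total... actually the cleanest line is: either the big jobs alone have $\sum_{j \text{ big}} 2\widehat{y}_j \geq 1/2$, in which case the probability that at least one big job is marked is $1 - \prod(1-2\widehat{y}_j) \geq 1 - e^{-1/2} \geq 1/3$ and any single big marked job contributes $\geq D/2$ — then I reindex with $D/2$ in place of $D$, losing only a factor $2$; or the small jobs carry at least half the fractional profit mass, $\sum_{j \text{ small}} \pi_j^{{\cal A}^*} \cdot 2\widehat{y}_j \geq D$, and now each term has $\pi_j^{{\cal A}^*} \leq D/2$, so $\mathrm{Var}(Z'_{\text{small}}) \leq (D/2)\ex{Z'_{\text{small}}}$ and Cantelli against deviation $D \geq \ex{Z'_{\text{small}}}/2$ gives $\Pr[Z'_{\text{small}} < D] \leq \Pr[Z'_{\text{small}} \leq \ex{Z'_{\text{small}}}/2] \leq \mathrm{Var}/(\mathrm{Var} + (\ex{Z'_{\text{small}}}/2)^2) \leq (D/2)/((D/2)+(D/2)) = \dots$ — one tunes the big/small threshold and the deviation split to land at a clean constant, and combining with the $1/3$ from the other case yields the advertised $1/5$. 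I would also note the one remaining subtlety: Step 2c assigns $j$ a pair $(i,t)$, but Step 3 then schedules $j$ on machine $i$ "as early as possible" — since no processing-time budget is imposed (makespan is not the objective), every marked job does get scheduled, so $S$ as defined above is exactly the set of scheduled jobs. Hence the profit bound on $S$ is the profit bound on the output, completing the proof.
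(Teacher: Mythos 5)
Your setup is exactly the paper's: all of ${\cal A}^*$ is marked deterministically, the KC inequality for ${\cal A}^*$ gives $\ex{Z'} \ge 2D$ with $D = \Pi - \Pi({\cal A}^*)$, the markings are independent, and the truncated profits satisfy $\pi_j^{{\cal A}^*} \le D$. The gap is entirely in the concentration step, and it is a real one. First, your Cantelli application throws away too much: you bound the deviation $\ex{Z'} - D$ from below by $D$ and plug $D^2$ into the denominator, while the variance bound $D\cdot\ex{Z'}$ grows with $\ex{Z'}$; this is why you end up with the vacuous $\ex{Z'}/(\ex{Z'}+D)$. Had you kept the full deviation $t = \ex{Z'} - D \ge \ex{Z'}/2$, Cantelli would give
\[
\pr{Z' \le D} \;\le\; \frac{D\,\ex{Z'}}{D\,\ex{Z'} + (\ex{Z'}-D)^2} \;\le\; \frac{D}{D + \ex{Z'}/4} \;\le\; \frac{2}{3},
\]
which already yields success probability $1/3 \ge 1/5$ with no case analysis at all. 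Second, the big/small split you reach for instead does not close: in the ``big jobs'' branch, a single marked big job contributes only $D/2$, and the profit requirement is a \emph{hard} constraint --- you cannot ``reindex with $D/2$ in place of $D$, losing only a factor $2$,'' because a schedule collecting $D/2$ simply violates the constraint. In the ``small jobs'' branch you only have $\ex{Z'_{\mathrm{small}}} \ge D$, not $\ge 2D$, so the step $\pr{Z'_{\mathrm{small}} < D} \le \pr{Z'_{\mathrm{small}} \le \ex{Z'_{\mathrm{small}}}/2}$ is unjustified. As written, neither branch establishes the claim.

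For comparison, the paper sidesteps all of this with one normalization: setting $\alpha_j = \pi_j^{{\cal A}^*}/D \in [0,1]$, the variables $\alpha_j Z_j$ are independent and $[0,1]$-valued with total mean at least $2$, so the lower-tail Chernoff bound for sums of independent $[0,1]$-valued random variables gives $\pr{\sum_j \alpha_j Z_j \le 1} \le \exp(-\ex{\sum_j \alpha_j Z_j}/8) \le e^{-1/4} < 4/5$. The ``one huge job'' scenario that worried you is neutralized precisely by the truncation $\pi_j^{{\cal A}^*} \le D$, which is what makes each normalized term at most $1$; a larger mean then only helps. Either that bound or the corrected Cantelli above repairs your argument; the detour through big and small jobs is both unnecessary and, in its current form, incorrect.
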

\begin{proof}
  Clearly, when the jobs in ${\cal A}^*$ collectively satisfy the profit
  requirement, we are done since the algorithm picks every job
  in this set. In the opposite case, consider the Knapsack Cover
  inequality for ${\cal A}^*$, stating that $\sum_{j \notin {\cal A}^*}
  \pi^{{\cal A}^*}_{j} \widehat{y}_{j} \geq \Pi - \Pi( {\cal A}^* )$.
The total profit collected from
  these jobs can be lower bounded by $Z = \sum_{j \notin {\cal A}^*}
  \pi^{{\cal A}^*}_{j} Z_{j}$, where $Z_j$ is a random variable
  indicating whether job $j$ is picked.

Since our rounding algorithm picks all jobs in ${\cal A}^*$, the profit requirement is met if $Z$ is at least $\Pi - \Pi( {\cal A}^*)$. To provide an upper bound on the probability that $Z$ falls below $\Pi - \Pi( {\cal A}^*)$,
notice that by the way the algorithm marks jobs in Step~\ref{algwjcj:step2c}, we have that each job not in ${\cal A}^*$ is marked with probability $2 \widehat{y}_{j}$, \emph{independently} of the other jobs.
Therefore,
{\small
\[ \ts \ex{Z} = \mathrm{E} \big[ \sum_{j \notin {\cal A}^*} \pi^{{\cal
    A}^*}_{j} Z_{j} \big] = 2 \sum_{j \notin {\cal A}^*} \pi^{{\cal A}^*}_{j} \widehat{y}_{j} \geq 2(\Pi - \Pi( {\cal A}^*)) \ . \]%
}
Consequently, if we define $\alpha_j = \pi^{{\cal A}^*}_{j} / (\Pi - \Pi( {\cal A}^*))$, then
{\small
\begin{eqnarray}
\nonumber \pr{ Z \leq \Pi - \Pi( {\cal A}^*) } & = & \ts \pr{ \sum_{j \notin {\cal A}^*} \frac{ \pi^{{\cal A}^*}_{j} }{ \Pi - \Pi( {\cal A}^*) } Z_{j} \leq 1 } \leq \pr{ \sum_{j \notin {\cal A}^*} \alpha_j Z_{j} \leq \frac{ \ex{ \sum_{j \notin {\cal A}^*} \alpha_j Z_{j} } }{ 2 } } \\
\nonumber & \leq & \ts \exp \big( - \frac{ 1 }{ 8 } \cdot \mathrm{E} \big[ \sum_{j \notin {\cal A}^*} \alpha_j Z_{j} \big] \big) \leq e^{ -1/4 } < \frac{ 4 }{ 5 } \ ,
\end{eqnarray}
} where the first and third inequalities hold since $\expar{ \sum_{j
    \notin {\cal A}^*} \alpha_j Z_{j} } \geq 2$, and the second
inequality follows from bounding the lower tail of the sum of
independent $[0,1]$ r.v.s (see, e.g., \cite[Thm.\
3.5]{chernoff}). 
\end{proof}

The above two lemmas combine to give the following theorem. 

\begin{theorem}
  \label{thm:main-wjcj}
  For  $R|r_j, \out|\sum_j w_j C_j$, there is a randomized
  $O(1)$-approximation algorithm.
\end{theorem}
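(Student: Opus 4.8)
The plan is to derive Theorem~\ref{thm:main-wjcj} by combining Lemma~\ref{lem:exp-cost} and Lemma~\ref{lem:probability-wjcj} with the ``solving the LP'' discussion, while being careful about the one genuine subtlety: the rounded schedule may be cheap but infeasible, since jobs that are left unscheduled contribute $0$ to $C^R$, so a low-cost event does not by itself certify that the profit requirement is met.

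First I would recall the full algorithm. We enumerate polynomially many geometric guesses $\widetilde{\Opt}$ of the optimal value $\Opt$, bracketed between a trivial lower and upper bound on the optimum; for each guess we run the ellipsoid-based procedure described above to obtain a vector $(\widehat{x},\widehat{y},\widehat{C})$ satisfying properties (a)--(c), and then run the randomized rounding of Algorithm~1. We restrict attention to a guess with $\Opt \le \widetilde{\Opt} \le 2\Opt$ (such a guess occurs among those enumerated), for which properties (a)--(c) indeed hold. Fixing such a guess, Lemma~\ref{lem:exp-cost} gives $\ex{C^R} \le 16\,\Opt$, so Markov's inequality yields $\pr{C^R > 160\,\Opt} \le 1/10$; meanwhile Lemma~\ref{lem:probability-wjcj} gives that the profit requirement is met with probability at least $1/5$. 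A union bound then shows that with probability at least $1/5 - 1/10 = 1/10$ the produced schedule is both feasible and has cost at most $160\,\Opt$, i.e., it is a feasible $160$-approximate solution. To amplify, we repeat the randomized rounding independently $\Theta(\log n)$ times for this guess; since the feasibility of a candidate schedule and its objective value are computable in polynomial time, we discard infeasible outcomes and return the cheapest feasible schedule found over all guesses and all repetitions. With high probability some repetition for the good guess succeeds, so the algorithm outputs a feasible schedule of cost $O(1)\cdot\Opt$.

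The only place requiring care — and hence the main obstacle — is the interaction between the low-cost event and the feasibility event: one cannot infer feasibility from the cost bound alone, because a run that schedules very few jobs can have arbitrarily small cost yet violate the profit requirement. This is precisely why the argument takes the union bound of ``$C^R \le 160\,\Opt$'' with the feasibility event of Lemma~\ref{lem:probability-wjcj}, rather than a Markov-type argument on cost in isolation, and why the amplification step retains only feasible schedules. Everything else — the enumeration over $\widetilde{\Opt}$, the choice of the Markov threshold (any constant larger than $80$ works), the independent repetitions, and the post-hoc selection of the cheapest feasible schedule — is routine.
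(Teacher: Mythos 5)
Your proposal is correct and follows the same route as the paper, which simply states that Lemma~\ref{lem:exp-cost} and Lemma~\ref{lem:probability-wjcj} combine to give the theorem; you have just made the standard combination explicit (Markov on the cost, union bound with the feasibility event of Lemma~\ref{lem:probability-wjcj}, and repetition keeping the cheapest feasible outcome). Your observation that low cost alone does not certify feasibility, and that one must therefore intersect the two events rather than reason about cost in isolation, is exactly the right point of care.
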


While the LP formulation as stated has exponentially many time intervals of length $1$, we can make our algorithm fully polynomial in the size of the input (with a small loss in approximation guarantee) by considering geometrically increasing sizes \cite{HallSW96} for the time intervals.

In \lref[Appendix]{sec:proofs-wjcj-ext}, we show that given $K$ \emph{different} profit requirements, our algorithm can be modified to give an $O(\log K)$-approximation.

\subsection{Single Machine, Identical Weights} \label{sec:onemachine}
In this section, we show how we can get an FPTAS using dynamic programming for the problem of minimizing the unweighted sum of completion times on a constant number of machines. For simplicity, we first give the complete proof for the case of a single machine, and sketch how to extend it for a constant number of machines.

\medskip \noindent \textbf{Single Machine, Identical Weights.}
We are given a collection of $n$ jobs where job $j$ is associated with a processing time $p_j$
and a profit $\pi_j$. Given a target profit of $\Pi > 0$, the
goal is to identify a set of jobs $S$ and a corresponding
single-machine schedule such that $\sum_{j \in S} \pi_j \geq \Pi$ and
$\sum_{j \in S} C_j$ is minimized (where $C_j$ is the
completion time of job $j$).

\medskip \noindent {\bf Dynamic program.} Suppose $p_1, \ldots, p_n$ are integers such that $p_1 \leq \cdots \leq p_n$. Let $\profit(j,C,L)$ be the maximum  profit that can be collected by scheduling a subset of jobs $\{1 , \ldots, j\}$ such that their sum of completion times is \emph{at most} $C$ and makespan is \emph{exactly} $L$. Then, the following recurrence holds:
\vspace{-5pt}
\[ \profit(j,C,L) = \max \{ \profit(j-1,C,L) , \pi_j + \profit(j-1, C - L, L - p_j) \} \]

\vspace{-5pt}
\noindent To better understand the above equation, notice that if job $j$ is picked by an optimal schedule, it will not be scheduled before any of the jobs $\{1, \ldots, j-1\}$ since the {\em shortest processing time} strategy is optimal for a fixed set of jobs (\cite{KargerSW97}).
Therefore, consider a set of jobs $\{1, \ldots, j\}$ that have a bound $C$ on their sum of completion times and let $L$ be their makespan.
If $j$ is scheduled, the jobs $\{1, \ldots, j-1\}$ \emph{must} have a residual makespan of $L-p_j$ and a bound of $C-L$ on the sum of completion times since job $j$ incurs a completion time of $L$ by virtue of it being scheduled last among $\{1, \ldots, j\}$; we
 also collect a profit of $\pi_j$ in this case. On the other hand, if $j$ is not scheduled, $C$ and $L$ remain the same but we don't collect any profit. Now, given this recurrence, the goal is to find the minimal $C$ and some $L$ such that $\profit(n, C, L) \geq \Pi$. This can be solved by dynamic programming, with running time $O(n C_{max} L_{max})$. Since $C \leq n^2 p_n$ and $L \leq n p_n$, the running time is $O( n^4 p_n^2)$, i.e. pseudo-polynomial.


Therefore, the above dynamic program can be used to compute an optimal solution in polynomial time when all processing times are small integers. We now apply scaling techniques to obtain an FPTAS to handle arbitrary processing times.

\medskip \noindent {\bf Handling general instances.}
 Given an instance ${\cal I}$ of the original problem, we begin by ``guessing'' $P_{ \max }$, the maximum processing time of a job that is scheduled in some fixed optimal solution. We now create a new instance ${\cal I}'$ in which every job $j$ with $p_j > P_{ \max }$ is discarded; other jobs get a scaled processing time of $p_j' = \lceil p_j / K \rceil$, where $K = (2\epsilon P_{\max}) / (n(n+1))$. Notice that the scaled processing times of remaining jobs are integers in $[0, \lceil n(n+1)/(2\epsilon) \rceil]$. We can therefore find in $O( n^8 / \epsilon^2 )$ time an optimal subset of jobs ${\cal J}_{{\cal I}'}$ to be scheduled in ${\cal I}'$, and return this set as a solution for ${\cal I}$.

\begin{theorem}
Scheduling the jobs ${\cal J}_{{\cal I}'}$ in order of non-decreasing processing times guarantees that their sum of completion times is at most $(1 + \epsilon) \Opt( {\cal I} )$.
\end{theorem}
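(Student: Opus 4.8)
The plan is the standard rounding-and-scaling analysis for an FPTAS. Fix the optimal solution to $\mathcal{I}$ that was used to define $P_{\max}$; let $\mathcal{J}^\ast$ be its job set, scheduled (without loss of generality) in shortest-processing-time-first order, so that $\Opt(\mathcal{I}) = \sum_{j\in\mathcal{J}^\ast} C_j$. The first thing I would record is an ordering-consistency observation: since $p\mapsto \lceil p/K\rceil$ is non-decreasing, any ordering of a job set by non-decreasing original times $p_j$ is simultaneously an ordering by non-decreasing scaled times $p_j'$. Hence, by optimality of SPT for a fixed job set \cite{KargerSW97}, scheduling any set $S$ in non-decreasing-$p_j$ order is optimal both for $\mathcal{I}$ and for $\mathcal{I}'$, and the resulting "sum of completion times" values for the two instances are read off the \emph{same} permutation of $S$. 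In particular the quantity the dynamic program minimizes, $\mathrm{cost}_{\mathcal{I}'}(S)$, equals the cost of the non-decreasing-$p_j$ schedule of $S$ evaluated with the $p_j'$'s.

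Next I would set up the two-sided per-set estimate. For a set $S$ with $|S|=k\le n$ scheduled SPT with processing times $q_1\le\cdots\le q_k$, the sum of completion times is $\sum_{m=1}^{k}(k-m+1)\,q_m$ — a non-negative combination of the $q_m$ whose coefficients are monotone in $q_m$ and sum to $k(k+1)/2\le n(n+1)/2$. The ceiling gives the coordinatewise bound $p_j \le K p_j' \le p_j+K$, and substituting this into the (common) SPT cost expression for $S$ yields, for every feasible set $S$,
\[
  \mathrm{cost}_{\mathcal{I}}(S)\ \le\ K\cdot \mathrm{cost}_{\mathcal{I}'}(S)\ \le\ \mathrm{cost}_{\mathcal{I}}(S) + K\cdot \tfrac{n(n+1)}{2}.
\]

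Now I would assemble the chain. Applying the left inequality to $S=\mathcal{J}_{\mathcal{I}'}$, then optimality of $\mathcal{J}_{\mathcal{I}'}$ for $\mathcal{I}'$ (here one notes $\mathcal{J}^\ast$ is feasible for $\mathcal{I}'$: it meets the profit target and contains no job with $p_j>P_{\max}$), and then the right inequality to $S=\mathcal{J}^\ast$,
\[
  \mathrm{cost}_{\mathcal{I}}(\mathcal{J}_{\mathcal{I}'})\ \le\ K\cdot \mathrm{cost}_{\mathcal{I}'}(\mathcal{J}_{\mathcal{I}'})\ \le\ K\cdot \mathrm{cost}_{\mathcal{I}'}(\mathcal{J}^\ast)\ \le\ \Opt(\mathcal{I}) + K\cdot \tfrac{n(n+1)}{2}.
\]
Finally I would substitute $K = 2\epsilon P_{\max}/(n(n+1))$, which makes the additive term exactly $\epsilon P_{\max}$, and use $P_{\max}\le \Opt(\mathcal{I})$ — true because the job attaining $P_{\max}$ is scheduled in the fixed optimum and so has completion time at least $P_{\max}$. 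This gives $\mathrm{cost}_{\mathcal{I}}(\mathcal{J}_{\mathcal{I}'}) \le (1+\epsilon)\Opt(\mathcal{I})$; iterating over the at most $n$ guesses of $P_{\max}$ and keeping the best schedule removes the conditioning.

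The only genuinely delicate point is the bookkeeping in the two-sided estimate: one must verify that a single permutation (non-decreasing in $p_j$) is SPT-optimal for both instances, so that the coordinatewise bounds $p_j\le Kp_j'\le p_j+K$ can be summed against identical coefficients, and one must bound the total coefficient weight by $n(n+1)/2$ using $|\mathcal{J}_{\mathcal{I}'}|,|\mathcal{J}^\ast|\le n$. Everything else is a direct substitution.
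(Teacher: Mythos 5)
Your proposal is correct and follows essentially the same route as the paper: the same coordinatewise bound $p_j \le K p_j' \le p_j + K$, the same comparison through $\Opt(\mathcal{I}')$ using feasibility of the true optimum for the scaled instance, the same additive error $K\,n(n+1)/2 = \epsilon P_{\max}$, and the same final step $P_{\max} \le \Opt(\mathcal{I})$. You are merely more explicit than the paper about the ordering-consistency of SPT under scaling and the feasibility of $\mathcal{J}^\ast$ for $\mathcal{I}'$, which the paper leaves implicit.
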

\begin{proof}
We begin by relating $\Opt( {\cal I} )$ to $\Opt( {\cal I}' )$. For this purpose, suppose that ${\cal J}_{{\cal I}} = \{ j_1, \ldots, j_R \}$ in an optimal solution to ${\cal I}$. Then,
\[ \Opt( {\cal I}' ) \leq \sum_{r = 1}^R \sum_{s = 1}^r p_{j_s}' \leq \sum_{r = 1}^R \sum_{s = 1}^r \left( \frac{ p_{j_s} }{ K } + 1 \right) \leq \frac{ \Opt( {\cal I} ) }{ K } + \frac{ n( n+1 ) }{ 2 } \ . \]%
Now suppose that ${\cal J}_{{\cal I}'} = \{ j_1', \ldots, j_Q' \}$. Then, the sum of completion times that results from scheduling $j_1', \ldots, j_Q'$ in this exact order is
\begin{MyEqn}
\sum_{r = 1}^Q \sum_{s = 1}^r p_{j_s'} & \leq & K \sum_{r = 1}^Q \sum_{s = 1}^r p_{j_s'}' \\
& = & K \cdot \Opt({\cal I}') \\
& \leq & \Opt( {\cal I} ) + \frac{ n( n+1 ) K }{ 2 } \\
& = & \Opt( {\cal I} ) + \epsilon P_{ \max } \\
& \leq & (1 + \epsilon) \Opt( {\cal I} ) \ .
\end{MyEqn}%
where the last inequality holds since $P_{ \max }$ is a lower bound on $\Opt( {\cal I} )$.
\end{proof}

\noindent \textbf{Constant number of machines.}

We finally consider the case when there is a constant number of identical machines, say $m$. To this end, let $\profit(j,C,L_{1},L_{2},\ldots,L_{m})$ be the maximum  profit that can be collected by scheduling a subset of the jobs $1 , \ldots, j$ such that their sum of completion times is at most $C$ and such that the makespan is exactly $L_{i}$ on machine $i$. Then $\profit(j,C,L_{1},\ldots,L_{m})$ can be written as
\[\max \left\{ \profit(j-1,C,L_{1},\ldots,L_{m}) , \max_{i} \left(\pi_j + \profit(j-1, C - L_{i}, L_{1},\ldots, L_{i} - p_j,\ldots, L_{m})\right) \right\} \ . \]%
When $m = O(1)$, the size of this dynamic program is still polynomial in $n$. The remaining analysis is similar to the one for a single machine.

\section{Minimizing Average Flow Time on Identical Machines}
\label{flowtime}

Finally, we consider the problem of minimizing the average (preemptive)
flow time on identical machines ($P | r_{j}, pmtn, \out | \sum F_j$)
with \emph{unit profits}. We present an LP rounding algorithm that
produces a preemptive non-migratory (no job is scheduled on multiple machines) schedule whose flow time is
within $O(\log P)$ of the optimal, where $P$ is the ratio
between the largest and smallest processing times.

This is the technical heart of the paper; in sharp contrast to the
problems in the previous two sections, it is not clear how to easily
change the existing algorithms for this problem to handle the outliers
case---while we use the same LP as in previous works, our LP rounding
algorithm for the outlier case has to substantially extend the previous
non-outlier rounding algorithm. Since our algorithms are somewhat
involved, we first present the algorithm for a \emph{single machine},
and subsequently sketch how to extend it to multiple identical machines.
For the rest of this section, consider the following setup: we are given
a single machine and a collection of $n$ jobs where each job $j$ has a
release date $r_j \in \mathbb{Z}$ and a processing time $p_j \in
\mathbb{Z}$.  Given a parameter $\Pi > 0$, we want to identify a set of
jobs $S$ and a preemptive schedule minimizing $\sum_{j \in S} F_j$
(where $F_j = C_j - r_j$) subject to $|S| \geq \Pi$.


\subsection{The Flow-time LP Relaxation and an Integrality Gap} \label{subsec:prelim}


Our LP relaxation is a natural outlier extension of one used in earlier
flow-time algorithms (\cite{GargK06,GargK07}).  We first describe what
the variables and constraints correspond to: \emph{(i}) $f_j$ is the
fractional flow time of job $j$, \emph{(ii)} $x_{jt}$ is the fraction of
job $j$ scheduled in the time interval $[t,t+1)$, and \emph{(iii)} $y_j$
is the fraction of job $j$ scheduled. Constraint (1) keeps track of the
flow time of each job, while constraints (2), (3), and (4) are to make
sure the solution is feasible with respect to the profit constraint.
Notice that in constraint (1), we use the quantity $\widetilde{p}_{j}$
(which denotes the processing time $p_{j}$ rounded up to the next power
of $2$), instead of $p_{j}$. Also, this modification is present only in
constraint (1) which dictates the LP cost, and not in constraint (2)
which measures the extent to which each job is scheduled. The quantity
$T$ is a guess for the time at which the optimal solution completes
processing jobs (in fact, any upper bound of it would suffice). We also
assume that a parameter $k^* \in \mathbb{Z}$ was guessed in advance,
such that the optimal solution only schedules jobs with $p_j \leq
2^{k^*}$. Our algorithm would have running time which is polynomial in
$T$ and $n$.

\[ \begin{array}{lll}
\mbox{minimize} & { \sum_{j = 1}^{n} f_j} \\
\mbox{subject to} & (1) \quad { \displaystyle f_{j} = \sum_{t=0}^{T} \left( \frac{x_{jt}}{\widetilde{p}_{j}} \left(t + \frac{1}{2} - r_{j}\right) + \frac{x_{jt}}{2} \right)} \qquad & \forall \, j \\
& (2) \quad { p_j y_j = \sum_{t=0}^{T} x_{jt}} & \forall \, j \\
& (3) \quad { \sum_{j = 1}^n x_{jt} \leq 1} & \forall \, t \\
& (4) \quad { \sum_{j = 1}^n y_{j} \geq \Pi} \\
& (5) \quad x_{jt} = 0 & \forall \,  j, t : t < r_j \\
& (6) \quad x_{jt} \geq 0, \, 0 \leq y_j \leq 1 & \forall \, j, t
\end{array} \]%

Given the above LP, we first claim that it is indeed a
\emph{relaxation}.
\begin{lemma}[Relaxation]
  \label{lem:flowlp_vs_opt}
  $\opt{ LP } \leq \Opt$, where $\Opt$ denotes the optimal sum of flow times.
\end{lemma}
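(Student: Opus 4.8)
The plan is to exhibit, from an optimal integral schedule, a feasible solution of the LP whose objective value is at most $\Opt$; since the LP is a minimization, this gives $\opt{LP} \le \Opt$. First I would normalize the optimal schedule: because every $r_j$ and every $p_j$ is an integer, there is an optimal preemptive single‑machine schedule in which the machine's choice of job changes only at integer times, so that in each unit slot $[t,t+1)$ the machine either runs exactly one job for the whole slot or is idle. Fix such a schedule, let $S$ be its set of processed jobs (so $|S| \ge \Pi$), and let $C_j$ denote the integer completion time of each $j \in S$.

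Next I would read off the LP variables: set $y_j = 1$ for $j \in S$ and $y_j = 0$ otherwise; set $x_{jt} = 1$ if job $j$ occupies slot $[t,t+1)$ in the fixed schedule and $x_{jt}=0$ otherwise; and let $f_j$ be as forced by constraint~(1). Constraints (2)–(6) are then immediate: each $j \in S$ occupies exactly $p_j$ slots, so $\sum_t x_{jt} = p_j = p_j y_j$; at most one job runs per slot, so $\sum_j x_{jt} \le 1$; $\sum_j y_j = |S| \ge \Pi$; no job is run before its release date; and the bounds in (6) clearly hold. So the only remaining task is to bound $\sum_j f_j$, and since $x_{jt}=0$ for $j \notin S$ (whence $f_j = 0$), it suffices to show $f_j \le F_j$ for each $j \in S$.

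For this, write $S_j$ for the set of $p_j$ slots used by job $j$; its largest element is $C_j - 1$, and $f_j = \frac{1}{\widetilde p_j}\sum_{t \in S_j}\bigl(t + \tfrac12 - r_j\bigr) + \frac{p_j}{2}$. Replacing $S_j$ by the block $\{C_j - p_j,\dots,C_j - 1\}$ of the $p_j$ largest integers that are $\le C_j-1$ only increases each of the $p_j$ terms, hence increases $\sum_{t \in S_j}(t+\tfrac12-r_j)$; and a direct computation gives $\sum_{t = C_j - p_j}^{C_j - 1}(t + \tfrac12 - r_j) = p_j\bigl(F_j - \tfrac{p_j}{2}\bigr)$. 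Therefore $f_j \le \frac{p_j}{\widetilde p_j}\bigl(F_j - \tfrac{p_j}{2}\bigr) + \frac{p_j}{2}$. Since $\widetilde p_j \ge p_j$ and, for a preemptively scheduled job, $F_j \ge p_j$ (the job needs $p_j$ units of processing after its release), the factor $p_j/\widetilde p_j \le 1$ multiplies the nonnegative quantity $F_j - p_j/2$, giving $f_j \le \bigl(F_j - \tfrac{p_j}{2}\bigr) + \tfrac{p_j}{2} = F_j$. Summing over $j \in S$ yields $\opt{LP} \le \sum_{j \in S} f_j \le \sum_{j \in S} F_j = \Opt$.

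The only delicate point is precisely this $f_j \le F_j$ estimate, since constraint~(1) uses the rounded‑up size $\widetilde p_j$ and carries the extra $+\tfrac12$ fudge terms rather than just $(t - r_j)/p_j$; the resolution is the ``shift the processing of $j$ as late as possible'' argument (which cannot decrease the relevant sum, hence cannot decrease $f_j$ — good for an upper bound) combined with the two slack inequalities $\widetilde p_j \ge p_j$ and $F_j \ge p_j$. The normalization to integer preemption points and the verification of constraints (2)–(6) are routine bookkeeping.
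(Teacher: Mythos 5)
Your proof is correct and follows essentially the same route as the paper's: build the natural LP solution from an optimal schedule, then bound $f_j \le F_j$ by observing that pushing job $j$'s processing into the contiguous block $[C_j-p_j, C_j)$ can only increase the sum, and finishing with $\widetilde p_j \ge p_j$. The integral-slot normalization you add is harmless but unnecessary (the paper sets $x_{jt}=\Delta p_j$ for fractional occupancy), and otherwise the arguments coincide.
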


\begin{proof}
Given an optimal solution for the given instance, we construct a corresponding LP solution in a natural way, by setting $x_{jt} = \Delta p_{j}$ when the optimal solution schedules a $\Delta$ fraction of job $j$ in the time interval $[t,t+1)$. It is easy to verify that the profit constraint is satisfied. Now, consider a particular job $j$ scheduled in the optimal solution. We proceed by showing that the term $f'_{j} = \sum_{t=0}^{T} ( \frac{x_{jt}}{p_{j}} (t + \frac{1}{2} - r_{j}) + \frac{x_{jt}}{2} )$ is a lower bound on the flow time of $j$ (notice that $f'_{j}$ has $p_j$ in the denominator where $f_j$ had $\widetilde{p_{j}}$).

Suppose the optimal solution completes processing $j$ at $C_{j}$. The flow time is therefore $C_{j} - r_{j}$, whereas the worst case for the LP is when $j$ is contiguously scheduled in the time interval $[C_{j} - p_{j},C_{j})$; otherwise, some fraction is scheduled earlier, and the contribution to $f'_{j}$ can only decrease. Consequently,
\[ f'_{j} \leq \sum_{t=C_{j} - p_{j}}^{C_{j}-1} \frac{t + 1/2 - r_{j}}{p_{j}} + \frac{p_{j}}{2} = C_{j}  - r_{j} \ . \]%
Since $f_{j} \leq f'_{j}$, we have $f_{j} \leq C_{j} - r_{j}$. The lemma follows by summing  over all jobs scheduled by OPT.
\end{proof}

Before getting into the details of our algorithm, to gain more intuition for this relaxation, we demonstrate that it has an integrality
gap of $\Omega(\log P)$, where $P$ is the ratio between the largest and
smallest processing times in an optimal solution.
\begin{theorem}[Integrality Gap]
  \label{thm:intgap}
  There are instances in which $\Opt = \Omega( \log P ) \cdot
  \Opt(\rm LP)$.
\end{theorem}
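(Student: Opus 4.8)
The plan is to construct an instance where the LP can "smear" a single large job across a long time horizon at negligible fractional cost, while any integral solution is forced to either schedule that large job (paying a huge flow time) or reject it (violating the profit requirement because there simply aren't enough other jobs). The key leverage is constraint (1): since $f_j$ measures flow time using $\widetilde{p}_j$ in the denominator only in the cost term and not in constraint (2), and since fractionally scheduling a tiny sliver $x_{jt}$ of a job contributes only $\approx x_{jt} \cdot (t+\tfrac12 - r_j)/\widetilde{p}_j$, the LP can pay a cost proportional to $y_j$ times the average position, rather than to the full flow time. First I would set up a single machine with one "big" job $B$ of processing time $p_B = P$ released at time $0$, together with a family of unit-size jobs. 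I would choose the number of unit jobs and the profit target $\Pi$ so that any feasible integral solution must include job $B$: e.g., make $\Pi$ exactly equal to the total number of jobs minus $(P-1)$ or so, calibrated so that $B$ plus all unit jobs barely meets $\Pi$ while omitting $B$ falls short. I expect it will actually be cleaner to use a geometric ladder of jobs of sizes $1, 2, 4, \ldots, 2^{k^*}$ (so that $P = 2^{k^*}$ and $\log P = k^*$), with appropriate multiplicities, to extract a $\Theta(\log P)$ factor rather than just $\Theta(1)$.

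**The LP solution.**

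Next I would exhibit the cheap fractional solution. For each "large" job of size $2^k$, I would schedule it fractionally spread thinly over a horizon of length $T \approx 2^{k}\cdot(\text{something})$ in the idle gaps, using $x_{jt}$ values like $y_j \cdot 2^k / T$ uniformly, so that constraint (2) is met with $y_j = 1$ (or close to it) while the cost term $f_j = \sum_t (x_{jt}/\widetilde p_j)(t+\tfrac12 - r_j) + x_{jt}/2$ evaluates to roughly $T/2$ — but crucially, by an appropriate choice of parameters this is only an $O(1)$ factor above the "natural" fractional cost $2^k$ of that job, \emph{not} the integral flow time. The unit jobs are scheduled integrally and contribute their true (small) flow times. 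The point is that the LP's total cost is $O(1)$ times $\sum_k (\text{multiplicity}_k)\cdot 2^k$, i.e. essentially the total work, whereas summing integral flow times over the ladder forces a $\Omega(\log P)$ blow-up because the large jobs must be serialized and each delays many others.

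**The integral lower bound, and the main obstacle.**

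Then I would lower-bound $\Opt$. The argument is a standard flow-time / scheduling-lower-bound one: on a single machine the total flow time of any feasible integral schedule of a fixed job set is at least the flow time under SRPT, and with a geometric ladder of job sizes one shows $\Omega(\log P)$ by a charging argument — at any time $t$ the number of alive jobs is $\Omega(\log P)$ on average, because after a burst of releases the schedule needs $\Omega(\sum_k 2^k)$ time to clear the work while short jobs keep getting delayed behind longer ones, or vice versa. Combined with the fact that the profit constraint forbids dropping too many jobs (so the job set is essentially forced), this gives $\Opt = \Omega(\log P)\cdot\Opt(\mathrm{LP})$. The main obstacle I anticipate is the simultaneous calibration: I need the profit target $\Pi$ tight enough that \emph{no} integral solution can avoid the expensive jobs, yet loose enough that a valid feasible schedule exists (so that $\Opt$ is finite and the ratio is meaningful), \emph{and} I need the horizon $T$ and the fractional smearing chosen so the LP cost really is only $O(\sum_k \text{mult}_k \cdot 2^k)$ while respecting constraint (3) ($\sum_j x_{jt}\le 1$ at every $t$). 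Getting all three conditions to hold simultaneously — feasibility, forced inclusion, and the genuinely cheap fractional schedule — is the delicate part; everything else is routine arithmetic on geometric series.
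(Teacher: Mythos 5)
There is a genuine gap here, and it is in the one place you flagged as the crux: how the LP gets to be cheap. Your proposed mechanism --- smearing a large job thinly over a long horizon --- gives no savings at all. If you spread a job of size $p_j$ with $y_j=1$ over a window of length $T$ after its release, then $\sum_t x_{jt} = p_j$ and the average position is about $T/2$, so $f_j \approx \frac{p_j}{\widetilde p_j}\cdot\frac{T}{2} + \frac{p_j}{2} = \Theta(T)$, which is the same order as the integral flow time of a job finishing around time $T$. To make this $O(p_j)$ you must take $T=O(p_j)$, i.e.\ not smear at all. The actual leverage in the paper's construction is different and is the idea your proposal is missing: with unit profits, constraint (4) is a cardinality constraint $\sum_j y_j \ge \Pi$, and the LP can satisfy it by scheduling only \emph{half} of each of $k$ large jobs ($y_j = 1/2$), each half packed contiguously into a gap of exactly the right size immediately at its release, at cost $O(2^j)$ per job. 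These halves contribute $k/2$ to the count at total cost $O(2^{k})$, whereas an integral solution must commit to $k/2$ \emph{whole} large jobs.

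Your integral lower bound also does not go through as stated. On a single machine the generic ``$\Omega(\log P)$ jobs alive on average'' argument is the wrong tool --- SRPT is exactly optimal there without outliers, and the paper explicitly notes the gap is entirely an artifact of the profit constraint. The paper instead releases $M$ unit jobs back-to-back in a long ``grey block'' and chooses $\Pi = M + k/2 + 1$; an exchange argument shows every optimal solution takes all $M$ unit jobs, SRPT then keeps the machine busy on them throughout the grey block, and the release gaps (``white blocks'') are deliberately too short for any large job to finish before the grey block starts. Hence each of the $k/2$ selected large jobs has flow time at least $M$, giving $\Opt \ge Mk/2$ against an LP cost of $O(M + 2^k)$; setting $M = 2^{k+1}$ yields the $\Omega(k)=\Omega(\log P)$ ratio. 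Without the half-scheduling idea and the forced unit jobs, your calibration problem (feasibility, forced inclusion, cheap LP) has no solution, because the only way the LP beats every integral schedule here is by paying for fractions of jobs that the integral solution must pay for in full.
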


\begin{proof}
  Consider an instance where there are $k+1$ {\em large} jobs  numbered
  $1, 2, \ldots, k+1$. Jobs $1, 2,\ldots, k$ have processing times $2^2, 2^3, \ldots, 2^{k+1}$
  respectively and job $k+1$ has a processing time of $2^{k+1}$. In addition, there are $M = M(k)$ {\em small} jobs of
  unit processing time, where $M$ is a parameter whose value will be
  determined later. Large jobs $1,2, \ldots, k$ arrive in decreasing order of processing
  time, where job $j$ arrives at the beginning of the white block
  numbered $j$ in \lref[Figure]{integrality-gap}. White block $j$ occupies
  $2^j$ time units. Job $k+1$ arrives at the beginning at the white block numbered $k+1$ which occupies $2^{k+1}$ time units. There is also a large grey block occupying $M$ time
  units; the arrivals of small jobs are uniformly spaced in this block
  (starting at the left endpoint) with a gap of $1$. Now suppose we are
  required to schedule $M + k/2+1$ jobs.
\begin{figure}[!htbp]
\begin{center}
\includegraphics[scale=0.55]{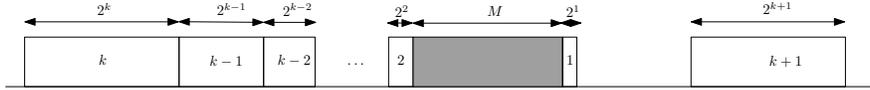}
\end{center}
\caption{{\small A schematic description of the integrality gap instance.}} \label{integrality-gap}
\end{figure}

\noindent
We first observe that the optimal schedule picks every small job (as well as $k/2+1$ large jobs). To see why, suppose one or more small jobs have not been picked, let $q$ be the minimal index of such a job.
Note that once we pick a subset of $M + k/2+1$ jobs, an optimal schedule is determined by employing the {\em shortest remaining processing time} rule (see, for example, \cite{Baker74}).
Further, from the sizes of the white blocks, we see that even if a large job is scheduled without being preempted since its release date, it would have a remaining processing time of at least $2$ at the beginning of the grey block. Therefore, from the SRPT rule, it is clear that the first $q-1$ small jobs are scheduled in the first $q-1$ time units of the grey block.
Thus, at the point when job $q$ is released, any large job has a remaining processing time of at least $2$. It follows that, by picking $q$ and dropping some large job, we can obtain a smaller flow time, implying that the schedule under consideration cannot be optimal.

Further, it is optimal to schedule large job $k+1$. If there is a solution which does not, we can schedule it while skipping one additional large job out of jobs $1,2,\ldots,k$ to improve on the average flow time (this holds if $M \geq 2^{k+1}$, which we will ensure later). Therefore, the value of $k^*$ in the LP (which denotes the largest class scheduled in an optimal solution) will be $k+1$.

Based on the above SRPT observation, we can conclude that each small job will be contiguously processed to completion immediately after its release date, and as a result no large job (from the set of jobs $1, 2, \ldots, k$) can be completed any sooner than the right endpoint of the grey block. Hence, every large job picked incurs a flow time of at least $M$, meaning that $\Opt \geq Mk/2$. On the other hand, a fractional solution can fully schedule \emph{every} small job as soon as it arrives, and schedule \emph{half} of each large job $j$ into white block $j$. It can also schedule large job $k+1$ completely into white block $k+1$. It is not difficult to verify that the cost of this solution is at most $M + \sum_{j = 1}^k 2^j + 2^{k+1} < M + 2^{k+2}$. Therefore, by setting $M = 2^{k+1}$, we have
{\small
\[ \frac{ \Opt }{ \Opt(\rm LP) } \geq \frac{ Mk }{ 2(M + 2^{ k+2 }) } = \frac{ 2^{k+1} k }{ 2(2^{k+1} + 2^{ k+2 }) } = \frac{ k }{ 6 } = \Omega( \log P ) \ , \]%
}
where the last equality holds since $P = 2^{k+1}$.
\end{proof}

Note that this gap instance is on a single machine, for which we know
that the shortest remaining processing time policy (SRPT) is optimal in the non-outlier case. However our results
eventually show that this is as bad as it gets---we show an upper bound
of $O(\log P)$ for the integrality gap even for identical machines!

\subsection{The Flow-time Rounding Algorithm: General Game Plan and Some Hurdles}
\label{sec:plan}

Before we present our algorithm in detail, let us give
a high-level picture and indicate some of the complicating factors over
the earlier work. Previous LP-based rounding techniques
\cite{GargK06,GargK07} relied on the fact that if we rearrange the jobs
of length roughly $2^k$---call such jobs ``class-$k$'' jobs---among the
time slots they occupy in the fractional solution, the objective
function does not change much; these algorithms then use this
rearrangement to make the schedule feasible (no job simultaneously
scheduled on two machines) and even non-migratory across machines. We are currently
considering the single machine case, so these issues are irrelevant for
the time being (and we will come back to them later)---however, we need
to handle jobs that are fractionally picked by the LP. In particular, we
need to swap ``mass'' between jobs to pick an integral number of jobs to
schedule. And it is this step which increases the LP cost even in the
case of a single machine. Note that we essentially care only about the
$y_j$ value for each job $j$, which indicates the extent to which this
job is scheduled---if we could make them integral without altering the
objective by much, we would be done!

However, na{\"i}ve approaches to make the $y_j$'s integral may have bad
approximation guarantees. E.g., consider taking two consecutive
fractional jobs $j$ and $j'$ with similar processing times (observe that
jobs with similar processing times have similar contributions to the
objective, except for the release date component) and scheduling more of
the first one over the second. If the second job $j'$ has \emph{even
  slightly smaller} processing time than $j$ has, we would run out of
space trying to schedule an equal fraction of $j$ over $j'$, and this
loss may hurt us in the (hard) profit requirement. In such a case, we
could try to schedule $j'$ over $j$, observing that the later job $j'$
would not advance too much in time, since $j$ and $j'$ were consecutive
in that class and have similar processing times---the eventual hope
being that given a small violation of the release dates, we may be able
to shift the entire schedule by a bit and regain feasibility.

But this strategy could lead to arbitrarily bad approximations: we could
keep fractionally growing a job $j$ until (say) $2/3$ of it is
scheduled, only to meet a job $j'$ subsequently that also has $2/3$ of
it scheduled, but $j'$ has smaller processing time and therefore needs
to be scheduled over $j$. In this case, $j$ would shrink to $1/3$, and
then would start growing again---and repeated occurrences of this might
cause the flow time for $j$ to be very high. Indeed, trying to avoid
such situations leads us to our algorithm, where we look at a
\emph{window} of jobs and select an appropriate one to schedule, rather
than greedily running a swapping process. To analyze our algorithm, we
charge the total increase in the fractional flow time to the
\emph{fractional makespan} of the LP solution, and show that each class
of jobs charges the fractional makespan  at most twice.

\subsection{Notation and Preliminaries}

We partition the collection of jobs into \emph{classes}, with jobs in
class ${\cal C}_k$ having $p_{j} \in (2^{k-1}, 2^k]$. Notice that
$\widetilde{p}_{j} = 2^k$ for every $j \in {\cal C}_k$, and the class of
interest with highest index is $\mathcal{C}_{k^*}$. Given a fractional
solution $(x,y,f)$, we say that job $j$ is {\em fully scheduled} if $y_j
= 1$, and {\em dropped} if $y_j = 0$; in both cases, $j$ is {\em
  integrally scheduled}. Let $\flow( x, y, f ) = \sum_{j = 1}^n f_j$ be
the fractional cost; note that this is \emph{not the same} as the actual
flow time given by this solution, but rather an approximation. Let
${\cal P}(x,y,f) = \sum_{j = 1}^n \sum_{t = 0}^T x_{jt}$ be the total
fractional processing time.  Since each job $j$ gets $x_{jt}$ amount of
processing time in $[t,t+1)$, the cost of $(x,y,f)$ remains unchanged if
all jobs are processed during the first part $[t, t + \sum_{j = 1}^n
x_{jt})$ of this unit interval; we therefore refer to $[t + \sum_{j =
  1}^n x_{jt}, t+1)$ as the {\em free time interval} in $[t,t+1)$.


We say that an LP solution $(x,y,f)$ is ``non-alternating'' across each class if
the fractional schedule does not alternate between two jobs of the
same class. Formally, the schedule is ``non-alternating'' if for class
$k$ and any two class-$k$ jobs $j$ and $j'$, if $y_j, y_{j'} >
0$ and $r_j < r_{j'}$ (or $r_j = r_{j'}$ and $j < j'$), then for any
times $t, t'$ such that $x_{jt} > 0$ and $x_{j't'} > 0$, it holds that
$t \leq t'$. We call a solution ``packed'' if there is no free time
between the release date of a job, and the last time it is scheduled by
the LP solution. The following lemma is proved in
\lref[Appendix]{sec:flowalgo-proofs-I}; we assume that we start off with
such a solution.
\begin{lemma}
  \label{lem:non-alt}
  There is an optimal LP solution $(x^*,y^*,f^*)$ that is
  non-alternating and packed.
\end{lemma}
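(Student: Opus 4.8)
The plan is to start with an arbitrary optimal LP solution $(x,y,f)$ and transform it in two stages — first making it ``packed,'' then ``non-alternating'' — while never increasing the objective $\sum_j f_j$. For the packed property: fix a job $j$ and suppose there is free time in some interval $[t,t+1)$ with $t$ lying strictly between $r_j$ and the last slot where $j$ is scheduled. I would move the latest-scheduled mass of $j$ leftward into that free time. Since we only shift a job's processing mass to an earlier time slot (and only into genuinely free space, so constraint (3) is preserved and no other job is disturbed), the contribution $\frac{x_{jt'}}{\widetilde p_j}(t'+\tfrac12 - r_j)$ strictly decreases for the moved mass, so $f_j$ does not increase; constraints (2), (4), (5), (6) are untouched. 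Iterating this over all jobs and all free intervals (processing the pairs in a fixed order, e.g.\ always eliminating the leftmost free-slot/rightmost-mass violation) terminates and yields a packed solution of no greater cost.

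For the non-alternating property: consider a single class ${\cal C}_k$ and order its jobs by $(r_j, j)$ lexicographically, say $j_1, j_2, \dots$. The key observation is that for two class-$k$ jobs, the per-unit coefficient $\frac{1}{\widetilde p_j}(t+\tfrac12 - r_j)$ in constraint (1) uses the \emph{same} $\widetilde p_j = 2^k$, and differs between $j$ and $j'$ only through $r_j$. So if $r_j < r_{j'}$ and the schedule ``alternates'' — i.e.\ some mass of $j$ sits at a time $t$ later than some mass of $j'$ at time $t' < t$, with both $t, t' \ge r_{j'}$ — then swapping a common amount $\delta = \min(x_{jt}, x_{j't'})$ of mass between the two slots (give slot $t'$ to $j$, slot $t$ to $j'$) changes the total flow contribution by $\delta\bigl[(t' - r_j) + (t - r_{j'})\bigr]/2^k - \delta\bigl[(t - r_j) + (t' - r_{j'})\bigr]/2^k = 0$. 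The swap is feasible: it preserves constraint (3) trivially (we exchange mass within a class at two fixed time slots), it preserves each $y_j$ via constraint (2), and since $t' \ge r_{j'} > r_j$ the mass given to $j$ still respects $j$'s release date (constraint (5)) — one must be careful to only perform swaps where the later job's mass moves to a slot at or after its own release date, which is exactly the alternation condition. Repeatedly applying such swaps within each class (ordered by a potential like $\sum$ of mass-positions weighted by job index, which strictly decreases) eliminates all alternations at zero cost, and doing this class by class gives the non-alternating property.

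Finally I would check the two operations are compatible: re-sorting mass within a class to be non-alternating only moves mass among slots already occupied by that class, so it cannot create new free-time gaps inside any job's active window — hence packedness is preserved by the second stage. (If one is worried about this, one can instead interleave the two operations and argue termination via a combined potential.) Taking the limit of this finite process yields an optimal solution $(x^*, y^*, f^*)$ that is both non-alternating and packed, proving the lemma.

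\medskip
\noindent\textbf{Expected main obstacle.} The delicate point is the feasibility bookkeeping in the swap step — specifically ensuring that every mass transfer respects release dates (constraint (5)), which forces the swap to be performed only in the precise ``alternating'' configuration and in the right direction (later job's mass to the later slot, earlier job's mass to the earlier slot, with the earlier slot still $\ge$ the later job's release date). Getting the exact-cancellation computation to go through requires that $\widetilde p_j$ be constant across the class, which is why the argument is per-class; handling the boundary cases ($r_j = r_{j'}$, or chains of three or more alternating jobs) and exhibiting a monotone potential that certifies termination is the part that needs the most care, though none of it is conceptually hard.
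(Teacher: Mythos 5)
Your proposal is correct, but it reaches the lemma by a genuinely different route than the paper. The paper performs, for each class ${\cal C}_k$, a single \emph{global} rearrangement: it removes all class-$k$ mass, and reschedules a $y_j^*$ fraction of each $j\in{\cal C}_k$ in release-date order, as early as possible within ${\cal T}(k)$ (the union of the class-$k$ slots and all free time) --- this produces both properties at once. It then certifies that the cost does not increase via a majorization/summation-by-parts argument: the tail masses $V_{k,t}=\sum_{j\in{\cal C}_k}\sum_{t'\ge t}x_{jt'}$ can only decrease pointwise, and since $\widetilde p_j=2^k$ is constant on the class and each $\sum_t x_{jt}$ is preserved (so the $r_j$ terms cancel), the change in $\flow$ equals $\sum_k 2^{-k}\sum_t\bigl(V_{k,t}(\widehat{x})-V_{k,t}(x^*)\bigr)\le 0$. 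You instead use a \emph{local} exchange argument --- pairwise swaps whose cost change is exactly zero --- preceded by a separate leftward-shift pass for packedness. The two arguments rest on the same two facts (constant $\widetilde p_j$ within a class, conservation of each job's total mass), and your version makes the cost-neutrality of each individual swap transparent; what the paper's one-shot rearrangement buys is that it sidesteps exactly the two points you flag as delicate. First, termination: a strictly monotone bounded potential does not by itself give finitely many swaps, and with $\delta=\min(x_{jt},x_{j't'})$ the support of $x$ can grow; the global sort avoids this entirely. Second, your stated reason that the swaps preserve packedness (``free time is unchanged'') is not quite sufficient, because a later-released job $j'$ pushed to a later slot $t$ has its active window $[r_{j'},t]$ \emph{grow}; the claim is still true, but the argument needed is that the displaced earlier job $j$ was packed and occupied $t$, so $(r_j,t)\supseteq(r_{j'},t)$ contained no free time --- an inductive step you should make explicit. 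Neither point is a conceptual gap, but both are places where the paper's formulation is doing quiet work that your local version has to redo by hand.
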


\subsection{The Flow-Time Rounding Algorithm}

At a high level, the rounding algorithm proceeds in two stages.
\begin{OneLiners}
\item In Stage~I, for each $k$, we completely schedule
  almost as many class-$k$ jobs as the LP does fractionally (up to an additive two jobs). The main challenge, as sketched above, is to do
  this with only a small change in the fractional flow time and the
  processing time of these jobs.

\item In Stage~II, we add in at most two class-$k$ jobs to
  compensate for the loss of jobs in Stage~I. Since we add only two
  jobs per class, we can show that the additional flow time
  can be controlled.
\end{OneLiners}

\subsubsection{Flow-Time Rounding: Stage I}
\label{sec:stage1}

Recall that we want to convert the non-alternating and packed optimal
solution $(x^*, y^*, f^*)$ returned by the LP into a new solution $(x',
y', f')$ where at least $\lfloor \smash{\sum_{j \in {\cal C}_k}} y^*_j
\rfloor - 1$ class-$k$ jobs are \emph{completely scheduled}. The
algorithm operates on the classes one by one. For each class, it
performs a \emph{swapping phase} where mass is shifted between jobs in
this class (potentially violating release dates), and then does a
\emph{shifting phase} to handle all
the release-date violations. 

\medskip \noindent {\bf Swapping Phase for Class-$\bs{k}$.}
Given the non-alternating and packed solution $(x^*, y^*, f^*)$, we run the algorithm for the swapping phase given in Algorithm $2$.

\begin{algorithm}[h!]
  \caption{Class-$k$ Swapping}
  \label{alg:stage2}
  \begin{algorithmic}[1]
\STATE  \label{swap:step1} \textbf{set} $(x', y', f') := (x^*, y^*, f^*)$. Repeat the steps \ref{swap:step2}-\ref{swap:step5} until
$\lfloor \smash{\sum_{j \in {\cal C}_k}} y^*_j \rfloor - 1$ class-$k$ jobs are completely scheduled in $(x', y', f')$.
\STATE  \label{swap:step2}\textbf{advance} all class-$k$ jobs as much as possible without violating
  release dates (for jobs already violating release dates, don't advance their starting time any further) within the time intervals that are either free or are occupied by class-$k$ jobs.
\STATE  \label{swap:step3}\textbf{let} $j_1$ the first fractionally scheduled job in the current LP
  solution $(x', y', f')$. Let $j_{q+1}$ be the first class-$k$ job scheduled after
  $j_1$ which has processing time $p_{j_{q+1}} < p_{j_1}$, and say the
  class-$k$ jobs that are scheduled between $j_1$ and $j_{q+1}$ are
  $j_2, j_3, \ldots, j_q$. Note that all these jobs must have greater
  processing time than $p_{j_1}$. Also, let $\free$ denote the total
  free time between $j_1$ and $j_{q+1}$ in the current schedule.
\STATE  \label{swap:step4} \textbf{if} $\sum_{k=2}^{q} y'_{k} + \free/{p_{j_1}} \geq 1 - y'_{j_1}$,
  then we know that $j_1$ can be \emph{completely} scheduled over the
  jobs $j_2, j_3, \ldots, j_q$ and the free time; \textbf{for} $k = 2$ to $q$, do the following
\begin{OneLiners}
\item[{\footnotesize 4a:}] \textbf{if} there is some free time (of total length, say, $L$) between $j_{k-1}$ and $j_{k}$, schedule a fraction $\Delta = \min(1 - y'_{j_1}, L/p_{j_1})$ of $j_1$ in the free time, and delete a fraction $\Delta$ from class-$k$ jobs at the rear end of the schedule. Update $(x', y', f')$.
\item[{\footnotesize 4b:}] \textbf{schedule} a fraction $\Delta = \min(1 - y'_{j_1}, y'_{j_k})$ of $j_1$ over a fraction $\Delta$ of job $j_k$ (possibly creating some free space).  Update $(x', y', f')$.
\item[{\footnotesize 4c:}] \textbf{if} $k = q$ and there is some free time (of total length, say, $L$) between $j_{q}$ and $j_{q+1}$, schedule a fraction $\Delta = \min(1 - y'_{j_1}, L/p_{j_1})$ of $j_1$ in the free time, and delete a fraction $\Delta$ from class-$k$ jobs at the rear end of the schedule.  Update $(x', y', f')$.
\end{OneLiners}
\STATE  \label{swap:step5} \textbf{else if} $\sum_{k=2}^{q} y'_{k} + \free/{p_{j_1}} <
  1 - y'_{j_1}$, do the following
\begin{OneLiners}
\item delete a total fraction $\min(\sum_{k=1}^{q} y'_{k}, y'_{j_{q+1}})$ from a prefix of jobs $j_1, j_2, \ldots, j_q$, and \emph{advance} the current fractional schedule of the job $j_{q+1}$ to occupy the space created. Update the solution $(x', y', f')$. Note that it may or may not have been possible to
  schedule $j_1$ in the space fractionally occupied by jobs $j_2,
  j_3, \ldots, j_q$ and free time in this interval; for accounting
  reasons we do the same thing in both cases.
\end{OneLiners}
 \end{algorithmic}
\end{algorithm}

\medskip \noindent {\bf Shifting Phase for Class-$\bs{k}$.}
After the above swapping phase for class-$k$ jobs, we perform a {\em
  shifting phase} to handle any violated release dates.  Specifically,
consider the collection of time intervals occupied either by class-$k$
jobs or by free time---by the process given above, this remains fixed
over the execution of the swapping phase. We now shift all class-$k$
jobs to the right by $2 \cdot 2^k$ within these intervals. Of course, we need to prove that this takes care of all release date
violations.

\subsubsection{Analysis for Stage~I}

\begin{lemma}
\label{lem:stage1cost}
The following properties hold true at the end of Stage~I:
\begin{OneLiners}
\item [(i)] ${\cal P}(x', y', f') \leq  2{\cal P}(x^*, y^*, f^*)$
\item [(ii)] The fractional flow time satisfies $\flow(x', y', f') \leq 4 \cdot \flow(x^*, y^*, f^*) + 6 k^* {\cal P}(x^*, y^*, f^*)$.
\item [(iii)] The sum of flow times over all fully scheduled jobs is at most $2 \cdot \flow(x',y',F') + k^* {\cal P}(x',y',F')$.
\end{OneLiners}
\end{lemma}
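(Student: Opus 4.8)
I would prove the three bounds in the order (i)$\to$(ii)$\to$(iii), in each case analyzing the effect of a single class-$k$ iteration — its swapping phase (Algorithm~2) followed by its shifting phase — on the relevant quantity, and then summing the per-class estimates over the $k^*+1$ classes. The shifting phase is harmless for the total processing time, since it merely permutes the content of a fixed family of time slots, so for part~(i) all the work is in the swapping phase. I would track $\sum_{j,t} x_{jt}$ through Steps~4--5 and argue it never increases, except when Steps~4a/4c place $j_1$ into previously free time: every other elementary update either replaces (the same number of slots' worth of) a fraction of some job $j_k$ with $p_{j_k}>p_{j_1}$ by a fraction of the smaller job $j_1$ (Step~4b), which strictly decreases the total processing, or deletes mass outright (Step~5 and the ``delete from the rear'' clauses). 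For the residual increase I would use that Step~2 keeps the class-$k$ jobs advanced maximally to the left, so that the free time between $j_1$ and $j_{q+1}$ that gets consumed can be charged against processing already present, giving ${\cal P}(x',y',f')\le 2{\cal P}(x^*,y^*,f^*)$. This charging, summed over all iterations and classes, is the only non-routine point of part~(i).

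\textbf{Part (ii).} I would split $\flow(x',y',f')-\flow(x^*,y^*,f^*)$ into a swapping contribution and a shifting contribution. The shifting contribution is easy: moving a job $j\in{\cal C}_k$ right by $2\cdot 2^k=2\widetilde p_j$ raises $f_j$ by $\frac{2\widetilde p_j}{\widetilde p_j}\sum_t x_{jt}=2p_jy_j$, so summing over ${\cal C}_k$ and then over classes costs at most $2{\cal P}(x',y',f')\le 4{\cal P}(x^*,y^*,f^*)$ by part~(i). The swapping contribution is the heart of the matter: when $j_1$ is grown over $j_2,\dots,j_q$ and free time its mass moves later in time, while Steps~2 and~5 advance the remaining class-$k$ jobs (notably $j_{q+1}$) and \emph{decrease} their fractional flow. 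The plan is to run the charging argument sketched in \lref[Section]{sec:plan}: bound the net rightward displacement of mass caused by class~$k$ by twice the fractional makespan of the LP solution, where the crucial points are that the window rule of Step~3 prevents us from repeatedly re-growing the same job and that non-alternation and packedness of $(x^*,y^*,f^*)$ bound the fractional makespan by $O({\cal P}(x^*,y^*,f^*))$. Together with the factor-$2$ blow-up incurred by keeping the denominators $\widetilde p_j$ fixed while reusing $f^*$, this yields $\flow(x',y',f')\le 4\flow(x^*,y^*,f^*)+6k^*{\cal P}(x^*,y^*,f^*)$.

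\textbf{Part (iii).} I would prove the general statement that for the fractional solution $(x',y',f')$, actually running the completely scheduled jobs on the single machine (say, via shortest-remaining-processing-time) has total flow time at most $2\flow(x',y',f')+k^*{\cal P}(x',y',f')$. For a fully scheduled job $j\in{\cal C}_k$, let $t_j$ be the last slot with $x'_{jt}>0$; since $\sum_t x'_{jt}=p_j$ and at most one unit of processing sits in any slot, the mass of $j$ cannot be concentrated entirely within the last $p_j$ slots before $t_j+1$, and a short calculation with constraint~(1) gives $t_j+1-r_j\le 2f'_j+O(2^k)$, hence $F_j\le 2f'_j+O(2^k)$ up to the further slack introduced when converting the per-slot-shared fractional template into an integral single-machine schedule — a slack I would bound by one additive $O({\cal P}(x',y',f'))$ per class. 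Summing the $O(2^k)$ terms (which alone contribute $O({\cal P}(x',y',f'))$) and the one-per-class conversion terms over ${\cal C}_0,\dots,{\cal C}_{k^*}$ gives the claimed additive $k^*{\cal P}(x',y',f')$.

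\textbf{Main obstacle.} The single hardest step is the charging argument inside part~(ii): showing that, even though a single swap can move a job's mass far to the right, the \emph{total} fractional-flow increase contributed by each class is only $O({\cal P}(x^*,y^*,f^*))$, i.e.\ that each class ``charges the fractional makespan at most twice.'' This is exactly where the window-based design of Algorithm~2 (as opposed to naive local swapping, whose bad algorithmic gap is described in \lref[Section]{sec:plan}) is essential, and where the final approximation factor $O(\log P)=O(k^*)$ originates.
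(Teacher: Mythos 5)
Your plan follows the same skeleton as the paper's proof (per-class analysis, swapping plus shifting, charging the flow-time increase against the total fractional processing time), but for part (ii) you have only restated the goal of the charging argument rather than given it, and you yourself flag it as the missing hard step. The paper's actual argument defines an explicit charge: when a $\Delta$ fraction of $j_1$ overwrites a fraction of a later job $j_s$, every point of $(r_{j_1},r_{j_s})$ is charged $\Delta p_{j_1}/\widetilde p_{j_1}$, and when $j_1$ is grown into free time starting at $t$, every point of $(r_{j_1},t)$ is charged likewise. Three facts then have to be proved: (a) the per-operation cost increment is dominated by the charge increment (plus, in the free-time case, three times the original flow contribution of the fraction deleted at the rear -- this is where the multiplicative $4$ on $\flow(x^*,y^*,f^*)$ comes from, not from ``keeping the denominators $\widetilde p_j$ fixed'' as you suggest); (b) no free time is ever charged; and (c) each time point is charged on behalf of at most one job per class, so the total charge is at most ${\cal P}(x',y',f')\le 2{\cal P}(x^*,y^*,f^*)$. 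Fact (c) is exactly where the window rule of Step~3 and the re-advancement in Step~2 are used, and none of (a)--(c) appears in your proposal. Without them the claim that ``each class charges the fractional makespan at most twice'' is an assertion, not a proof.

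Two smaller points. In part (i) you attribute the factor $2$ to maximal left-advancement; the real mechanism is that Steps~4a/4c delete, from the rear of the schedule, the \emph{same fraction} $\Delta$ of another class-$k$ job as is newly placed into free time, and two jobs of the same class have processing times within a factor of $2$ of each other, so the added mass is at most the deleted mass. In part (iii) your opening claim is false: the mass of a fully scheduled job \emph{can} sit entirely in the last $p_j$ slots before its completion (that is the unpreempted case, where $f'_j$ essentially equals the flow time). The paper instead uses that $2f'_j$ is at least the actual flow time minus the total preemption time of $j$, and that the Stage~I output has at most one preempted class-$k$ job at any instant, so the preemption times summed over ${\cal C}_k$ are at most ${\cal P}(x',y',f')$; your unexplained ``conversion slack of $O({\cal P})$ per class'' needs exactly this structural property, which you do not establish.
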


The analysis proceeds by a somewhat delicate charging argument and the
basic idea is the following. In \lref[Step]{swap:step4} of the
algorithm, suppose $\Delta$ fraction of a job $j_1$ is being scheduled
over $\Delta$ fraction of a job $j_k$: we will \emph{charge} every point
in the interval $(r_{j_1}, r_{j_k})$ by an amount $\Delta$. In the case
when a $\Delta$ fraction of $j_1$ is being scheduled over an interval of
free time beginning at $t$, we will then charge every point in the
interval $(r_{j_1}, t)$ by the fraction $\Delta$. We then go on to show
that $\flow(x', y', f') - \flow(x^*, y^*, f^*)$ is not too much more
than the total charge accumulated by the interval $[0,T]$ (recall that
$T$ is the last time at which the LP scheduled some fractional job). To
complete the proof, we argue that the total charge accumulated is
$O(\log P) {\cal P}(x^*, y^*, f^*)$.

In \lref[Appendix]{app:stage1proof} we restate Stage~I in
a slightly different way, where we also define the charging process
$\charge$ associated with each step of the algorithm, and give the
complete proof of \lref[Lemma]{lem:stage1cost}.

\subsubsection{Flow-Time Rounding: Stage II}
\label{sec:stage2}

The fractional solution $(x',y',F')$ may not be feasible, since we have
only scheduled $\lfloor \sum_{j \in {\cal C}_k} y_j^* \rfloor - 1$ jobs
from class-$k$. Hence, for each class-$k$, arbitrarily pick the minimum
number of non-fully-scheduled jobs to bring this number to $\lceil
\sum_{j \in {\cal C}_k} y_j^* \rceil$ (at most two per
class). These jobs are preemptively scheduled as soon as possible after
their release date. Since at most two jobs per class are added,
the flow time does not change much. 

\begin{lemma} \label{lem:stage2cost}
  The sum of the flow times of all added jobs is at most $k^*(
  {\cal P}(x',y',F') + 2^{k^* + 2})$.
\end{lemma}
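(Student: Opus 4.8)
The plan is to bound the flow time of each individual added job by the total volume of work that can possibly precede it, and then to add this up over the few jobs that Stage~II inserts. Two pieces of bookkeeping drive everything. First, Stage~II adds at most two jobs per class, and there are $k^*$ relevant classes (indices $1,\dots,k^*$), so at most $2k^*$ jobs are added in total. Second, a class-$k$ job has processing time at most $2^k$, hence the total processing time of all added jobs is at most $\sum_{k=1}^{k^*} 2\cdot 2^k = 2^{k^*+2}-4 < 2^{k^*+2}$. This geometric sum is precisely what produces the $2^{k^*+2}$ term in the statement.

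The core step is a per-job bound. Fix an added job $j$ belonging to class $k$; recall it is preemptively scheduled as early as possible after its release date $r_j$. I would argue that between $r_j$ and its completion time $C_j$ the machine is never idle while $j$ is still waiting --- otherwise $j$ could have been processed earlier --- so $C_j - r_j$ is at most the total volume of work executed by the machine in $[r_j, C_j]$. That work consists only of (i) processing belonging to the jobs of the fractional solution $(x',y',F')$, of total volume at most ${\cal P}(x',y',F')$, and (ii) processing belonging to added jobs, of total volume at most $2^{k^*+2}$ by the count above. Hence $F_j = C_j - r_j \le {\cal P}(x',y',F') + 2^{k^*+2}$. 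To make this clean I would insert the added jobs with \emph{lowest} priority, i.e.\ schedule each of them only in the free intervals left by $(x',y',F')$ (breaking ties among added jobs by release date); this way the completion times --- and hence flow times --- of the already-committed jobs, which were accounted for in \lref[Lemma]{lem:stage1cost}, are untouched, while each added job still waits at most ${\cal P}(x',y',F') + 2^{k^*+2}$ before finishing.

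Summing the per-job bound over the added jobs then gives the lemma: treating the (at most two) jobs added to a common class $k$ together and charging their combined flow time to ${\cal P}(x',y',F') + 2^{k^*+2}$ --- using that both are scheduled as early as possible after their release dates, essentially back-to-back in the free time --- and then adding over the $k^*$ classes yields $k^*\big({\cal P}(x',y',F') + 2^{k^*+2}\big)$. The main obstacle I expect is exactly this last pairing, and more generally making rigorous the claim that ``scheduled as soon as possible after the release date'' never costs more delay than the total work present in the system: one has to be careful about the order in which the added jobs are inserted and about their interaction with the committed schedule, and to verify that the bookkeeping of \lref[Lemma]{lem:stage1cost} survives the insertion. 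The rest is the routine arithmetic of the geometric series and the summation over classes.
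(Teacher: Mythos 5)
Your proof follows essentially the same route as the paper's: bound each added job's flow time by the total work it can wait behind, namely ${\cal P}(x',y',F')$ plus the volume $2\sum_{k\le k^*}2^k < 2^{k^*+2}$ of the at-most-two-per-class added jobs, then sum over classes. The factor-of-two slack you worry about in the final pairing step is present in the paper's own argument as well and is immaterial to the $O(\log P)$ guarantee.
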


\begin{proof}
For a class-$k$, we may have to complete two additional jobs. When we schedule an extra job as soon as possible, it waits only for jobs that were fully scheduled during stage II or for jobs that were added in previous iterations of the current stage. Therefore, its flow time can be at most ${\cal P}(x',y',F') + 2 \sum_{k = 1}^{ k^* } 2^k$, and therefore the total flow time of added jobs is at most $k^*( {\cal P}(x',y',F') + 2^{k^* + 2})$.
\end{proof}

Because $f_j$ is lowerbounded by $\sum_{t} x^*_{jt}/2$, we have that ${\cal P}(x^*, y^*, f^*) \leq 2\cdot\Opt$.
Therefore, \noindent Lemmas \ref{lem:stage1cost} and \ref{lem:stage2cost}
in conjunction with the inequalities ${\cal P}(x',y',F') \leq 2 {\cal
  P}(x^*, y^*, f^*) \leq 4\cdot\Opt$  and $k^*
\leq \log P + 1$, prove the following result for minimizing flow time on a single machine.
\begin{theorem}
  \label{thm:main-flowtime}

  The problem of minimizing flow time on a single machine with unit
  profits can be approximated within a factor of $O(\log P)$.
\end{theorem}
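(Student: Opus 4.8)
The plan is to assemble Theorem~\ref{thm:main-flowtime} purely from the two structural lemmas already established, \lref[Lemma]{lem:stage1cost} and \lref[Lemma]{lem:stage2cost}, together with the trivial bound relating the fractional processing time to $\Opt$. I would first fix an optimal LP solution and invoke \lref[Lemma]{lem:non-alt} to assume it is non-alternating and packed; call it $(x^*,y^*,f^*)$. The starting point is the observation that constraint~(1) of the LP forces $f_j \geq \sum_t x^*_{jt}/2$ for every $j$, and summing over $j$ gives $\frac{1}{2}{\cal P}(x^*,y^*,f^*) \leq \flow(x^*,y^*,f^*) = \opt{LP} \leq \Opt$, where the last inequality is \lref[Lemma]{lem:flowlp_vs_opt}. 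Hence ${\cal P}(x^*,y^*,f^*) \leq 2\Opt$. I would also record $k^* \leq \log P + 1$, which holds because the optimal solution only schedules jobs with $p_j \leq 2^{k^*}$ and $P$ is the ratio of largest to smallest processing time in that solution (so $2^{k^*-1} < p_{\max} \leq 2^{k^*}$ and $p_{\min} \geq 1$ after scaling, giving $2^{k^*} \leq 2P$).

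Next I would run Stage~I of the rounding algorithm to obtain $(x',y',f')$ (written $F'$ in the lemma statements) in which at least $\lfloor \sum_{j \in {\cal C}_k} y^*_j \rfloor - 1$ class-$k$ jobs are fully scheduled, and then run Stage~II to top this up to $\lceil \sum_{j \in {\cal C}_k} y^*_j \rceil$ class-$k$ jobs for each $k$. Since $\sum_k \sum_{j \in {\cal C}_k} y^*_j = \sum_j y^*_j \geq \Pi$ by constraint~(4), and the number of fully scheduled class-$k$ jobs after Stage~II is at least $\lceil \sum_{j\in{\cal C}_k} y^*_j \rceil \geq \sum_{j\in{\cal C}_k} y^*_j$, the total number of scheduled jobs is at least $\sum_k \sum_{j\in{\cal C}_k} y^*_j \geq \Pi$; this checks the hard profit constraint $|S| \geq \Pi$. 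The total actual flow time of the produced integral schedule is the flow time of the jobs fully scheduled at the end of Stage~I plus the flow time of the (at most two per class) jobs added in Stage~II, which is at most $\big(2\flow(x',y',F') + k^*{\cal P}(x',y',F')\big) + k^*\big({\cal P}(x',y',F') + 2^{k^*+2}\big)$ by parts~(iii) of \lref[Lemma]{lem:stage1cost} and \lref[Lemma]{lem:stage2cost}.

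It then remains to bound each term by $O(\log P)\cdot\Opt$. Using part~(i) of \lref[Lemma]{lem:stage1cost}, ${\cal P}(x',y',F') \leq 2{\cal P}(x^*,y^*,f^*) \leq 4\Opt$. Using part~(ii), $\flow(x',y',F') \leq 4\flow(x^*,y^*,f^*) + 6k^*{\cal P}(x^*,y^*,f^*) \leq 4\Opt + 12k^*\Opt = O(\log P)\cdot\Opt$, since $k^* \leq \log P + 1$. For the stray additive term $k^* 2^{k^*+2}$ I would argue that it is also $O(\log P)\cdot\Opt$: indeed $\Opt \geq p_{\max} \geq 2^{k^*-1}$ because the optimal solution actually processes a job of size $p_{\max}$, so its completion time, hence its flow time, is at least $p_{\max}$; thus $2^{k^*+2} = 8\cdot 2^{k^*-1} \leq 8\Opt$ and $k^*2^{k^*+2} \leq 8k^*\Opt = O(\log P)\cdot\Opt$. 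Putting the three bounds together gives a total actual flow time of $O(\log P)\cdot\Opt$, which is the claim.

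The only genuinely delicate point is the justification that the additive $2^{k^*+2}$ term is absorbed into $O(\log P)\cdot\Opt$ — everything else is arithmetic assembly of the prior lemmas. This rests on the fact that $\Opt$ is at least the largest processing time actually used, which is immediate from the definition of flow time; the paper's final paragraph glosses over this, but it is the one place where I would be careful, since without it the $2^{k^*+2}$ term could in principle dwarf $\Opt$ when the LP's fractional flow time is tiny relative to the processing times. (Note that in the integrality-gap construction this is exactly the regime of interest — $\Opt$ is $\Theta(Mk)$ while $\flow(\rm LP)$ is $\Theta(M)$ — so getting this inequality right is what makes the $O(\log P)$ bound tight rather than vacuous.)
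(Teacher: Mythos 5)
Your proof is correct and follows essentially the same route as the paper: the theorem is assembled from \lref[Lemma]{lem:stage1cost}, \lref[Lemma]{lem:stage2cost}, the bound ${\cal P}(x^*,y^*,f^*)\leq 2\,\Opt$ coming from $f_j \geq \sum_t x^*_{jt}/2$, and $k^*\leq \log P+1$. You additionally spell out two points the paper leaves implicit---that the profit constraint is met after Stage~II and that the additive $k^*2^{k^*+2}$ term is absorbed because $\Opt \geq 2^{k^*-1}$ (taking $k^*$ to be the minimal valid guess)---both of which are correct and worth making explicit.
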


\subsection{Flow-Time: Identical Parallel Machines}
\label{sec:parallel-flow}



We conclude this section by showing how to combine our single machine
algorithm along with ideas drawn from~\cite{GargK06} to obtain an
$O(\log P)$ approximation for the case of identical machines. We begin
by solving a natural extension of the single machine LP to the setting
of identical machines; let $(x^*,y^*,f^*)$ be the resulting fractional
solution.
\begin{OneLiners}
\item \textbf{Stage I:}  We rearrange the jobs to make the schedule non-migratory, while preserving the fraction to which each job has been scheduled. This modification is done by performing the procedure given in~\cite{GargK06} with the only change being that the jobs in $(x^*,y^*,f^*)$ are fractionally scheduled. The resulting solution $(\hat{x},\hat{y},\hat{f})$ can be shown to have an LP cost of at most $\flow(x^*,y^*,f^*) + O(\log P){\cal P}(x^*,y^*,f^*)$.

\item \textbf{Stage II:}  For each machine, we execute Stage I of the
  single machine algorithm. As a result, almost all jobs are now
  integrally scheduled, leaving at most two fractionally scheduled jobs
  per class and machine, without increasing the LP cost by much. If
  $(x',y',f')$ denotes the LP solution after this stage, we can show
  that $\flow(x',y',f') \leq 2\cdot\flow(x^*,y^*,f^*) + 6\log P \cdot
  {\cal P}(\hat{x},\hat{y},\hat{f})$  (The analysis is identical to that
  for Stage~I of the single machine algorithm, presented in \lref[Appendix]{sec:stage1}).

\item \textbf{Stage III:} We consider all fractionally scheduled class-$k$ jobs (there are at most $2$ per machine) and schedule more of the job with least processing time, while deleting an equal fraction from the one with largest processing time, until either the small job is fully scheduled or the large job has been completely dropped. This procedure is repeated till only at most one fractional class-$k$ job remains. The entire process is repeated for each class. In \lref[Appendix]{sec:parallel-machines-algo}, we show that the LP cost satisfies $\flow(\widetilde{x},\widetilde{y},\widetilde{f}) \leq \flow(x',y',f') + 2 \log P \cdot {\cal P}(x',y',f')$ where $(\widetilde{x},\widetilde{y},\widetilde{f})$ denotes the solution after Stage III. Subsequently, bounding the actual sum of flow times (of the integrally scheduled jobs) also follows closely to \lref[Lemma]{lem:flow_time_cost}.

\item \textbf{Stage IV:} As in the single machine case, we schedule the one remaining fractional job for each class by adding processing time whenever possible. This change does not significantly increase the solution cost much like \lref[Lemma]{lem:stage2cost}.
\end{OneLiners}
The algorithm and the proofs depend heavily on the single machine case; we give more details in
\lref[Appendix]{sec:parallel-machines-algo}. To conclude, we have the following theorem.
\begin{theorem} \label{thm:parallel-flowtime}
The problem of minimizing flow time on identical machines with unit profits can be approximated within a factor of $O(\log P)$.
\end{theorem}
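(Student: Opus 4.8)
The plan is to lift the single-machine rounding algorithm of Theorem~\ref{thm:main-flowtime} to identical machines, following the four-stage blueprint sketched in Section~\ref{sec:parallel-flow}. First I would set up the natural multi-machine LP relaxation: introduce variables $x_{ijt}$ for the fraction of job $j$ processed on machine $i$ in slot $[t,t+1)$, with the flow-time constraint (1) still using $\widetilde{p}_j$, a machine-capacity constraint $\sum_j x_{ijt} \le 1$ for every $(i,t)$, and the global profit constraint $\sum_j y_j \ge \Pi$ where $p_j y_j = \sum_{i,t} x_{ijt}$. The analogue of Lemma~\ref{lem:flowlp_vs_opt} goes through verbatim: an optimal (preemptive, possibly migratory) integral schedule yields a feasible LP solution of no greater cost, so $\opt{LP} \le \Opt$, and as in the single-machine case ${\cal P}(x^*,y^*,f^*) \le 2\,\Opt$ because $f_j \ge \sum_{i,t} x^*_{ijt}/2$.

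Next I would run the four stages and track how the fractional cost $\flow$ and the total fractional processing time ${\cal P}$ evolve. In Stage~I I invoke the migration-removal procedure of Garg--Kumar~\cite{GargK06}, adapted to fractional schedules, producing $(\hat x,\hat y,\hat f)$ that is non-migratory with the same per-job fractions $y_j$ and with $\flow(\hat x,\hat y,\hat f) \le \flow(x^*,y^*,f^*) + O(\log P)\,{\cal P}(x^*,y^*,f^*)$; crucially ${\cal P}$ is unchanged. In Stage~II I apply Stage~I of the single-machine algorithm (the swapping/shifting phases, Algorithm~2) independently to each machine: by Lemma~\ref{lem:stage1cost} applied per machine and summed, ${\cal P}(x',y',f') \le 2\,{\cal P}(\hat x,\hat y,\hat f)$ and $\flow(x',y',f') \le 2\,\flow(\hat x,\hat y,\hat f) + 6\log P\cdot{\cal P}(\hat x,\hat y,\hat f)$, leaving at most two fractional jobs per class per machine. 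Stage~III consolidates these across machines: for each class I repeatedly grow the smallest fractional job at the expense of the largest, and since jobs within a class differ in size by at most a factor of two and shifting the smaller job over the slots of the larger one advances it by at most $2\cdot 2^k$, the cost increase per class is $O(2^k \cdot (\text{number of fractional jobs}))$; summed over classes this contributes $O(\log P)\,{\cal P}(x',y',f')$, leaving at most one fractional job per class overall. Finally Stage~IV schedules that one leftover job per class as early as possible, costing at most ${\cal P} + 2^{k^*+2}$ per class as in Lemma~\ref{lem:stage2cost}, hence $O(\log P)\,{\cal P}$ total. Chaining these inequalities with ${\cal P}(x^*,y^*,f^*) \le 2\,\Opt$ and $k^* \le \log P + 1$ gives a total fractional cost of $O(\log P)\,\Opt$, and the standard fact that for class-$k$ jobs the true preemptive flow time on a fixed machine is within a constant factor of the fractional flow time (cf.\ the passage from $\flow$ to the actual sum of flow times in Lemma~\ref{lem:flow_time_cost}) converts this into an $O(\log P)$-approximate integral schedule; the profit constraint is met since every stage keeps at least $\lfloor\sum_{j\in{\cal C}_k} y^*_j\rfloor$ jobs of each class and Stage~IV restores the last one, so $|S| \ge \sum_k \lfloor\sum_{j\in{\cal C}_k}y^*_j\rfloor$ plus the Stage-IV additions $\ge \Pi$.

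The main obstacle is Stage~III — the cross-machine consolidation of the fractional remnants. In the single-machine algorithm the swap of a job $j'$ over a job $j$ is controlled because $j$ and $j'$ are \emph{consecutive} in the class-$k$ ordering, so moving $j'$ rightward by $O(2^k)$ is harmless; across machines the two fractional jobs of a class living on different machines need not be temporally close, so naively swapping one over the other could advance it arbitrarily far in time. The fix, which I would need to justify carefully, is the charging argument hinted at in Section~\ref{sec:parallel-flow}: rather than swapping in the time domain, one grows the smaller job \emph{within its own machine's free time and class-$k$ slots} while deleting mass from the larger job on its machine, so that the release-date violation incurred is again $O(2^k)$ per swap and the extra fractional flow time charged to each class is bounded by $O(1)$ times the fractional makespan of that class — exactly the "each class charges the fractional makespan at most twice" philosophy of Section~\ref{sec:plan}. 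Verifying that this charging is valid when free time and job slots are interleaved across $m$ machines, and that the per-class bound still sums to $O(\log P)\,{\cal P}$, is the delicate part; everything else is a bookkeeping exercise that mirrors the single-machine proof. Full details are deferred to Appendix~\ref{sec:parallel-machines-algo}.
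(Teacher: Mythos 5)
Your proposal follows essentially the same four-stage route as the paper (migration removal \`a la Garg--Kumar, per-machine single-machine rounding, cross-machine consolidation of fractional remnants via a charging scheme, then one leftover job per class), and the final argument you settle on for Stage~III --- growing the smaller job in free time on its own machine, charging the interval from its release date to that free slot, and bounding the total charge per class by $O(1)$ times the fractional processing time --- is exactly the paper's. One small caution: your first accounting of Stage~III (``cost increase per class is $O(2^k\cdot\#\text{fractional jobs})$'' with an $O(2^k)$ release-date violation per swap) is not right --- the smaller job is moved \emph{later}, so no release date is violated, and the cost of a single swap is $(t_2-r_{j_2})\Delta$, which need not be $O(2^k)$ --- but your subsequent charging-based version supersedes it and is the correct one.
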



\medskip \noindent {\bf Acknowledgements:} We would like to thank Nikhil Bansal, Chandra Chekuri, Kirk Pruhs, Mohit Singh, and Gerhard Woeginger for useful discussions. We would also like to thank the anonymous reviewers of an earlier version of this paper for several helpful comments.

{\small
\bibliographystyle{abbrv}
\bibliography{scheduling}
}
\appendix

\section{Proofs from Section~\ref{flowtime}}
\label{sec:proofs-flowtime}

\MySkip{
\subsection{Proof of Lemma~\ref{lem:flowlp_vs_opt}} \label{sec:flowlp_vs_opt}

Given an optimal solution for the given instance, we construct a corresponding LP solution in a natural way, by setting $x_{jt} = \Delta p_{j}$ when the optimal solution schedules a $\Delta$ fraction of job $j$ in the time interval $[t,t+1)$. It is easy to verify that the profit constraint is satisfied. Now, consider a particular job $j$ scheduled in the optimal solution. We proceed by showing that the term $f'_{j} = \sum_{t=0}^{T} ( \frac{x_{jt}}{p_{j}} (t + \frac{1}{2} - r_{j}) + \frac{x_{jt}}{2} )$ is a lower bound on the flow time of $j$ (notice that $f'_{j}$ has $p_j$ in the denominator where $f_j$ had $\widetilde{p_{j}}$).

Suppose the optimal solution completes processing $j$ at $C_{j}$. The flow time is therefore $C_{j} - r_{j}$, whereas the worst case for the LP is when $j$ is contiguously scheduled in the time interval $[C_{j} - p_{j},C_{j})$; otherwise, some fraction is scheduled earlier, and the contribution to $f'_{j}$ can only decrease. Consequently,
\[ f'_{j} \leq \sum_{t=C_{j} - p_{j}}^{C_{j}-1} \frac{t + 1/2 - r_{j}}{p_{j}} + \frac{p_{j}}{2} = C_{j}  - r_{j} \ . \]%
Since $f_{j} \leq f'_{j}$, we have $f_{j} \leq C_{j} - r_{j}$. The lemma follows by summing  over all jobs scheduled by OPT.

\subsection{Proof of Theorem~\ref{thm:intgap}} \label{sec:intgapapp}
{
  Consider an instance where there are $k+1$ {\em large} jobs  numbered
  $1, 2, \ldots, k+1$. Jobs $1, 2,\ldots, k$ have processing times $2^2, 2^3, \ldots, 2^{k+1}$
  respectively and job $k+1$ has a processing time of $2^{k+1}$. In addition, there are $M = M(k)$ {\em small} jobs of
  unit processing time, where $M$ is a parameter whose value will be
  determined later. Large jobs $1,2, \ldots, k$ arrive in decreasing order of processing
  time, where job $j$ arrives at the beginning of the white block
  numbered $j$ in \lref[Figure]{integrality-gap}. White block $j$ occupies
  $2^j$ time units. Job $k+1$ arrives at the beginning at the white block numbered $k+1$ which occupies $2^{k+1}$ time units. There is also a large grey block occupying $M$ time
  units; the arrivals of small jobs are uniformly spaced in this block
  (starting at the left endpoint) with a gap of $1$. Now suppose we are
  required to schedule $M + k/2+1$ jobs.
\begin{figure}[!htbp]
\begin{center}
\includegraphics[scale=0.55]{integrality}
\end{center}
\caption{{\small A schematic description of the integrality gap instance.}} \label{integrality-gap}
\end{figure}

\noindent
We first observe that the optimal schedule picks every small job (as well as $k/2+1$ large jobs). To see why, suppose one or more small jobs have not been picked, let $q$ be the minimal index of such a job.
Note that once we pick a subset of $M + k/2+1$ jobs, an optimal schedule is determined by employing the {\em shortest remaining processing time} rule (see, for example, \cite{Baker74}).
Further, from the sizes of the white blocks, we see that even if a large job is scheduled without being preempted since its release date, it would have a remaining processing time of at least $2$ at the beginning of the grey block. Therefore, from the SRPT rule, it is clear that the first $q-1$ small jobs are scheduled in the first $q-1$ time units of the grey block.
Thus, at the point when job $q$ is released, any large job has a remaining processing time of at least $2$. It follows that, by picking $q$ and dropping some large job, we can obtain a smaller flow time, implying that the schedule under consideration cannot be optimal.

Further, it is optimal to schedule large job $k+1$. If there is a solution which does not, we can schedule it while skipping one additional large job out of jobs $1,2,\ldots,k$ to improve on the average flow time (this holds if $M \geq 2^{k+1}$, which we will ensure later). Therefore, the value of $k^*$ in the LP (which denotes the largest class scheduled in an optimal solution) will be $k+1$.

Based on the above SRPT observation, we can conclude that each small job will be contiguously processed to completion immediately after its release date, and as a result no large job (from the set of jobs $1, 2, \ldots, k$) can be completed any sooner than the right endpoint of the grey block. Hence, every large job picked incurs a flow time of at least $M$, meaning that $\Opt \geq Mk/2$. On the other hand, a fractional solution can fully schedule \emph{every} small job as soon as it arrives, and schedule \emph{half} of each large job $j$ into white block $j$. It can also schedule large job $k+1$ completely into white block $k+1$. It is not difficult to verify that the cost of this solution is at most $M + \sum_{j = 1}^k 2^j + 2^{k+1} < M + 2^{k+2}$. Therefore, by setting $M = 2^{k+1}$, we have
{\small
\[ \frac{ \Opt }{ \Opt(\rm LP) } \geq \frac{ Mk }{ 2(M + 2^{ k+2 }) } = \frac{ 2^{k+1} k }{ 2(2^{k+1} + 2^{ k+2 }) } = \frac{ k }{ 6 } = \Omega( \log P ) \ , \]%
}
where the last equality holds since $P = 2^{k+1}$.
}

}

\subsection{Proof of Lemma \ref{lem:non-alt}: Non-Alternating and Compact Optimal Solutions} \label{sec:flowalgo-proofs-I}


Consider doing the following changes for every class $1 \leq k \leq k^*$ in some arbitrary order.
{
\begin{OneLiners}
\item[1.] Let ${\cal T}(k)$ be the union of all time intervals in which
  the LP solution $(x^*, y^*, f^*)$ schedules class-$k$ jobs along with the overall free time.
\item[2.] Use ${\cal T}(k)$ to continuously schedule a $y_j^*$
  fraction of each job $j \in {\cal C}_k$. These fractions are scheduled
  in increasing order of release dates as soon as possible (while respecting release dates).
\end{OneLiners}
}
Let $(\widehat{x}, \widehat{y}, \widehat{f})$ be the resulting LP solution after we
finish this operation. Notice that this solution is non-alternating and packed. Also, every job is still
scheduled to the same extent as before, i.e. $\widehat{y}_j = y^*_{j}$, and consequently, we have, ${\cal P}(\widehat{x}, \widehat{y}, \widehat{f}) = {\cal P}(x^*,y^*,F^*)$. It remains to show that $\flow(\widehat{x}, \widehat{y}, \widehat{f}) \leq \flow( x^*,y^*, f^* )$.
Let $V_{k,t}(x, y, f) = \sum_{j \in {\cal C}_k} \sum_{t' = t}^T x_{jt' }$ be the overall processing time of class-$k$ jobs after time $t$.
Clearly since we are only rearranging class-$k$ jobs (and subsequently advancing jobs whenever possible) within the time intervals in which they were scheduled in $(x^*, y^*, f^*)$,
we have $V_{k,t}(\widehat{x}, \widehat{y}, \widehat{f}) \leq V_{k,t}( x^*,y^*,f^* )$ for every $k$ and $t$. Now,

\vspace{-10pt}
  \begin{MyEqn}
    & & \flow(\widehat{x}, \widehat{y}, \widehat{f}) - \flow( x^*,y^*,f^* ) = \sum_{j = 1}^n \left( \widehat{f}_j - f_j^* \right) \\
    & &  \qquad = \sum_{k=1}^{k^*} \sum_{j \in {\cal C}_{k}} \sum_{t=0}^{T} \left( \left( \frac{\widehat{x}_{jt}}{\widetilde{p}_{j}} \left(t + \frac{1}{2} - r_{j}\right) + \frac{\widehat{x}_{jt}}{2} \right) - \left( \frac{x^*_{jt}}{\widetilde{p}_{j}} \left(t + \frac{1}{2} - r_{j}\right) + \frac{x^*_{jt}}{2} \right) \right) \\
    & &  \qquad = \sum_{k=1}^{k^*} \sum_{j \in {\cal C}_{k}} \sum_{t=0}^{T} \frac{t ( \widehat{x}_{jt} - x^*_{jt} )}{\widetilde{p}_{j}} \\
    & &  \qquad = \sum_{k=1}^{k^*} \frac{ 1 }{ 2^k } \sum_{t=0}^{T} t \sum_{j \in {\cal C}_{k}} ( \widehat{x}_{jt} - x^*_{jt} ) \\
    & &  \qquad = \sum_{k=1}^{k^*} \frac{ 1 }{ 2^k } \sum_{t=0}^{T} \left( V_{k,t}(\widehat{x}, \widehat{y}, \widehat{f}) - V_{k,t}( x^*,y^*,f^* ) \right)  \leq  0\\
  \end{MyEqn}%
  The second equality holds since $\sum_{t=0}^{T} \widehat{x}_{jt} =
  \sum_{t=0}^{T} x_{jt}^*$ for every $j$. The last equality holds since
  $\sum_{t=0}^{T} t \sum_{j \in {\cal C}_{k}} \widehat{x}_{jt} =
  \sum_{t=0}^{T} V_{k,t}(\widehat{x}, \widehat{y}, \widehat{f})$ and $\sum_{t=0}^{T}
  t \sum_{j \in {\cal C}_{k}} x_{jt}^* = \sum_{t=0}^{T} V_{k,t}(
  x^*,y^*,f^* )$.

\subsection{Proof of Lemma \ref{lem:stage1cost}: The Stage~I Algorithm} \label{app:stage1proof}
The first part is simple to prove. Observe that the only step where the Stage~I algorithm would schedule a larger job while deleting a smaller job is in Step~4a (or Step~4c). However, even in this case, since it operates within a particular class, the worst case would be scheduling $\Delta$ fraction of a job $j$ in some free time, while deleting $\Delta$ fraction of a job $j'$ whose processing time is half that of $j$. Therefore, it follows that ${\cal P}(x', y', f') \leq 2 {\cal P}(x^*, y^*, f^*)$.

For the proof of the next two parts, we describe an equivalent form of the swapping algorithm which operates on each time slot $[t,t+1)$ one by one, rather than job by job. As the algorithm proceeds, the solution $(x',y',f')$ keeps
getting updated: initially $(x', y', f') = (x^*, y^*, f^*)$.
We introduce a {\em charging scheme}
$\charge(t,j)$ which is initially set to $0$ for every $t \in [0,T]$ and
$j \in \{1, \ldots, n\}$. It is progressively modified over the course
of the algorithm, and helps us bound the increase in LP cost as the jobs
are make integral.

\medskip \noindent {\bf Revisiting the Swapping phase for class-$\bs{k}$.} If the number of
fully scheduled class-$k$ jobs in the current solution is smaller than $\lfloor
\sum_{j \in {\cal C}_k} y_j' \rfloor - 2$, we first \emph{advance} all
class-$k$ jobs as much as possible within the union of class-$k$
time intervals along with the overall free time. That is, we make sure that there
is no free time between the release date of any fractionally scheduled
class-$k$ job and the last time interval in which some fraction of it
is scheduled. Note that this does not cause any increase in cost of the LP solution.

 Now consider some stage of the swapping stage where $j_{1}$ be the first fractionally scheduled class-$k$ job, and let
$j_{2}, \ldots, j_q$ be a prefix of the class-$k$ jobs scheduled after
$j_{1}$ defined thus: $q$ is the minimal index for which $p_{ j_{q+1} }
< p_{ j_1 }$ or for which $j_q$ is the last scheduled class-$k$ job (see \lref[Figure]{flowalgo-fig}).
For any $s < q$, let $\free(j_s, j_{s+1})$ be the overall amount of free
time between the last interval in which $j_s$ is scheduled and the first
interval in which $j_{s+1}$ is scheduled\footnote{If $j_q$ is the last
  scheduled class-$k$ job, we define $\free(j_q, j_{q+1}) =
  \infty$.}. There are two cases to consider:

\begin{figure}[!htbp]
\begin{center}
\includegraphics[scale=0.55]{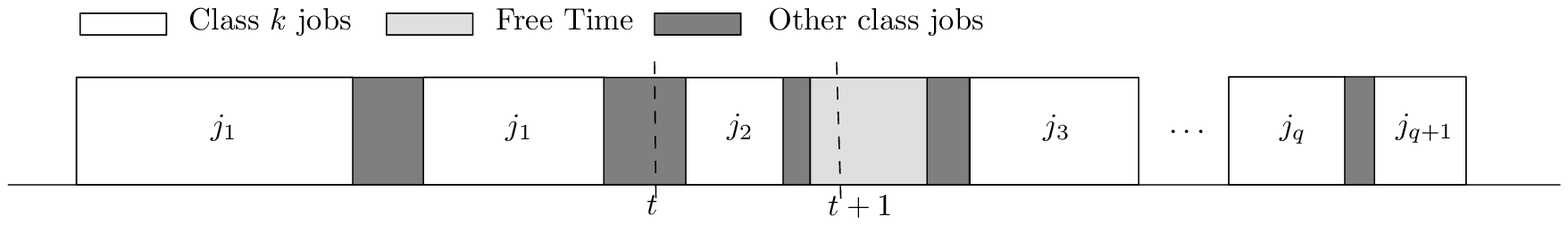}
\end{center}
\vspace{-15pt}
\caption{ {\small A prefix of class-$k$ jobs (schematic illustration).}} \label{flowalgo-fig}
\end{figure}


\noindent\textbf{Case I: $\bs{\sum_{s = 2}^q y_{ j_s }' + \sum_{s
      = 1}^q \free( j_s, j_{s+1} ) / p_{ j_1 } \geq 1 - y'_{j_{1}}}$.}
In this scenario, repeat until $j_1$ becomes fully scheduled:
    \begin{enumerate}
    \item Let $s \in \{ 2, \ldots, q \}$ be the minimal index for which $y_{ j_s }' > 0$. If $y_{ j_2 }' = \cdots = y_{ j_q }' = 0$, let $s = q+1$.

    \item \label{no-free-time} If $\free( j_1, j_s ) = 0$, let $[t,t+1)$ be the first time slot where $j_s$ is scheduled. We now replace a $\Delta_t = \min \{ 1 - y_{ j_1 }', x_{ j_s t }' / p_{ j_s } \}$ fraction of $j_s$ by a $\Delta_t$ fraction of $j_{1}$, possibly creating some free time. We also advance all class-$k$ jobs starting from $j_s$ as much as possible (without violating release dates) within the union of all time intervals in which these jobs are scheduled along with the overall free time. An illustration of this step (before the advancement) is shown in \lref[Figure]{flowalgostep-fig}.

\begin{figure}[!htbp]
\begin{center}
\includegraphics[scale=0.55]{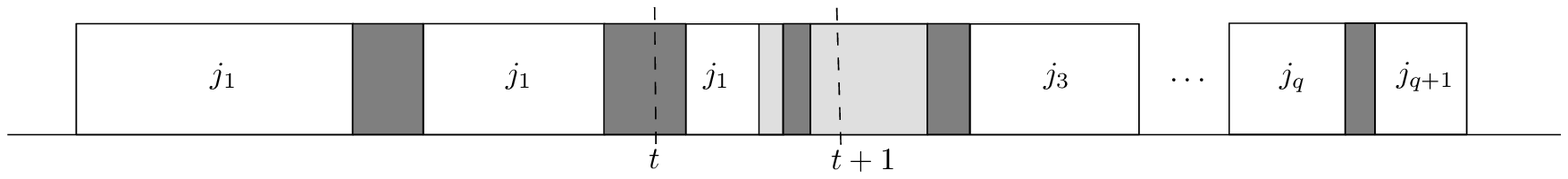}
\end{center}
\vspace{-15pt}
\caption{ {\small Example of Case I. Notice the free time created due to $p_{j_{2}}$ being smaller than $p_{j_{1}}$.}} \label{flowalgostep-fig}
\end{figure}

{\small
\begin{shadebox}
\noindent \underline{Charging}: When a $\Delta_t$ fraction of $j_s$ is replaced by a $\Delta_t$ fraction of $j_{1}$, we say that \emph{each point} of the time interval $(r_{ j_1 }, r_{ j_s })$  pays $\Delta_t p_{ j_1 } / \widetilde{p}_{ j_1 }$ \emph{on behalf} of $j_{1}$. I.e., set $\charge(t',j_1) \leftarrow \charge(t',j_1) + \Delta_t p_{ j_1 } / \widetilde{p}_{ j_1 }$ for all $t' \in (r_{ j_1 }, r_{ j_s })$.
\end{shadebox}

\begin{shadebox}
\noindent \underline{Cost increment}: When we replace $j_s$ with $j_1$, the extra LP cost paid by $j_1$ is
$\frac{ \Delta_t p_{ j_1 } }{ \widetilde{p}_{ j_1 } } \left( t +
\frac{ 1 }{ 2 } - r_{ j_1 } \right) + \frac{ \Delta_t p_{ j_1 }
}{ 2 }$
whereas the cost saved by removing a fraction of $j_s$ is
$\frac{ \Delta_t p_{ j_s } }{ \widetilde{p}_{ j_s } } \left( t + \frac{ 1 }{ 2 } - r_{ j_s } \right) + \frac{ \Delta_t p_{ j_s } }{ 2 }$.
The cost increase is
\[ \ts \left( \frac{ \Delta_t }{ \widetilde{p}_{ j_1 } } \left( t + \frac{ 1 }{ 2 } - r_{ j_s } \right) + \frac{ \Delta_t }{ 2 } \right) ( p_{ j_1 } - p_{ j_s } ) + \frac{ \Delta_t p_{ j_1 } }{ \widetilde{p}_{ j_1 } } ( r_{ j_s } - r_{ j_1 } ) \leq \frac{ \Delta_t p_{ j_1 } }{ \widetilde{p}_{ j_1 } } ( r_{ j_s } - r_{ j_1 } ) \, , \]%
which is exactly the increment in $\int_{0}^{T} \sum_{j \in {\cal C}_k} \charge(t,j) dt$. The inequality above holds because $p_{j_s} \geq p_{j_1}$.

\end{shadebox}
}

    \item \label{some-free-time} Otherwise (i.e., $\free( j_1, j_s ) > 0$) let $I$ be the first free time interval between the last interval in which $j_1$ is scheduled and the first interval in which $j_s$ is scheduled. Also, let $[t,t+1)$ be the first time slot having a non-empty intersection with $I$, and let $j_l$ be the last fractionally scheduled class-$k$ job. Note that $j_l$ cannot be any of the jobs $j_2, \ldots, j_{ q+1 }$, as the number of fully scheduled class-$k$ jobs is at most $\lfloor \sum_{j \in {\cal C}_k} y_j' \rfloor - 2$. Now, we schedule an extra $\Delta_t = \min \{ 1 - y_{ j_1 }', y_{ j_l }', |I \cap [t,t+1) | / p_{j_1} \}$ fraction of $j_1$ in $I \cap [t,t+1)$, while continuously deleting a $\Delta_t$ fraction of $j_l$ from the intervals where this job is scheduled, in reverse order of time.

{\small
\begin{shadebox}
\noindent \underline{Charging}: If $r_{ j_1 } \leq t$, each point of the time interval $(r_{ j_1 }, t)$ pays $\Delta_t p_{ j_1 } / \widetilde{p}_{ j_1 }$ on behalf of $j_{1}$. That is, we set $\charge(t',j_1) \leftarrow \charge(t',j_1) + \Delta_t p_{ j_1 } / \widetilde{p}_{ j_1 }$ for every $t' \in (r_{ j_1 }, t)$.
\end{shadebox}
\begin{shadebox}
\noindent \underline{Cost increment}: The extra cost paid by $j_1$ is
    $\frac{ \Delta_t p_{ j_1 } }{ \widetilde{p}_{ j_1 } } \left( t + \frac{ 1 }{ 2 } - r_{ j_1 } \right) + \frac{ \Delta_t p_{ j_1 } }{ 2 }$,
    while the cost saved by deleting a $\Delta_t$ fraction of $j_l$ is at least $\Delta_t p_{ j_l } / 2$. Since $j_1$ and $j_l$ belong to the same class, the cost increment is at most
    \[ \ts \frac{ \Delta_t p_{ j_1 } }{ \widetilde{p}_{ j_1 } } \left( t - r_{ j_1 } \right) + 3 \cdot \frac{ \Delta_t p_{ j_l } }{ 2 } \ . \]%
    When $r_{ j_1 } \leq t$, the first term is exactly the increment in $\int_{0}^{T} \sum_{j \in {\cal C}_k} \charge(t,j) dt$ as a result of setting $\charge(t',j_1) \leftarrow \charge(t',j_1) + \Delta_t p_{ j_1 } / \widetilde{p}_{ j_1 }$ for every $t' \in (r_{ j_1 }, t)$. In the opposite case, this term is negative, which is why we do not need to modify the charging function. In addition, the term $\Delta_t p_{ j_l } / 2$ lower bounds the contribution of the deleted fraction of $j_l$ towards the quantity $\flow(x^*, y^*, f^*)$.
\end{shadebox}
}
    \end{enumerate}

\noindent\textbf{Case II: $\bs{\sum_{s = 2}^q y_{ j_s }' + \sum_{s = 1}^q \free( j_s, j_{s+1} ) / p_{ j_1 } < 1 - y'_{j_{1}}}$.} In this case, repeat until $j_{ q+1 }$ becomes fully scheduled or until $y_{ j_1 }' = \cdots = y_{ j_q }' = 0$:
    \begin{enumerate}
    \item Let $s \in \{ 1, \ldots, q \}$ be the minimal index for which $y_{ j_s }' > 0$.

  \item Let $[t,t+1)$ be the first time slot where $j_s$ is scheduled. We now replace a $\Delta_t = \min \{ 1 - y_{ j_{q+1} }', x_{ j_s t }' / p_{ j_s } \}$ fraction of $j_s$ by a $\Delta_t$ fraction of $j_{q+1}$, possibly creating some free time.

{\small
\begin{shadebox}
\noindent \underline{Cost increment}: The extra cost paid by $j_{q+1}$ is
      $\frac{ \Delta_t p_{ j_{q+1} } }{ \widetilde{p}_{ j_{q+1} } } \left( t + \frac{ 1 }{ 2 } - r_{ j_{q+1} } \right) + \frac{ \Delta_t p_{ j_{q+1} } }{ 2 }$,
      whereas the cost saved by scheduling a smaller fraction of $j_s$ is
      $\frac{ \Delta_t p_{ j_s } }{ \widetilde{p}_{ j_s } } \left( t +
      \frac{ 1 }{ 2 } - r_{ j_s } \right) + \frac{ \Delta_t p_{ j_s } }{
      2 }$. Since $p_{ s } \geq p_{j_1} > p_{j_{q+1}}$, 
    the cost increment is
      \[ \ts \left( \frac{ \Delta_t }{ \widetilde{p}_{ j_s } } \left( t + \frac{ 1 }{ 2 } - r_{ j_s } \right) + \frac{ \Delta_t }{ 2 } \right) ( p_{ j_{q+1} } - p_{ j_s } ) + \frac{ \Delta_t p_{ j_{q+1} } }{ \widetilde{p}_{ j_s } } ( r_{ j_s } - r_{ j_{q+1} } ) \leq 0 \ , \]%
      and there is no need to modify the charging function.
\end{shadebox}
}
  \end{enumerate}
  Conclude this case by making the following rearrangements:
  \begin{OneLiners}

  \item [1.] Continuously schedule a $y_{ j_{ q+1} }'$ fraction of $j_{ q+1 }$ within the union of class-$k$ intervals and free time, starting at the first interval in which $j_{q+1}$ is currently processed. Even though we may have violated the release date of $j_{ q+1 }$, the earliest time in which any part of this job is processed was advanced by at most $2 \cdot 2^k$ within the union of time intervals where class-$k$ jobs are scheduled and free time intervals, since we initially had $\sum_{s = 2}^q y_{ j_s }' + \sum_{s = 1}^q \free( j_s, j_{s+1} ) / p_{ j_1 } < 1 - y'_{j_{1}}$. This anomaly will be handled in the sequel.

  \item [2.] Following $j_{ q+1 }$, proceed by scheduling a $y_{ j_1 }', \ldots, y_{ j_q }'$ fraction of $j_1, \ldots, j_q$, respectively, as soon as possible (without violating release dates) within the time intervals where class-$k$ jobs are scheduled and free time.
  \end{OneLiners}


We are now ready to prove Lemma \ref{lem:stage1cost}. Noting that the fractional contribution of each class changes only during the course of its corresponding iteration, we may focus our attention on a fixed class-$k$, and bound its fractional cost, $\flow_k(x',y',f') = \sum_{j \in {\cal C}_{k}} F_j'$.

\begin{claim} \label{clm:iter-k-bound}
Just before the shifting phase for class-$k$, we have
\[ \flow_k(x',y',f') \leq 4\cdot \flow_k(x^*,y^*,f^*) + \int_{0}^{T} \sum_{j \in {\cal C}_{k}} \charge(t,j) dt \ . \]%
\end{claim}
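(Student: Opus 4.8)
The plan is to track the class-$k$ fractional cost $\flow_k(x',y',f')=\sum_{j\in{\cal C}_k}f_j'$ through the entire class-$k$ iteration of Stage~I and show that its total increase over $\flow_k(x^*,y^*,f^*)$ is at most $3\,\flow_k(x^*,y^*,f^*)+\int_0^T\sum_{j\in{\cal C}_k}\charge(t,j)\,dt$. First I would note that only operations performed \emph{during class-$k$'s own iteration} can change the class-$k$ fractional cost (operations belonging to other classes touch only their jobs), and that $\charge(t,j)$ is still $0$ for every $j\in{\cal C}_k$ when this iteration starts. Next, the various advancement and repacking steps — the initial advancement of all class-$k$ jobs, the advancement following a replacement in Case~I, and the rearrangement at the end of Case~II that packs $j_{q+1}$ ahead of $j_1,\dots,j_q$ — only ever move processing mass earlier in time or into previously free time, so by monotonicity in $t$ of the term $\frac{x_{jt}}{\widetilde p_j}(t+\tfrac12-r_j)$ they can only \emph{decrease} $\flow_k$; they may create release-date violations, but those are deferred to the shifting phase and are irrelevant to this claim.

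It then suffices to add up the per-step ``cost increments'' of the mass-moving operations, i.e.\ the quantities bounded inside the shaded ``Cost increment'' boxes. In Case~II and in the ``$\free(j_1,j_s)=0$'' branch of Case~I the increment is at most the simultaneous increment of $\int_0^T\sum_{j\in{\cal C}_k}\charge(t,j)\,dt$ (and is in fact $\le 0$ in Case~II, since $p_{j_{q+1}}<p_{j_1}\le p_{j_s}$ and $r_{j_s}\le r_{j_{q+1}}$, so no charge is incurred there at all). In the ``$\free(j_1,j_s)>0$'' branch, where a $\Delta_t$ fraction of $j_1$ is added in a free interval starting at $t$ while a $\Delta_t$ fraction of the last fractional job $j_l$ is deleted, the increment is at most $\tfrac{\Delta_t p_{j_1}}{\widetilde p_{j_1}}(t-r_{j_1})^{+}+\tfrac32\Delta_t p_{j_l}$: the first summand is exactly the increment of the $\charge$ integral (and is dropped when $t<r_{j_1}$, where it would be negative), and the second is at most $3$ times the contribution $\ge\tfrac12\Delta_t p_{j_l}$ that the deleted fraction of $j_l$ makes to $\flow_k(x^*,y^*,f^*)$ — here one uses $j_1,j_l\in{\cal C}_k$, hence $p_{j_1}\le 2p_{j_l}$ and $\widetilde p_{j_l}=\widetilde p_{j_1}=2^k\ge 2$. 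Summing, the first summands telescope to the final value $\int_0^T\sum_{j\in{\cal C}_k}\charge(t,j)\,dt$; adding the starting value $\flow_k(x^*,y^*,f^*)$ and a bound of $3\,\flow_k(x^*,y^*,f^*)$ on the total of the second summands yields exactly the claimed inequality.

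The one place that needs genuine care — and the step I expect to be the main obstacle — is justifying that the ``$\tfrac12\Delta_t p_{j_l}$'' pieces charged against $\flow_k(x^*,y^*,f^*)$ in the Case~I free-time steps do not over-count the original flow time: a priori a job can lose mass in one step and regain it in a later one (for instance by subsequently playing the role of $j_{q+1}$ in Case~II), so the deleted fractions are not literally pairwise disjoint sub-parts of the original schedule. The resolution is to follow individual units of processing mass through the iteration and argue that a unit deleted from $j_l$ in a free-time step is never reinserted into a class-$k$ job afterwards — the ``last fractional job'' $j_l$ can only shrink until it is dropped, and once a job is dropped or fully scheduled its status is frozen — or, in the cases where such a reinsertion does occur (via a Case~II replacement), that the Case~II replacement itself had a nonpositive cost increment, so its ``savings'' can be banked against the later re-deletion. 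Making this bookkeeping airtight, ideally by exhibiting an explicit potential that drops by at least the re-chargeable flow-cost each time mass is moved, is the technical heart of the argument; everything else is the routine summation described above.
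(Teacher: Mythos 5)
Your proposal follows essentially the same route as the paper's proof: the same case-by-case bounds on the per-operation cost increments (Case~II and the no-free-time branch of Case~I charged against the increment of $\int_0^T\sum_{j\in{\cal C}_k}\charge(t,j)\,dt$, the free-time branch charged against that increment plus three times the deleted fraction's contribution of at least $\Delta_t p_{j_l}/2$ to $\flow_k(x^*,y^*,f^*)$), summed over the swapping phase. The double-counting issue you flag as the ``technical heart'' is precisely the point the paper dispatches with the single observation that deleted fractions are never reused to bound cost increments of subsequent operations, so your argument coincides with the paper's once that observation is granted.
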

\begin{proof}
We can bound the cost increment of each operation in the swapping phase as follows:
\begin{itemize}
\item A single operation in case I, \lref[step]{no-free-time}: As mentioned in the algorithm, the cost increment is upper bounded by the increment in $\int_{0}^{T} \sum_{j \in {\cal C}_{k}} \charge(t,j) dt$.

\item A single operation in case I, \lref[step]{some-free-time}: In these settings, the cost increment can be bounded by the increment in $\int_{0}^{T} \sum_{j \in {\cal C}_{k}} \charge(t,j) dt$ plus thrice whatever the deleted fraction of $j_l$ contributes to $\flow(x^*, y^*, f^*)$. It is important to observe that deleted fractions will not be used later on to bound additional cost increments for subsequent operations for this stage.

\item Operations in case II: Because we are only rearranging jobs within class-$k$ space (and never introduce any free time), arguments similar to the proof in Appendix~\ref{sec:flowalgo-proofs-I} show that no extra cost is incurred in this step.
\end{itemize}%
Therefore, at the completion of the swapping phase, we have
\[ \flow_k(x',y',f') - \flow_k(x^*,y^*,f^*) \leq \int_{0}^{T} \sum_{j \in {\cal C}_{k}} \charge(t,j) dt + 3\cdot \flow_k(x^*,y^*,f^*) \ . \]%
\vspace{-40pt}

\end{proof}

We proceed by establishing a few crucial properties of the charging function.

\begin{claim} \label{free-time-lemma}
Just before the shifting phase, no free time ever pays on behalf of any job. In other words, if $\sum_{j \in {\cal C}_k} \charge(t,j) > 0$ then $t$ cannot be free time.
\end{claim}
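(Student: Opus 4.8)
\textbf{Proof proposal for Claim~\ref{free-time-lemma}.}

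The plan is to show that every point $t'$ which ever receives a positive charge on behalf of a job $j_1$ lies strictly between $r_{j_1}$ and the last time at which $j_1$ is scheduled in the current solution, and that such a point is always occupied by some job (and stays occupied until the shifting phase); since an occupied point is not free time, the claim follows. First I would pin down where charges are created. Inspecting the restated Stage~I algorithm, the only increments to $\charge(\cdot,\cdot)$ for class-$k$ jobs occur in Case~I, Step~\ref{no-free-time} and Case~I, Step~\ref{some-free-time}, and in both the increment is applied to points $t'$ in an interval of the form $(r_{j_1},\beta)$, where $j_1$ is the first fractionally-scheduled class-$k$ job at that moment. In Step~\ref{no-free-time} we have $\beta=r_{j_s}$; since the intermediate jobs $j_2,\dots,j_{s-1}$ are dropped and $\free(j_1,j_s)=0$, the first interval of $j_s$ begins exactly where the last interval of $j_1$ ends, so $r_{j_s}$ is at most the last scheduled point of $j_1$. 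In Step~\ref{some-free-time} we have $\beta=t$, the left endpoint of the first slot meeting the free interval that appears immediately after the last interval of $j_1$, so again $\beta$ is no larger than (a unit above) the last scheduled point of $j_1$. Hence in all cases every charged point lies in $(r_{j_1},\mathrm{last}(j_1))$ at the moment of charging, up to an irrelevant additive slack.

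Next I would maintain, as an invariant throughout the class-$k$ swapping phase, that the current solution is \emph{packed}: for every job $j$, no free time lies between $r_j$ and the last time $j$ is scheduled. This holds at the start of the phase by \lref[Lemma]{lem:non-alt}, and each operation preserves it: the explicit ``advance'' sub-steps push all free time rightward (out of every window $(r_j,\mathrm{last}(j))$), and the replacement sub-steps only delete mass from the rear of the class-$k$ schedule, or from the last fractional job $j_l$, or from $j_{q+1}$, which leaves no free time to their left. Combining this invariant with the previous paragraph, at the instant a point $t'$ is charged on behalf of $j_1$ it lies in $(r_{j_1},\mathrm{last}(j_1))$ and therefore is occupied by some job --- not necessarily a class-$k$ job, but at least \emph{some} job --- so $t'$ is not free time. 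Finally, to see that $t'$ remains occupied until the shifting phase, note that once $j_1$ has been fully scheduled (Case~I) or a prefix containing $j_1$ has been dropped and $j_{q+1}$ advanced (Case~II), every subsequent class-$k$ operation acts on jobs scheduled strictly to the right of $j_1$'s final position; by the non-alternating property these jobs have release dates $\ge r_{j_1}$, and the ``advance'' steps only slide mass leftward into existing free time. Since the window $(r_{j_1},\mathrm{last}(j_1))$ contains no free time (the invariant), no mass is ever removed from it and no free time is ever created inside it, so every charged point stays occupied right up to the shifting phase.

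The step I expect to be the main obstacle is verifying that the packedness invariant really is preserved by the replacement sub-steps --- in particular, checking that the transient free time created when a short class-$k$ job is scheduled over a longer one is always either filled again by the immediately following ``advance'' sub-step or pushed to a time beyond $\mathrm{last}(j_1)$, so that it never lands inside an already-charged window; a secondary subtlety is ruling out that the Case~II prefix deletion uncovers a point charged during an \emph{earlier} round for the same class, which should follow because the current first-fractional job always lies to the right of every previously-handled $j_1$, and that already-charged windows, lying to the left, are still packed and hence untouched by any leftward advancement.
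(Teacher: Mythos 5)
Your proposal is correct and follows essentially the same route as the paper: both arguments show that at the moment a point is charged on behalf of $j_1$ it lies in an interval with no free time (using the packedness guaranteed by the advance step together with $\free(j_1,j_s)=0$ in one case and the choice of the first free slot in the other), and then that occupied-ness persists until the shifting phase. The only difference is that you spell out the persistence step in somewhat more detail, where the paper simply asserts that ``it is not difficult to verify'' that the property is preserved for the remainder of the swapping phase.
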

\begin{proof}
We prove the above claim by arguing that, whenever an interval is charged, each of its points is currently dedicated to processing some job. It is not difficult to verify that our algorithm preserves this property till the swapping phase terminates. Consider an operation where some interval is charged.
\begin{itemize}
\item In case I, \lref[step]{no-free-time}, suppose that a $\Delta_t$
  fraction of $j_s$ is replaced by a $\Delta_t$ fraction of
  $j_{1}$. Then, the charging scheme increases $\charge(t',j_1)$ by
  $\Delta_t p_{j_1}/\widetilde{p}_{j_1}$ for every $t' \in
  (r_{j_1},r_{j_s})$. However, we are guaranteed not to have free time
  in the interval $(r_{j_1},r_{j_s})$ because of the fact that there is no
  free time between $r_{j_1}$ and the last interval in which $j_1$ is
  scheduled (any such free time is eliminated when we begin the swapping
  phase for class-$k$), and because $\free(j_1,j_s) =0$.

\item In case I, \lref[step]{some-free-time}, the algorithm picks the first time slot (say, $[t_1,t_1+1)$) which has some free time between the last interval in which $j_1$ is scheduled and the first interval in which $j_s$ is scheduled. Suppose that an extra $\Delta_{t_1}$ fraction of $j_1$ is scheduled, increasing $\charge(t',j_1)$ by $\Delta_{t_1} p_{j_1}/\widetilde{p}_{j_1}$ for every $t' \in (r_{j_1},t_1)$. Note that there cannot be free time between the last interval in which $j_1$ is scheduled and $t_1$ (by the way $[t_1,t_1+1)$ was picked), and also between $r_{j_1}$ and the last interval in which $j_1$ is scheduled. Therefore, there is no free time in $(r_{j_1},t_1)$.
\end{itemize}%
\vspace{-20pt}
\end{proof}

\begin{claim} \label{unique-payment-lemma}
Each point in time pays on behalf of at most one job. That is, for every $t \in [0,T]$,
\[ \left| \left\{ j \in {\cal C}_k : \charge(t,j) > 0 \right\} \right| \leq 1 \ . \]%
\end{claim}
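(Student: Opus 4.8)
The plan is to track \emph{which} points get charged over the entire execution of the swapping phase for a fixed class $k$, and to argue that the set of charged points ``moves to the right'' monotonically, so no point is ever charged on behalf of a second job. First I would set up the key invariant: at any moment during the swapping phase, if $\charge(t,j)>0$ for some $j\in{\cal C}_k$, then $j$ is the \emph{unique} currently-fractionally-scheduled class-$k$ job playing the role of ``$j_1$'' (the first fractional job), and $t$ lies in an interval $(r_{j},\cdot)$ whose right endpoint is some release date $r_{j_s}$ (in Case~I \lref[step]{no-free-time}) or some slot-start $t_1$ (in Case~I \lref[step]{some-free-time}) with $r_j < r_{j_s}$ or $r_j\le t_1$. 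In either case the point $t$ lies strictly \emph{before} $r_{j_s}$ (resp.\ strictly before $t_1$, hence before the first slot where the current $j_s$ or $j_l$ is scheduled). The essential observation is that once $j_1$ gets fully scheduled and we move to the next fractional job $j_1'$, we have $r_{j_1'}\ge r_{j_1}$ by non-alternation, and moreover every point that was charged on behalf of $j_1$ lies at or before a time that is now \emph{occupied by} $j_1$ (since after the operation $j_1$ has been advanced into exactly those slots); so these points can no longer lie in the open interval $(r_{j_1'},\cdot)$ that future charging uses, because $j_1'$ and all later jobs are scheduled strictly after the slots $j_1$ now occupies. That is the heart of the argument, and I would phrase it as: the left endpoint of the ``charged region'' is non-decreasing across successive $j_1$'s, and the charged region for a given $j_1$ ends before the region for the next one begins.

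The key steps, in order: (1) Fix class $k$ and consider the sequence of jobs $j_1^{(1)}, j_1^{(2)},\ldots$ that successively play the role of ``the first fractionally scheduled class-$k$ job'' over the swapping phase. By non-alternation (\lref[Lemma]{lem:non-alt}) and the fact that we always advance jobs as far left as possible, these have non-decreasing release dates, and once $j_1^{(i)}$ becomes fully scheduled it occupies a contiguous (within class-$k$-plus-free space) block ending before the release date / scheduled slots of $j_1^{(i+1)}$ and all later class-$k$ jobs. (2) Show that all charging on behalf of $j_1^{(i)}$ deposits mass only at points $t'$ with $r_{j_1^{(i)}} < t' < (\text{start of the slot currently occupied by the job } j_1^{(i)} \text{ is being swapped over})$; in particular, after the swap, $t'$ is weakly to the left of a slot now occupied by $j_1^{(i)}$. (3) Combine: a point charged on behalf of $j_1^{(i)}$ lies at or before a slot held by $j_1^{(i)}$, whereas any point charged on behalf of $j_1^{(i')}$ with $i'>i$ must lie strictly to the left of the slots of $j_1^{(i')}$, which are strictly to the right of $j_1^{(i)}$'s slots; hence the two charged point-sets are disjoint. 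Case~II does no charging at all, so it is irrelevant here. Therefore each $t$ is charged on behalf of at most one job.

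The main obstacle I anticipate is step (2): making precise the claim that, within a single $j_1$'s lifetime (over the possibly many iterations of the inner ``repeat until $j_1$ is fully scheduled'' loop), the successive charged intervals stay nested to the left of wherever $j_1$ gets placed, even as we interleave Case~I \lref[step]{no-free-time} swaps (which charge up to a release date $r_{j_s}$) with Case~I \lref[step]{some-free-time} swaps (which charge up to a free-slot start $t_1$ and delete from the \emph{last} fractional job $j_l$). One needs that each such charged right endpoint is itself $\le$ the position where $j_1$ is being inserted in that sub-step, and that $j_1$ only moves rightward (or stays) as the inner loop proceeds; this uses the ``packed''/``advance-as-far-left'' structure plus the fact that in \lref[step]{some-free-time} the free slot $I$ sits between $j_1$'s current last slot and $j_s$, hence at or to the right of everything $j_1$ already occupies. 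Once that bookkeeping is pinned down, disjointness across the different $j_1$'s follows cleanly from non-alternation, and the claim is established.
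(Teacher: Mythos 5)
Your overall plan---show that the regions charged on behalf of successive ``first fractional'' jobs are pairwise disjoint because they march to the right---is in the same spirit as the paper's proof (which argues by contradiction that a doubly-charged point $t^*$ cannot exist), and your observations that Case~II never charges and that charging only happens for jobs that end up fully scheduled are both correct. However, your step (3) has a genuine logical gap. You conclude disjointness from two \emph{upper} bounds: points charged for $j_1^{(i)}$ lie at or before $j_1^{(i)}$'s slots, and points charged for $j_1^{(i')}$ lie strictly before $j_1^{(i')}$'s slots, which are further right. Two sets that are each bounded above by increasing thresholds need not be disjoint. What you actually need is a \emph{lower} bound on the later charged region that dominates the upper bound of the earlier one. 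The later region is bounded below by the \emph{release date} $r_{j_1^{(i')}}$, not by where $j_1^{(i')}$ is scheduled---and a class-$k$ job can be released long before it is scheduled. Your justification (``$j_1'$ and all later jobs are scheduled strictly after the slots $j_1$ now occupies'') addresses the scheduled position, so it does not rule out $r_{j_1^{(i')}}$ falling inside the region already charged for $j_1^{(i)}$.

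Closing this gap is exactly where the paper's proof does its work, with two case-specific arguments that your write-up does not supply. For a charge made in the no-free-time step, the charged interval is $(r_{j_1}, r_{j_s})$, and one shows $r_{j_s} \le r_{j'}$ for any job $j'$ charged later: if instead $r_{j'} < r_{j_s}$, then by non-alternation $j'$ precedes $j_s$ in the schedule and would have been fully consumed by $j_1$ before $j_1$ ever reached $j_s$, so $j'$ could never later be fully scheduled (hence never charged). For a charge made in the free-time step, the charged interval is $(r_{j_1}, t)$ where $[t,t+1)$ contains free time; if some later-charged $j'$ had $r_{j'} < t^* < t$, there would be free time strictly between $r_{j'}$ and the first slot where $j'$ is processed, which is impossible because all class-$k$ jobs are advanced at the start of the iteration. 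Without these two release-date arguments your monotonicity claim is unsupported, so as written the proof does not go through.
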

\begin{proof}
We prove the above claim by contradiction. For this purpose, suppose there exists a point in time $t^* \in [0,T]$ that pays on behalf of two jobs, say $j$ and $j'$. By the way we charged jobs in the swapping phase, we know that both $j$ and $j'$ must be fully scheduled. Without loss of generality, we assume that $j$ appears before $j'$ in the schedule $(x^*, y^*, f^*)$.\footnote{This assumption implies $r_j \leq r_{ j' }$ because the LP solution is assumed to be non-alternating.} Consider a single operation in which $t^*$ is charged, paying some amount of behalf of $j$.
\begin{itemize}
\item Suppose \lref[step]{no-free-time} is executed in time slot $[t,t+1)$, where $\Delta_t$ fraction of a job $j_s$ is replaced by $\Delta_t$ fraction of $j$ during which $t^*$ pays on behalf of $j$. Since $t^*$ is charged in this operation, it follows that $t^* \in (r_j,r_{j_s})$. In addition, we have $r_{ j_s } \leq r_{j'}$, or otherwise $j'$ must have been fully replaced by $j$ during previous operations, since $j$ is currently replacing $j_s$. Therefore, $t^* < r_{ j_s } \leq r_{ j' }$, implying that $t^*$ cannot be paying on behalf of $j'$, since our charging scheme guarantees that a time point can pay on behalf of a particular job only when it appears after the release date of this job.

\item On the other hand, suppose \lref[step]{some-free-time} is executed in time slot $[t,t+1)$, where a $\Delta_t$ fraction of $j$ is scheduled. Since $t^*$ is charged, it follows that $t^* \in (r_j,t)$. Now, if $t^*$ pays on behalf of $j'$, we must have $r_{ j' } < t^* < t$, meaning that at the moment there is some free time between $r_{j'}$ and the first interval in which $j'$ is scheduled. Such free time would have been eliminated at the beginning of this iteration as a result of advancing class-$k$ jobs.
\end{itemize}
\vspace{-20pt}
\end{proof}

\begin{claim} \label{clm:no-increase-in-p}
$\int_{0}^{T} \sum_{j \in {\cal C}_{k}} \charge(t,j) dt \leq 2 {\cal P}(x^*, y^*, f^*)$.
\end{claim}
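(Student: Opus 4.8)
We want to bound the total charge $\int_{0}^{T} \sum_{j \in {\cal C}_k} \charge(t,j)\,dt$ accumulated over the class-$k$ swapping phase, and the plan is to reduce it to a pointwise estimate using the two structural properties already established. First I would rewrite the integrand: by Claim~\ref{unique-payment-lemma}, for each fixed $t$ at most one job $j(t) \in {\cal C}_k$ has $\charge(t,j(t)) > 0$, so $\sum_{j \in {\cal C}_k} \charge(t,j) = \charge(t,j(t))$; and by Claim~\ref{free-time-lemma} this is $0$ unless $t$ lies in the occupied (non-free) part of the current schedule. Since the $\charge(\cdot,\cdot)$ values attached to class-$k$ jobs are finalized by the end of the class-$k$ iteration, and the total length of occupied time there is exactly ${\cal P}(x',y',f') \le 2{\cal P}(x^*,y^*,f^*)$ by part~(i) of Lemma~\ref{lem:stage1cost} (processing time is monotone nondecreasing over Stage~I), it suffices to prove the pointwise bound $\charge(t,j) \le 1$ for every $t$ and every $j \in {\cal C}_k$.

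For the pointwise bound I would argue as follows. Charge is placed on behalf of a job $j$ only in the Case~I steps, where $j$ plays the role of the job $j_1$ currently being completed: each such step increments $\charge(t',j)$ by $\Delta_t\, p_j/\widetilde{p}_j \le \Delta_t$ on a subinterval, while \emph{simultaneously} scheduling an extra $\Delta_t$ fraction of $j$. Over a single ``completion episode'' of $j$ its scheduled fraction thus grows monotonically up to $1$, so the relevant $\Delta_t$'s sum to at most $1$, and every point accumulates at most $\big(\sum_t \Delta_t\big)\cdot (p_j/\widetilde{p}_j) \le 1$ worth of charge on behalf of $j$ during that episode.

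What remains is to show that each class-$k$ job undergoes at most one completion episode, i.e.\ that once a class-$k$ job is fully scheduled it is never deleted again; this makes the per-episode estimate into the desired $\charge(t,j) \le 1$. I would establish this by inspecting the deletion steps: in Case~I the rear-deletion substeps only remove mass from the \emph{last fractionally scheduled} class-$k$ job $j_l$, and in Case~II only a prefix of currently-fractional jobs (headed by $j_1$, the first fractionally scheduled job) is removed and then reinserted immediately after $j_{q+1}$ --- so in neither situation can an already-completed job lose any of its mass. Combining this with the first two paragraphs gives $\int_{0}^{T}\sum_{j\in{\cal C}_k}\charge(t,j)\,dt \le 1 \cdot 2{\cal P}(x^*,y^*,f^*)$, as claimed.

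I expect this last point --- that a completed job is never subsequently deleted, hence the per-job charge stays at most $1$ --- to be the main obstacle, since it requires a careful case analysis of Cases~I and~II tracking which jobs can appear in a deleted prefix or suffix, leaning on the non-alternating ordering and on $j_1$ always being the first fractionally scheduled class-$k$ job. The rest (the reduction through Claims~\ref{free-time-lemma} and~\ref{unique-payment-lemma} and the occupied-measure bound from Lemma~\ref{lem:stage1cost}) is routine; and even if one reversion turned out to be unavoidable, there is still a factor-$2$ of slack available in the constants to absorb it, so the bound $2{\cal P}(x^*,y^*,f^*)$ is robust.
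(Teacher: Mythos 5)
Your proposal is correct and follows essentially the same route as the paper: combine Claim~\ref{unique-payment-lemma} (one charged job per point per class) with Claim~\ref{free-time-lemma} (free time is never charged) and the pointwise bound $\charge(t,j)\le p_j/\widetilde{p}_j\le 1$ coming from the fact that the increments $\Delta_t\,p_j/\widetilde{p}_j$ correspond to additional scheduled fractions of $j$ summing to at most $1$, then integrate over the occupied time ${\cal P}(x',y',f')\le 2{\cal P}(x^*,y^*,f^*)$. The only difference is that you make explicit the verification that a completed job is never later deleted (so the $\Delta_t$'s indeed sum to at most $1$), a point the paper's proof asserts implicitly; your side remark that the processing time is monotone nondecreasing over Stage~I is not needed and not quite accurate, but the bound you actually invoke is exactly part~(i) of Lemma~\ref{lem:stage1cost}.
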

\begin{proof}
By \lref[Claim]{unique-payment-lemma}, we know that each point in time pays on behalf of at most one job per class. Also, whenever a point $t$ is charged on behalf of a job $j$, the increment in $\charge(t,j)$ is of the form $\Delta p_{j} / \widetilde{p}_{j}$, where $\Delta$ is the additional fraction of $j$ being scheduled. Therefore, the total amount $t$ can pay on behalf of $j$ is at most $p_j / \widetilde{p}_j \leq 1$. This bound, coupled with the observation that free time never pays on behalf of any job (see \lref[Claim]{free-time-lemma}), proves that
\[ \int_{0}^{T} \sum_{j \in {\cal C}_{k}} \charge(t,j) dt \leq {\cal P}(x',y',f') \leq 2{\cal P}(x^*, y^*, f^*) \ . \]%
The last inequality holds since, throughout stage II, the overall processing time cannot grow by a factor greater than $2$: Whenever we schedule an extra fraction of some class-$k$ job, we also delete an equal fraction from some other class-$k$ job, which in the worst case has half its processing time.
\end{proof}

\begin{claim} \label{clm:iter-k-bound-shifting}
Immediately after the shifting phase, we have
\[ \flow_k(x',y',f') \leq 4\cdot \flow_k(x^*,y^*,f^*) + 6 {\cal P}(x^*, y^*, f^*) \ . \]
\end{claim}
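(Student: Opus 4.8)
The plan is to combine the estimate already in hand for the solution just before the shifting phase with a direct accounting of the effect of the right‑shift on the fractional cost, and then to check that the shift really removes the release‑date violations.

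\emph{Starting point.} By Claim~\ref{clm:iter-k-bound}, immediately before the shifting phase for class‑$k$ the current solution $(x',y',f')$ satisfies
\[ \flow_k(x',y',f') \le 4\cdot\flow_k(x^*,y^*,f^*) + \int_0^T \sum_{j\in{\cal C}_k}\charge(t,j)\,dt , \]
and by Claim~\ref{clm:no-increase-in-p} the charge integral is at most $2\,{\cal P}(x^*,y^*,f^*)$. Hence it suffices to show that the shifting phase, which moves every class‑$k$ job to the right by $2\cdot 2^k$ inside ${\cal T}(k)$ (the union of the class‑$k$ time intervals with the free intervals), increases $\flow_k$ by at most $4\,{\cal P}(x^*,y^*,f^*)$.

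\emph{Cost of the shift.} Let $(x'',y'',f'')$ denote the solution just after the shift. For $j\in{\cal C}_k$ we have $\widetilde p_j = 2^k$, and since the shift does not change $\sum_t x_{jt}$ for any $j$, the $\frac{x_{jt}}{2}$ terms and the $\frac{x_{jt}}{2^k}(\tfrac12 - r_j)$ terms of constraint~(1) cancel in the difference, leaving $f_j'' - f_j' = \frac{1}{2^k}\sum_t t\,(x_{jt}''-x_{jt}')$. Summing over $j\in{\cal C}_k$ and writing $m_t=\sum_{j\in{\cal C}_k}x_{jt}$ for the class‑$k$ processing profile (so $\sum_t m_t' = {\cal P}_k(x',y',f')$, the class‑$k$ processing time just before the shift), we get $\flow_k(x'',y'',f'')-\flow_k(x',y',f') = \frac{1}{2^k}\sum_t t\,(m_t''-m_t')$. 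The crux is that the right‑shift by $2\cdot 2^k$ within ${\cal T}(k)$ carries each unit of class‑$k$ mass to a point at most $2\cdot 2^k$ later in time: the algorithm's ``advance'' step leaves no free time interleaved with the class‑$k$ jobs, so as far as the class‑$k$ profile is concerned ${\cal T}(k)$ acts as one contiguous block and the shift is simply $m_t'' = m_{t-2\cdot 2^k}'$, whence (telescoping) $\sum_t t\,(m_t''-m_t') = 2\cdot 2^k\sum_t m_t' = 2\cdot 2^k\,{\cal P}_k(x',y',f')$. Therefore the increase in $\flow_k$ is at most $2\,{\cal P}_k(x',y',f') \le 2\,{\cal P}(x',y',f') \le 4\,{\cal P}(x^*,y^*,f^*)$, the last step being the factor‑$2$ growth of the total processing time already used in Claim~\ref{clm:no-increase-in-p}. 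Adding this to the starting point gives $\flow_k(x',y',f') \le 4\cdot\flow_k(x^*,y^*,f^*) + 6\,{\cal P}(x^*,y^*,f^*)$. Finally one records that the shift actually eliminates all release‑date violations: by the Case~II analysis every such violation is by at most $2\cdot 2^k$ when measured inside ${\cal T}(k)$, so a uniform right‑shift by $2\cdot 2^k$ within ${\cal T}(k)$ clears them all at once.

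\emph{Main obstacle.} The delicate point is precisely the displacement bound — making rigorous that the ``advance'' steps leave ${\cal T}(k)$ with no internal free time, so that the within‑${\cal T}(k)$ shift is an honest actual‑time shift of at most $2\cdot 2^k$ per unit of class‑$k$ mass and never has to push class‑$k$ mass across a block occupied by other‑class jobs at extra cost. Since the target bound is tight ($2+4=6$), there is no slack to absorb any such overhead, which is why this structural fact about ${\cal T}(k)$ must be argued carefully; everything else is the telescoping identity together with the bookkeeping inequalities already established.
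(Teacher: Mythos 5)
Your overall decomposition is the same as the paper's: take the pre-shift bound $\flow_k(x',y',f') \le 4\flow_k(x^*,y^*,f^*) + 2{\cal P}(x^*,y^*,f^*)$ from Claims~\ref{clm:iter-k-bound} and~\ref{clm:no-increase-in-p}, and add a bound of $2{\cal P}(x',y',f') \le 4{\cal P}(x^*,y^*,f^*)$ for the shifting phase. The paper simply cites Garg--Kumar for the second step; you try to prove it directly, and that is where there is a genuine gap.

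Your key assertion is that the right-shift by $2\cdot 2^k$ \emph{within} ${\cal T}(k)$ is ``an honest actual-time shift,'' i.e.\ $m_t'' = m_{t-2\cdot 2^k}'$, so each unit of class-$k$ mass advances by at most $2\cdot 2^k$ in real time. This is false in general. The set ${\cal T}(k)$ is the union of the class-$k$ intervals and the free intervals; its complement inside the busy horizon consists of intervals occupied by jobs of \emph{other} classes, and nothing in the algorithm prevents such blocks from being interleaved between class-$k$ intervals (e.g.\ a short class-$(k-1)$ job released and processed in the middle of a long class-$k$ job's span). The ``advance'' step only removes \emph{free} time between a class-$k$ job's release date and its schedule; it cannot remove other-class blocks, so a within-${\cal T}(k)$ displacement of $2\cdot 2^k$ can correspond to an arbitrarily larger displacement in actual time $t$, and your telescoping identity breaks. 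You flag this as the ``main obstacle'' but then misidentify it as a question about internal free time (which is irrelevant, since free time is part of ${\cal T}(k)$) and assert without justification that no other-class block is ever crossed ``at extra cost.'' The correct accounting does pay for these crossings, but charges them to the other classes: the real displacement of a mass unit is $2\cdot2^k$ plus the Lebesgue measure of non-${\cal T}(k)$ time it crosses; each non-${\cal T}(k)$ point $\tau$ is crossed only by class-$k$ mass lying within ${\cal T}(k)$-distance $2\cdot2^k$ of $\tau$, hence by at most $2\cdot2^k$ units of mass; integrating, the total extra displacement is at most $2\cdot2^k$ times the total processing time of the other classes. Dividing by $\widetilde{p}_j = 2^k$ gives an increase of at most $2{\cal P}_k(x',y',f') + 2{\cal P}_{\neq k}(x',y',f') = 2{\cal P}(x',y',f')$ --- note that the bound genuinely needs the \emph{total} processing time, not just the class-$k$ part your argument produces; the two happen to give the same final constant only because you then relax ${\cal P}_k$ to ${\cal P}$ anyway. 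With this charging argument in place of your displacement identity, the rest of your proof (and the arithmetic $2+4=6$) is fine.
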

\begin{proof}
By combining \lref[Claims]{clm:iter-k-bound} and \ref{clm:no-increase-in-p}, we can bound the fractional cost of class-$k$ just after the swapping phase by
\begin{MyEqn}
\flow_k(x',y',f') & \leq & 4\cdot \flow_k(x^*,y^*,f^*) + \int_{0}^{T} \sum_{j \in {\cal C}_{k}} \charge(t,j) dt \\
& \leq & 4\cdot \flow_k(x^*,y^*,f^*) + 2 {\cal P}(x^*, y^*, f^*) \ .
\end{MyEqn}%
In addition, arguments nearly identical to those of Garg and Kumar~\cite[Clm.\ 4.3]{GargK07} show that the cost increment due to the shifting phase is at most $2 {\cal P}(x',y',f') \leq 4 {\cal P}(x^*, y^*, f^*)$.
\end{proof}

Part {(ii)} of \lref[Lemma]{lem:stage1cost} is now derived by
summing the inequality stated in \lref[Claim]{clm:iter-k-bound-shifting}
over all classes:
\begin{MyEqn}
\flow(x',y',f') & = & \sum_{k = 1}^{ k^* } \flow_k(x',y',f') \\
& \leq & 4 \sum_{k = 1}^{ k^* } \flow_k(x^*,y^*,f^*) + 6 k^* {\cal P}(x^*, y^*, f^*) \\
& = & 4 \cdot \flow(x^*, y^*, f^*) + 6 k^* {\cal P}(x^*, y^*, f^*) \ .
\end{MyEqn}%

The third part of \lref[Lemma]{lem:stage1cost} follows from the
next lemma.
\begin{lemma} \label{lem:flow_time_cost}
  The sum of flow times of all integrally scheduled jobs is at most $2
  \cdot \flow(x',y',f') + k^* {\cal P}(x',y',f')$.
\end{lemma}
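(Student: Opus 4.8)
The plan is to turn the fractional solution into an actual schedule of the integrally scheduled jobs by letting each such job $j$ occupy exactly the time slots in which $(x',y',f')$ assigns it positive fractions; this is feasible since the integrally scheduled jobs jointly use at most one unit of processing per slot. Write $\sigma_j$ and $\ell_j$ for the first and last slots in which $j$ is scheduled, let $C_j$ be $j$'s completion time (the right endpoint of slot $\ell_j$, or the exact instant within it), so that $F_j = C_j - r_j$, and call $[\sigma_j, C_j)$ the \emph{window} of $j$. Expressing $F_j$ as the length of time for which $j$ is alive and integrating over $t$, $\sum_j F_j = \int_0^T N(t)\,dt$, where $N(t)$ is the number of integrally scheduled jobs released by time $t$ and not yet completed; decompose $N(t) = \sum_{k=1}^{k^*} N_k(t)$ according to the class of the job.

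The crux is the per-class inequality $N_k(t) \le \rho_k(t) + \mathbf{1}[t \in W_k]$, where $\rho_k(t) = \sum_{j \in {\cal C}_k,\, r_j \le t}\, \frac{1}{p_j}\sum_{t' \ge t} x'_{jt'}$ is the fractional number of class-$k$ jobs still remaining at time $t$, and $W_k = \bigcup_{j \in {\cal C}_k}[\sigma_j, C_j)$ is the union of all class-$k$ windows. To prove it, fix $t$ and a class-$k$ job $j$ alive at $t$: either all of $j$'s mass sits in slots $\ge t$, in which case $j$ contributes exactly $1$ to $\rho_k(t)$; or $j$ has mass on both sides of $t$, i.e.\ it \emph{straddles} $t$. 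Since $(x',y',f')$ is non-alternating (\lref[Lemma]{lem:non-alt}, and one verifies that the Stage~I manipulations keep it so), no two class-$k$ jobs can straddle the same $t$, and for the at most one straddler we trivially have $t \in W_k$; hence the alive class-$k$ jobs are covered by $\rho_k(t)$ plus at most one extra unit, charged to $W_k$.

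It then remains to aggregate the two terms. For the $\rho$-term, using $\sum_{t'}x'_{jt'} = p_j$ for integrally scheduled $j$, constraint~(1), and $\widetilde{p}_j \le 2 p_j$, one gets $\int_0^T \rho_k(t)\,dt = \sum_{j \in {\cal C}_k}\frac{1}{p_j}\sum_{t'} x'_{jt'}(t' - r_j + 1) \le 2\sum_{j \in {\cal C}_k} f_j$, the factor between $1/p_j$ and $1/\widetilde{p}_j$ and the additive correction being absorbed by the $x'_{jt'}/2$ terms of $f_j$ together with $p_j \ge 1$; summing over $k$ gives $\int_0^T \rho(t)\,dt \le 2\,\flow(x',y',f')$. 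For the $W_k$-term, $|W_k| \le \sum_{j \in {\cal C}_k}(C_j - \sigma_j)$ by a union bound, and since the solution is packed there is no free time inside a window, so $C_j - \sigma_j$ equals $p_j$ plus the amount of processing of other jobs lying inside $[\sigma_j, C_j)$; non-alternating forces this ``foreign'' processing to belong to other classes, and because the class-$k$ windows are pairwise (essentially) disjoint, each unit of processing is charged at most once per class other than its own. Hence $\sum_k |W_k| \le \sum_j p_j + (k^*-1)\,{\cal P}(x',y',f') \le k^*\,{\cal P}(x',y',f')$, using $\sum_j p_j \le {\cal P}(x',y',f')$. Combining the two bounds yields $\sum_j F_j \le 2\,\flow(x',y',f') + k^*\,{\cal P}(x',y',f')$.

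I expect the main obstacle to be the per-class inequality, and behind it the structural claim that the solution passed to this lemma is still non-alternating and packed: one first has to check that the swapping and shifting phases of Stage~I maintain these two properties (or restore them with a cheap clean-up). A secondary technicality is the boundary bookkeeping — precisely when a job is ``alive'', whether two same-class windows may meet at a slot boundary, and how $C_j$ is placed within its final slot — details that must be handled carefully to land the clean constants $2$ and $k^*$ rather than merely $O(1)$ and $O(k^*)$.
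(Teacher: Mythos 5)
Your proof is correct and is essentially the paper's argument in unpacked form: your ``at most one class-$k$ straddler at any time'' is exactly the paper's observation that at most one class-$k$ job is preempted at any instant, your $\rho_k$-integral reproduces the cited Garg--Kumar bound $F_j - \mathrm{preempt}_j \leq 2f'_j$ (via $\widetilde{p}_j \leq 2p_j$), and the charge of each busy instant once per class yields the same $k^*\,{\cal P}$ term. The caveats you flag (that Stage~I preserves the non-alternating and packed properties) are assumed at the same level of detail in the paper's own proof.
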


\begin{proof}
  Consider some fully scheduled job $j$. It is easy to verify (see
  \cite{GargK06} for a proof) that the quantity $2F_{j}'$ is at least
  the actual flow time of $j$ minus the amount of time for which $j$ has
  been preempted (which cannot include free time). In addition, our
  algorithm ensures that, at any point in time, at most one class-$k$
  job may be preempted. Hence, by summing over all fully scheduled class
  $k$ jobs, it follows that their sum of flow times is bounded by $2
  \sum_{j \in {\cal C}_{k}} F_{j}' + {\cal P}(x',y',f') = 2 \cdot
  \flow_k(x',y',f') +{\cal P}(x',y',f') $. The desired result is
  obtained by summing over all classes.
\end{proof}

\subsection{Parallel Machines Algorithm}
\label{sec:parallel-machines-algo}

For completeness, we first provide the natural extension of the flow time LP for identical machines.
Then, we present the algorithm in more detail.

\[ \begin{array}{lll}
\mbox{minimize} & { \sum_{j = 1}^{n} f_j} \\
\mbox{subject to} & (1) \quad { \displaystyle f_{j} = \sum_{t=0}^{T} \sum_{i=1}^{m} \left( \frac{x_{ijt}}{\widetilde{p}_{j}} \left(t + \frac{1}{2} - r_{j}\right) + \frac{x_{ijt}}{2} \right)} \qquad & \forall \, j \\
& (2) \quad { p_j y_j = \sum_{t=0}^{T} \sum_{i=1}^{m} x_{ijt}} & \forall \, j \\
& (3) \quad { \sum_{j = 1}^n x_{ijt} \leq 1} & \forall \, t, i \\
& (4) \quad { \sum_{j = 1}^n y_{j} \geq \Pi} \\
& (5) \quad x_{ijt} = 0 & \forall \, i, j, t : t < r_j \\
& (6) \quad x_{ijt} \geq 0, \, 0 \leq y_j \leq 1 & \forall \, i, j, t
\end{array} \]%

\medskip \noindent {\bf The Algorithm:}

\medskip \noindent {\bf Stage I:} Let $(x^*,y^*,f^*)$ be an optimal LP solution.
We first rearrange the jobs following the
procedure given in~\cite{GargK06} to make each job's schedule
non-migratory whilst preserving the fraction to which it has been scheduled.
Let $(\widehat{x},\widehat{y},\widehat{f})$ be the
updated solution. A proof identical to Lemma $3.3$ in \cite{GargK06}
 shows that $\flow(\widehat{x},\widehat{y},\widehat{f})
\leq \flow(x^*,y^*,f^*) + O(\log P) {\cal P}(x^*,y^*,f^*)$ and that ${\cal
  P}(\widehat{x}, \widehat{y}, \widehat{f}) \leq {\cal P}(x^*,y^*,f^*)$.

\medskip \noindent {\bf Stage II:} Let $\flow^{i}(x,y,f)$ of an LP solution $(x,y,f)$ be  $\sum_{j} \sum_{t=0}^{T} \left( \frac{x_{ijt}}{\widetilde{p}_{j}} \left(t + \frac{1}{2} - r_{j}\right) + \frac{x_{ijt}}{2} \right)$, and ${\cal P}^{i}(x,y,f) = \sum_{j} \sum_{t=0}^{T} x_{ijt}$. For each machine, run Stage~I of the single machine algorithm: let $(x',y',f')$ be the (possibly infeasible) solution obtained. From the analysis of the single machine case, we have $\flow^{i}(x',y',f') \leq 2 \cdot \flow^{i}(\widehat{x},\widehat{y},\widehat{f}) + 6(\log P) {\cal P}^{i}(\widehat{x},\widehat{y},\widehat{f})$, and ${\cal P}^{i}(x',y',f') \leq 2{\cal P}^{i}(\widehat{x},\widehat{y},\widehat{f})$. Further, for each $k$, the number of class-$k$ jobs completely scheduled  on any machine $i$ in $(x', y', f')$ is at least the fractional number of class-$k$ jobs scheduled on machine $i$ by $(x^*, y^*, f^*)$ (up to an additive $2$ jobs). 

\medskip \noindent {\bf Stage III:} We now handle the infeasibility of $(x',y',f')$: each class may still have up to $2$ fractionally scheduled jobs on each machine. We set $(\widetilde{x},\widetilde{y},\widetilde{f}) := (x',y',f')$, and make changes to $(\widetilde{x},\widetilde{y},\widetilde{f})$. Like in the single machine case, we \emph{swap} jobs to make them integrally scheduled. For each $k$,

\begin{itemize}
   \item[{\bf IIIa:}] Advance all class-$k$ jobs as much as possible (within time occupied by class-$k$ jobs and free time) such that there is no free time between when a job is released and when it is scheduled in $(\widetilde{x},\widetilde{y},\widetilde{f})$.

    \item[{\bf IIIb:}] Repeat the following until there is at most $1$ fractionally scheduled class-$k$ job in $(\widetilde{x},\widetilde{y},\widetilde{f})$:

         ~~ Let $j_{1}$ be the fractionally scheduled class-$k$ job with largest processing time and $j_{2}$ be the one with smallest processing time. Keep adding $j_{2}$ to the end of the schedule on the machine in which it has currently been scheduled, while deleting an equal fraction from $j_{1}$ until either (i) $j_{2}$ is fully scheduled, or (ii) $j_{1}$ has been completely deleted.

\end{itemize}

\medskip \noindent {\bf Analysis.}
Suppose the algorithm is replacing $j_{1}$ (scheduled on machine $i_{1}$) with $j_{2}$ (scheduled on machine $i_{2}$). Instead of performing the replacement in one shot, we could also do it in a time slot by time slot basis.
Let the last time interval in which $j_{1}$ is scheduled be $[t_{1},t_{1}+1)$, and let $[t_{2}, t_{2}+1)$ be first interval that has free time, after the fractional completion of $j_{2}$. The algorithm deletes a fraction $\Delta = \min(\widetilde{x}_{i_{1}j_{1}t}/p_{j_{1}}, 1 - \widetilde{y}_{j_{2}}, (1 - \sum_{j} \widetilde{x}_{i_{2}jt_{2}})/p_{j_{2}} )$  of $j_{1}$ and schedules $\Delta$ fraction of $j_2$ in the free time in $[t_{2},t_{2}+1)$ on machine $i_{2}$. Intuitively, $\Delta$ is the minimum of the fraction of $j_{1}$ that is scheduled in $[t_{1},t_{1}+1)$, the fraction of $j_{2}$ needed to make it fully scheduled, and the fraction of $j_{2}$ that can be scheduled in the free time in $[t_{2},t_{2}+1)$.

Observe that because $p_{j_2} \leq p_{j_1}$, the additional cost incurred by the modified LP solution is at most
\[
 (t_{2} - r_{j_{2}} + \frac{1}{2}) \Delta + \frac{\Delta p_{j_{2}}}{2} - (t_{1} - r_{j_{1}} + \frac{1}{2}) \Delta + \frac{\Delta p_{j_{1}}}{2}  \leq  (t_{2} - r_{j_2}) \Delta_{t}
\]
Also notice that every point in the interval $(r_{j_2}, t_2)$ is not free time, by the way $t_2$ was chosen.

We then employ a charging scheme where each point $t$ on the time interval $(r_{j_2}, t_{2})$ pays an additional charge of $\Delta$ towards job $j_{2}$. The following properties are then true at the end of this stage:
\begin{OneLiners}
\item [(a)] Each point pays at most $2$ on behalf of jobs belonging to a class on each machine. This is because there can be at most $2$ fractional jobs per class on each machine in the solution $(x', y', f')$ and a point in time pays only for a fractional job which becomes completely scheduled.
\item [(b)] Any point which pays on behalf of a job cannot be ``free time''. This follows because we are guaranteed that there is no free time in any charging interval.
\item [(c)] The total processing time in the LP solution does not increase: ${\cal P}(\widetilde{x},\widetilde{y},\widetilde{f}) \leq {\cal P}(x',y',f')$. This holds because we always replace a fraction of a larger job with an equal fraction of a smaller job.
\end{OneLiners}

Therefore, at the end of this stage, the cost of the updated LP solution $(\widetilde{x},\widetilde{y},\widetilde{f})$ is bounded by
  \[ \flow(\widetilde{x},\widetilde{y},\widetilde{f}) \leq \flow(x',y',f') + 2 (\log P){\cal P}(x',y',f') \]
   and the total processing time by
  \[ {\cal P}(\widetilde{x},\widetilde{y},\widetilde{f}) \leq {\cal P}(x',y',f') \ . \]

 \medskip \noindent {\bf Stage IV:} After Stage III, we might still have at most one fractional job per
  class. To handle this, for each class-$k$, we completely schedule the
  last remaining fractionally scheduled class-$k$ whenever possible on
  the machine in which it has been fractionally scheduled. The analysis
  for this step, and the one for bounding the actual sum of flow times of the integrally scheduled jobs
  is analogous to the one for the single machine case. This proves Theorem   \ref{thm:parallel-flowtime}.



\section{Average Weighted Completion Time} \label{sec:app-wjcj}

\subsection{Weighted Completion Time with $\bs{K}$ Profit Constraints} \label{sec:proofs-wjcj-ext}

We now consider an extension of the problem studied in \lref[Section]{sec:wsoct}: one in which there are $K$ different profit requirements of the form $\sum_{j} {\pi}^{k}_{j} y_{j} \geq \Pi^k$ for $1 \leq k \leq K$. We highlight the changes to be made to our algorithm and then present its analysis.

\medskip \noindent {\bf Necessary modifications:}
\begin{OneLiners}
\item [(a)] KC constraints for each profit requirement are written down in the LP. Analogous to the single profit requirement case, we define $\pi^{k,{\cal A}}_{j} = \min \{ \pi^{k}_{j}, \Pi^k - \Pi^{k}({\cal A}) \}$ for each subset of jobs ${\cal A}$ and $1 \leq k \leq K$.

\item [(b)] We set ${\cal A}^* = \{j : \widehat{y}_{j} \geq 1/\beta_K \}$. That is, instead of rounding up each $\widehat{y}_{j}$ by a factor of $2$, we round these variables by a factor of $\beta_K$, a parameter whose value will be determined later.

\item [(c)] For jobs in ${\cal A}^*$, we mark each machine/time pair $\tau_j = (i,t)$ with probability $\widehat{x}_{ ijt } / ( p_{ ij } \widehat{y}_j )$. For jobs not in ${\cal A}^*$, we mark each machine/time pair $\tau_j = (i,t)$ with probability $\beta_K \widehat{x}_{ ijt } /  p_{ ij }$. Essentially, we pick jobs in ${\cal A}^*$ with probability $1$, and every other job with probability $\beta_K \widehat{y}_{j}$.
\end{OneLiners}

\medskip \noindent {\bf Analysis.} The proof that the expected cost is
within a factor of $O(\beta_K)$ of optimal is nearly identical to that of the
single profit requirement case. Therefore, we would like to fix $\beta_K$ such that
all profit constraints are simultaneously satisfied with constant`
probability. To this end, consider one such profit requirement $\Pi^k$.
We upper bound the probability that the collection of jobs picked does not satisfy
this requirement. Consider the knapsack cover inequality for ${\cal A}^*$ with respect to requirement $k$, stating that $\sum_{j \notin {\cal A}^*}
  \pi^{k,{{\cal A}^*}}_{j} \widehat{y}_{j} \geq \Pi^k - \Pi^{k}( {\cal A}^* )$.
The total profit collected from
  jobs not in ${\cal A}^*$ can be lower bounded by $Z = \sum_{j \notin {\cal A}^*}
  \pi^{k,{{\cal A}^*}}_{j} Z_{j}$; here, each $Z_j$ is a random variable
  indicating whether job $j$ is picked. To provide an upper bound on the probability that $Z$ falls below $\Pi^k
  - \Pi^{k}( {\cal A}^*)$, note that
{\small
\[ \ts \ex{Z} = \ex{\sum_{j \notin {\cal A}^*} \pi^{k,{{\cal A}^*}}_{j} Z_{j}} = \beta_K \sum_{j \notin {\cal A}^*} \pi^{k,{{\cal A}^*}}_{j} \widehat{y}_{j} \geq \beta_K(\Pi^k - \Pi^{k}( {\cal A}^*)) \ . \]%
}
Consequently, let us define $\alpha_j = \pi^{k,{{\cal A}^*}}_{j} / (\Pi^k - \Pi^{k}( {\cal A}^*))$. Since our algorithm \emph{independently} picks each job not in ${\cal A}^*$ with probability $\beta_K \widehat{y}_{j}$, we have
{\small
\begin{MyEqn}
\pr{ Z \leq \Pi^k - \Pi^{k}( {\cal A}^*) } & = & \ts \pr{ \sum_{j \notin {\cal A}^*} \frac{ \pi^{k,{{\cal A}^*}}_{j} }{ \Pi^k - \Pi^{k}( {\cal A}^*) } Z_{j} \leq 1 } \leq \pr{ \sum_{j \notin {\cal A}^*} \alpha_j Z_{j} \leq \frac{ \ex{ \sum_{j \notin {\cal A}^*} \alpha_j Z_{j} } }{ \beta_K } } \\
& \leq & \ts \exp \left( -\frac{{(\beta_K -1 )}^{2}}{2\beta_K} \right)  \ ,
\end{MyEqn}%
}
where the first and third inequalities hold since $\expar{ \sum_{j
    \notin {\cal A}^*} \alpha_j Z_{j} } \geq \beta_K$, and the second
inequality follows from the Chernoff-type bound on the lower tail of the
sum of independent $[0,1]$ r.v.s (see, e.g., \cite[Thm.\ 3.5]{chernoff}).

We then fix $\beta_K$ such that $\exp( - (\beta_K -1)^2/2\beta_K)$ is at most $1/10K$ (it
suffices for $\beta_K$ to be $O(\log K)$ for this to hold). Consequently, by the union bound, the probability that \emph{some} profit constraint will not be satisfied is at most $1/10$. It follows that our randomized algorithm computes a schedule whose
expected cost is $O(\beta_K) \Opt$, and all the profit
constraints are met with $\Omega(1)$ probability.




\end{document}